\documentclass[twocolumn]{IEEEtran}
\usepackage{amsmath}
\usepackage{latexsym}
\usepackage{amssymb}
\usepackage{bm}
  \usepackage[pdftex]{graphicx}
\newtheorem{thm}    {Theorem}
\newtheorem{lem}     {Lemma}
\newtheorem{corollary}  {Corollary}
\newtheorem{proposition}        {Proposition}
\newtheorem{define}     {Definition}

\def\CON{\mathop{\rm CON}}
\def\BIN{\mathop{\rm BIN}}
\def\ACC{\mathop{\rm ACC}}
\def\REJ{\mathop{\rm REJ}}

\def\Supp{\mathop{\rm Supp}}
\def\RR{\mathbb R}
\def\bW{\bm{W}}
\def\bw{\bm{w}}
\def\bV{\bm{V}}

\def\argmax{\mathop{\rm argmax}}
\def\uni{\mathop{\rm uni}}

\def\argmin{\mathop{\rm argmin}}

\def\im{\mathop{\rm Im}}

\def\cX{{\cal X}}

\def\rE{{\rm E}}
\def\Pr{{\rm Pr}}

\newcommand{\qed}{\hfill \IEEEQED}


\newcommand{\sU}{\mathsf{U}}
\newcommand{\sK}{\mathsf{K}}
\newcommand{\sk}{\mathsf{k}}
\newcommand{\sm}{\mathsf{m}}
\newcommand{\bF}{\mathbb{F}}
\newcommand{\sM}{\mathsf{M}}
\newcommand{\sL}{\mathsf{L}}

\def\QED{\mbox{\rule[0pt]{1.5ex}{1.5ex}}}

\def\endproof{\hspace*{\fill}~\QED\par\endtrivlist\unskip}
 \newenvironment{proofof}[1]{\vspace*{5mm} \par \noindent
         \quad{\it Proof of #1:\hspace{2mm}}}{\qed
}

\def\Label#1{\label{#1}\ [\ #1\ ]\ }
\def\Label{\label}

\begin{document}
\title{Secure list decoding and its application to bit-string commitment
}
\author{
Masahito~Hayashi~\IEEEmembership{Fellow,~IEEE}\thanks{Masahito Hayashi is with 
Shenzhen Institute for Quantum Science and Engineering, Southern University of Science and Technology,
Shenzhen, 518055, China,
Guangdong Provincial Key Laboratory of Quantum Science and Engineering,
Southern University of Science and Technology, Shenzhen 518055, China,
and
Graduate School of Mathematics, Nagoya University, Nagoya, 464-8602, Japan.
(e-mail:hayashi@sustech.edu.cn, masahito@math.nagoya-u.ac.jp)}}
\date{}
\maketitle
\begin{abstract}
We propose a new concept of secure list decoding, which is related to bit-string commitment.
While the conventional list decoding requires that the list contains the transmitted message,
secure list decoding requires the following additional security conditions
to work as a modification of bit-string commitment.
The first additional security condition is
the receiver's uncertainty for the transmitted message, which is stronger than
the impossibility of the correct decoding,
even though the transmitted message is contained in the list.
The other additional security condition is the impossibility for the sender to estimate another element of the decoded list except for the transmitted message.
The first condition is evaluated by the equivocation rate.
The asymptotic property is evaluated by three parameters,
the rates of the message and list sizes, and the equivocation rate.
We derive the capacity region of this problem.
We show that the combination of hash function and secure list decoding
yields the conventional bit-string commitment.
Our results hold even when the input and output systems are general probability spaces including continuous systems.
When the input system is a general probability space,
we formulate the abilities of the honest sender and the dishonest sender in a different way.
\end{abstract}
\begin{IEEEkeywords}
list decoding; security condition; capacity region; bit-string commitment; general probability space
\end{IEEEkeywords}

\section{Introduction}\Label{S1}
\IEEEPARstart{R}{elaxing} the condition of the decoding process,
Elias \cite{Elias} and Wozencraft \cite{Wo} independently
introduced list decoding as the method to allow more than one element
as candidates of the message sent by the encoder at the decoder.
When one of these elements coincides with the true message, the decoding is regarded as successful.
The paper \cite{Guruswami}
discussed its algorithmic aspect.
In this formulation, Nishimura \cite{Nish} obtained 
the channel capacity by showing its strong converse part\footnote{The strong converse part is the argument that the average error goes to $1$ if the code has a transmission rate over the capacity.}.
That is, he showed that the transmission rate is less than the conventional capacity plus 
the rate of the list size, i.e., the number of list elements.
Then, the reliable transmission rate does not increase
even when list decoding is allowed
if the list size does not increase exponentially.
In the non-exponential case, these results were generalized by Ahlswede \cite{Ah}.
Further, the paper \cite{Haya} showed that 
the upper bound of capacity by Nishimura can be attained even if the list size increases exponentially.
When the number of lists is $\sL$,
the capacity can be achieved by choosing the same codeword for $\sL$ distinct messages.

However, the merit of the increase in the list size
was not discussed sufficiently.
To get a merit of list coding, we need a code construction that is essentially different from conventional coding.
Since the above capacity-achieving code construction does not have an essential difference from the conventional coding,
we need to rule out the above type of construction of list coding.
That is, to extract a merit of list decoding, we need additional parameters to characterize
the difference from the conventional code construction, which can be expected to rule out such a trivial construction.

To seek a merit of list decoding, we focus on bit commitment, which is a fundamental task in information security.
It is known that bit commitment can be realized when a noisy channel is available \cite{CC}.
Winter et al \cite{BC1,BC2} studied bit-string commitment, the bit string version of bit commitment
when an unlimited bidirectional noiseless channel is available between Alice and Bob, and
a discrete memoryless noisy channel $W :{\cal X}\to {\cal Y}$ 
from Alice to Bob, which may be used $n$ times.
They derived the asymptotically optimal rate as $n$ goes to infinity, which is called the commitment capacity.
Since their result is based on Shannon theory,
the tightness of their result shows the strong advantage of Shannon theoretic approach to 
information theoretic security.
This result was extended to the formulation with multiplex coding \cite{BC3}.
However, their optimal method has the following problems;
\begin{description}
\item[(P1)]
When the number of use of the channel is limited,
it is impossible to send a message with a larger rate than the commitment capacity.

\item[(P2)]
Their protocol assumes that the output system ${\cal Y}$ is a finite set
because they employ the method of type.
However, when a noisy channel is realized by wireless communication, like
an additive white Gaussian noise (AWGN) channel,
the output system ${\cal Y}$ is a continuous set.
\end{description}

To resolve the problem (P1), 
it is natural to relax the condition for bit-string commitment.
Winter et al \cite{BC1,BC2} imposed strong security for the concealing condition.
However, studies in information theory, in particular, papers for wire-tap channel,
often employs equivocation rate instead of strong security.
In this paper, to relax the condition of bit-string commitment by using equivocation rate,
we consider the following simple protocol by employing list decoding,
where Alice wants to send her message $M \in \{1, \ldots,\sM\}$ to Bob.
\begin{description}
\item[(i)] (Commit Phase) Alice sends her message $M$ to Bob via a noisy channel. 
Bob outputs $\sL$ messages as the list. 
The list is required to contain the message $M$. 

\item[(ii)] (Reveal Phase) 
Alice sends her message $M$ to Bob via a noiseless channel.
If $M$ is contained in Bob's decoded list, Bob accepts it.
Otherwise, Bob rejects it.
\end{description}
In order that the protocol with phase (i) and (ii) works for bit-string commitment,
the following requirements need to be satisfied.
\begin{description}
\item[(a)] 
The message $M$ needs to be one of $\sL$ messages $M_1, \ldots, M_{\sL} $ output by Bob. 
\item[(b)] 
Bob cannot identify the message $M$ at the phase (i). 
\item[(c)] 
Alice cannot find another element among 
$\sL$ messages $M_1, \ldots, M_{\sL} $ output by Bob.
\end{description}
The requirement (a) is the condition for the requirement for the conventional list decoding
while the requirements (b) and (c) 
correspond to the concealing condition and the binding condition, respectively and
have not been considered in the conventional list decoding.

In this paper, we propose a new concept of secure list decoding
by adding the requirements (b) and (c).
One typical condition for (b) is the conventional equivocation rate based on
the conditional entropy.
In this paper, we also consider the equivocation rate based on
the conditional R\'{e}nyi entropy similar to the paper \cite{YH,YT}\footnote{
While the conference paper \cite{MH17-1} discussed a similar modification of list decoding,
it did not consider the equivocation rate.
In this sense, the content of this paper is different from that of \cite{MH17-1}.}.
Hence, our code can be evaluated by three parameters.
The first one is the rate of the message size, the second one is the rate of list size,
and the third one is the equivocation rate.
Using three parameters, we define the capacity region. 
In addition, our method works with a general output system including a continuous output system, which resolves the problem (P2) while an extension to such a general case was mentioned as an open problem in \cite{BC2}. 

In the second step, we extend our result to the case with a general input system including a continuous input system.
We need to be careful in formulating the problem setting in this case.
If Alice is allowed to access infinitely many input elements in a continuous input system,
the conditional entropy rate $H(X|Y)$ might be infinity.
Further, it is not realistic for Alice to access infinitely many input elements.
because a realistic modulator converts messages to finite constellation points in 
a continuous input system in wireless communication.
Therefore, we need to separately formulate honest Alice and dishonest Alice as follows.
The honest Alice is assumed to access only a fixed finite subset of a general input system.
But, the dishonest Alice is assumed to access all elements of the general input system.
Under this problem setting, we derived the capacity region.

In the third step, 
we propose a conversion method to make a protocol for bit-string commitment with strong security as 
the concealing condition (b)
by converting a secure list decoding code.
In this converted protocol, 
the security parameter for the concealing condition (b) is evaluated by variational distance in the same way as 
Winter et al \cite{BC1,BC2}.
In particular, this converted protocol has strong security even with continuous input and output systems,
where the honest Alice and the dishonest Alice has different accessibility to the continuous input system.
In this converted protocol, 
the rate of message size of the bit-string commitment is the same as the
equivocation rate based on the conditional entropy of the original secure list decoding code,
which shows the merit of the equivocation rate of a secure list decoding code.
In fact, the bit-string commitment with the continuous case was treated as an open problem in the preceding studies \cite{BC2}.
In addition, due to the above second step, our protocol for bit-string commitment
works even when the accessible alphabet by the honest Alice is different from 
the accessible alphabet by the dishonest Alice.

This paper is structured as follows. 
Section \ref{SCom} reviews the existing results for bit-string commitment. 
Section \ref{S51T} explains how we mathematically handle a general probability space as input and output systems including continuous systems.
Section \ref{S2} gives the formulation of secure list decoding.
Section \ref{S51II} introduces information quantities used in our main results.
Section \ref{S51} states our results for secure list decoding with a discrete input system.
Section \ref{S7} explains our formulation of secure list decoding with general input system
and states our results under this setting.
Section \ref{SecBS} presents the application of secure list decoding to the 
bit-string commitment with strong security.
Section \ref{S-C} shows the converse part, and
Section \ref{S-K} proves the direct part.

\section{Review of existing results for bit-string commitment}\Label{SCom}
Before stating our result,
we review existing results for bit-string commitment \cite{BC1,BC2}.
Throughout this paper, the base of the logarithm is chosen to be $2$.
Also, we employ the standard notation for
probability theory, in which, upper case letters denote
random variables and the corresponding lower case letters
denote their realizations.
Bit-string commitment has two security parameters, 
the concealing parameter $\delta_{\CON}>0$ and the binding parameter $\delta_{\BIN}>0$.
We denote the message revealed by Alice in Reveal Phase by $\hat{M}$.
Let $Z_1$ be all information that Bob obtains during Commit Phase,
and $Z_2$ be all information that Bob obtains during Reveal Phase except for $\hat{M}$. 
Here, $Z_1$ contains the information generated by Bob.
After Reveal Phase, Bob makes his decision, $\ACC$ (accept) or $\REJ$ (rejection).
For this decision, Bob has a function 
$\beta(Z_1,Z_2,\hat{M})$ that takes the value $\ACC$ or $\REJ$.
When Alice intends to send message $M$ in ${\cal M}$ to Bob,
the concealing and binding conditions are given as follows.
\begin{description}
\item[(CON)]
~Concealing condition with $\delta_{\CON}>0$.
When Alice is honest,
the inequality
\begin{align}
\frac{1}{2}\|P_{Z_1|M=m}-P_{Z_1|M=m'}\|_1\le \delta_{\CON}
\end{align}
holds for $m\neq m' \in {\cal M}$.
\item[(BIN)]
~Binding condition with $\delta_{\BIN}>0$.
We assume that the message $M$ is subject to the uniform distribution on ${\cal M}$.
When Alice and Bob are honest, 
\begin{align}
\Pr ( \beta(Z_1,Z_2,{M}) = \ACC)
\ge 1-\delta_{\BIN}.\Label{BIN1}
\end{align}
When Bob is honest,
the inequality 
\begin{align}
&\Pr ( \beta(Z_1,z_2,m) = \ACC,
\beta(Z_1,z_2',m') = \ACC)\nonumber \\
\le & \delta_{\BIN}\Label{BIN2}
\end{align}
holds for $m\neq m' \in {\cal M}$ and $z_2,z_2'$.
\end{description}

When the protocol with (i) and (ii) is used for bit-string commitment,
the conditions (a) and (c) guarantee \eqref{BIN1} and \eqref{BIN2} of (BIN), respectively,
and the condition (b) guarantees (CON).

Now, we denote a noisy channel $\bW$ from a finite set ${\cal X}$ to a finite set ${\cal Y}$
by using a set $\{W_x\}_{x \in {\cal X}}$ of distributions on ${\cal Y}$.
Winter et al \cite{BC1,BC2} considered the situation that 
Alice and Bob use the channel $\bW$ at $n$ times and the noiseless channel can be used freely.
Winter et al \cite{BC1,BC2} defined the commitment capacity as the maximum rate when the code satisfies 
Concealing condition with $\delta_{\CON,n}$ and 
Binding condition with $\delta_{\BIN,n}$ under the condition that the parameters $\delta_{\CON,n}$ and 
$\delta_{\BIN,n}$ approach to zero as $n$ goes to infinity.
They derived the commitment capacity under 
the following conditions for the channel $\bW$;
\begin{description}
\item[(W1)]
${\cal X}$ and ${\cal Y}$ are finite sets.
\item[(W2)]
For any $x \in {\cal X}$, the relation
\begin{align}
\min_{x \in {\cal X}}
\min_{P \in {\cal P}({\cal X}\setminus \{x\})}
D\bigg( \sum_{x' \in {\cal X}\setminus \{x\}}P(x')W_{x'} \bigg\|W_x\bigg) 
>0 
\end{align}
holds, where $D(P\|Q)$ is the Kullback-Leibler divergence between two distributions $P$ and $Q$.
This condition is called the non-redundant condition.
\end{description}

To state their result, we introduce a notation;
Given a joint distribution $P_{X,Y}$ on a discrete set ${\cal X}\times {\cal Y}$,
we denote the conditional distribution $P_{X|Y=y}$ under the condition that $Y=y$.
Then, the conditional entropy $H(X|Y)$ is given as
\begin{align}
H(X|Y)_{P_{X,Y}}&:=\sum_{y \in {\cal Y}}P_Y(y) H(P_{X|Y=y}),\\
H( P_{X|Y=y}) &:=
-\sum_{x \in {\cal X}} P_{X|Y=y}(x) \log P_{X|Y=y}(x).
\end{align}
When the joint distribution $P_{X,Y}$ is given as
$P_{X,Y}(x,y)=W_x(y)P(x)$ by using a distribution $P \in {\cal P}({\cal X})$,
we denote 
the conditional entropy $H(X|Y)_{P_{X,Y}}$ by $H(X|Y)_P$.
They showed the following proposition;
\begin{proposition}[\protect{\cite[Theorem 2]{BC1}, \cite{BC2}}]\Label{Pro1}
When the channel $\bW$ satisfies Conditions (W1) and (W2),
the commitment capacity is given as 
\begin{align}
\sup_{P \in {\cal P}({\cal X})} H(X|Y)_P.
\Label{WIN}
\end{align}\hfill $\square$
\end{proposition}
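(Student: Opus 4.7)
The plan is to establish the two inequalities of the capacity formula separately. For achievability I would fix an input distribution $P \in {\cal P}({\cal X})$ attaining the supremum up to $\epsilon$ and construct a random \emph{cloud} code that meets both (CON) and (BIN) at rate $H(X|Y)_P$. For the converse I would derive the upper bound $R \le H(X|Y)_P$ via an information-theoretic chain starting from the concealing condition and ending at the single-letter bound through concavity of $H(X|Y)$ in the input distribution.

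For achievability, draw a two-layer random codebook with $\sM_n \approx 2^{n(H(X|Y)_P-\epsilon)}$ messages and, for each $m$, a sub-code of $\sK_n \approx 2^{n(I(X;Y)_P+\epsilon)}$ codewords $x^n(m,k)$ sampled i.i.d.\ from $P^n$. In the Commit Phase Alice draws $K$ uniformly from $\{1,\ldots,\sK_n\}$ and sends $x^n(M,K)$ through $\bW^n$; in the Reveal Phase she transmits $(M,K)$ and Bob accepts iff $x^n(M,K)$ is jointly typical with the received $y^n$. The concealing condition follows from a channel-resolvability estimate: because each cloud contains more than $2^{nI(X;Y)_P}$ i.i.d.\ codewords, the output distribution produced by the cloud of any fixed $m$ is close in variational distance to the $n$-fold product of $\sum_x P(x) W_x$, uniformly in $m$, with high probability over the code. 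The binding condition follows from a packing argument: since the total codebook size is $\sM_n\sK_n \approx 2^{nH(X)_P}$, with overwhelming probability no received $y^n$ is jointly typical with two codewords lying in different clouds, so a cheating Alice cannot open to $m'\neq m$; the non-redundancy hypothesis (W2) is needed here to rule out degenerate output blocks that a cheater could match from a symbol she never actually transmitted.

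For the converse, let $R$ be an achievable rate, $X^n$ Alice's channel inputs, and $Z_1$ Bob's view after Commit Phase (which contains $Y^n$). The concealing condition combined with Pinsker's inequality gives $H(M|Z_1) \ge nR - o(n)$ when $M$ is uniform on $\cM$. The binding condition is then leveraged to show $H(M|Z_1) \le H(X^n|Y^n) + o(n)$: any residual ambiguity in recovering $M$ from $Z_1$ beyond that forced by the channel noise could be exploited by a cheating Alice to open to an alternative message, contradicting \eqref{BIN2}. Combining these inequalities yields $nR \le H(X^n|Y^n) + o(n)$, after which the chain rule followed by concavity of $H(X|Y)_P$ in $P$ gives $H(X^n|Y^n) \le n\,H(X|Y)_{\bar P}$ for the time-averaged input distribution $\bar P := \frac{1}{n}\sum_{i=1}^n P_{X_i}$, hence $R \le \sup_P H(X|Y)_P$.

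The step I expect to be the most delicate is converting the worst-case combinatorial statement \eqref{BIN2} into the entropic inequality $H(M|Z_1) \lesssim H(X^n|Y^n)$. The standard device is to design a hypothetical cheating Alice whose success probability simultaneously lower bounds Bob's residual uncertainty about $M$ given her commit-phase transcript and is upper bounded by what \eqref{BIN2} allows; making this reduction rigorous, and showing that non-redundancy (W2) is precisely what prevents trivial cheating strategies that exploit unused input symbols, is where the main technical work will lie.
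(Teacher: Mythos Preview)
The paper does not prove Proposition~\ref{Pro1}: it is quoted verbatim as \cite[Theorem~2]{BC1},\cite{BC2} and used as a black box (e.g., to derive Proposition~\ref{Pro2}). So there is no ``paper's own proof'' to benchmark against; your sketch should be read as an attempted reconstruction of the Winter--Nascimento--Imai argument, and on that score the overall architecture (cloud code with resolvability for (CON), typicality test for (BIN); entropy chain for the converse) is the right one.

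That said, one step in your converse is mis-tooled. You write that ``the concealing condition combined with Pinsker's inequality gives $H(M|Z_1)\ge nR-o(n)$''. Pinsker goes the wrong way: it lower-bounds $D(\cdot\|\cdot)$ by variational distance, whereas here you need to \emph{upper}-bound $I(M;Z_1)$ (equivalently, lower-bound $H(M|Z_1)$) from a small variational distance. A small $\|P_{Z_1|M=m}-P_{Z_1|M=m'}\|_1$ does not by itself bound the KL divergences appearing in $I(M;Z_1)$. The usual fix is operational rather than divergence-based: from $\|P_{Z_1|M=m}-P_{Z_1}\|_1\le\delta_{\CON}$ one gets that any guesser achieves $\Pr[\hat M=M]\le 1/\sM+\delta_{\CON}$, and then a min-entropy/Fano step yields the desired lower bound on $H(M|Z_1)$, provided $\delta_{\CON}\to 0$ suitably. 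You should replace the Pinsker invocation by this guessing argument (or by an explicit continuity-of-entropy bound).

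A smaller point on achievability: your binding argument as written (``no received $y^n$ is jointly typical with two codewords lying in different clouds'') only covers a cheater who transmits a codeword. Condition~\eqref{BIN2} must hold for \emph{arbitrary} commit-phase behaviour, i.e., arbitrary $x^n\in\cX^n$. This is exactly where (W2) enters: non-redundancy ensures that for every $x^n$ the output $Y^n$ is, with high probability, jointly typical with at most one codeword in the whole book, not merely that honest codewords are well separated. Your parenthetical remark about (W2) is pointing at this, but the statement of the packing lemma you rely on should be phrased for all $x^n\in\cX^n$, not just those in the codebook.
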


Many noisy channels are physically realized by wireless communication,
and such channels have continuous output system ${\cal Y}$.
Indeed, if we apply discretization to a continuous output system ${\cal Y}$, 
we obtain a discrete output system ${\cal Y}'$.
When we apply their result to the channel with the discrete output system ${\cal Y}'$,
the obtained protocol satisfies Condition (BIN) even when Bob uses the continuous output system ${\cal Y}$.
However, the obtained protocol does not satisfy Condition (CON) in general
under the continuous output system ${\cal Y}$.

In fact, Condition (W2) can be removed and 
Proposition \ref{Pro1} can be generalized as follows.
Therefore, Condition (W2) can be considered as an assumption for simplifying our analysis.
\begin{proposition}\Label{Pro2}
Assume that the channel $\bW$ satisfies Condition (W1).
We define ${\cal X}_0\subset {\cal X}$ as
\begin{align}
{\cal X}_0:= \argmin_{{\cal X}' \subset {\cal X}}
\Big\{ |{\cal X}'|~\Big|
CH \{ W_x\}_{x \in {\cal X}'}=CH \{ W_x\}_{x \in {\cal X}}
\Big\},
\end{align}
where $CH {\cal S}$ expresses the convex hull of a set ${\cal S}$.
Then, the commitment capacity is given as 
\begin{align}
\sup_{P \in {\cal P}({\cal X}_0)} H(X|Y)_P.
\Label{WIN5}
\end{align}\hfill $\square$
\end{proposition}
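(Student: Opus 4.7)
\begin{proofof}{Proposition \ref{Pro2}}
The plan is to reduce Proposition \ref{Pro2} to Proposition \ref{Pro1} applied to the restricted channel $\bW|_{{\cal X}_0}$. I would first show that $\bW|_{{\cal X}_0}$ automatically satisfies Condition (W2), which yields the direct part immediately. For the converse, I would show that inputs outside ${\cal X}_0$ can be simulated by randomized inputs over ${\cal X}_0$ with identical output distribution, so no additional rate is gained by allowing the redundant letters.

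To verify (W2) for $\bW|_{{\cal X}_0}$, I observe that by the minimality of $|{\cal X}_0|$, every $W_x$ with $x \in {\cal X}_0$ must be an extreme point of $CH\{W_{x'}\}_{x'\in{\cal X}_0}$; otherwise $x$ could be removed while still spanning the same convex hull, contradicting minimality. Hence $\sum_{x'\in{\cal X}_0\setminus\{x\}}P(x')W_{x'}\neq W_x$ for every $P\in{\cal P}({\cal X}_0\setminus\{x\})$, so $D(\sum_{x'}P(x')W_{x'}\|W_x)>0$ pointwise, and by compactness of the simplex and lower semicontinuity of $D$ the minimum over $P$ is strictly positive. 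Applying Proposition \ref{Pro1} to $\bW|_{{\cal X}_0}$ then supplies protocols achieving $\sup_{P\in{\cal P}({\cal X}_0)}H(X|Y)_P$; these remain valid on $\bW$ because any dishonest strategy over ${\cal X}^n$ can be simulated over ${\cal X}_0^n$ (by the definition of ${\cal X}_0$) without changing Bob's observation.

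For the converse, for each $x\in{\cal X}$ I would fix a distribution $Q_x\in{\cal P}({\cal X}_0)$ with $W_x=\sum_{x'\in{\cal X}_0}Q_x(x')W_{x'}$. Given any rate-$R$ code on $\bW$ satisfying the concealing and binding conditions, construct its lifted version on $\bW|_{{\cal X}_0}$ by replacing every coordinate value $x$ in every codeword by an independent sample from $Q_x$. The joint distribution of the message and Bob's observation $Y^n$ is unchanged, so the concealing parameter is preserved exactly. The binding condition carries over because every dishonest Alice strategy on $\bW|_{{\cal X}_0}$ is also a valid strategy on $\bW$, and Bob's verification function can be lifted by composing with the lifting randomness. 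Invoking Proposition \ref{Pro1} on the lifted code then yields $R\leq\sup_{P\in{\cal P}({\cal X}_0)}H(X|Y)_P$.

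The main obstacle is the rigorous transfer of the binding condition, since the reveal-phase verification $\beta$ in Winter et al.'s protocol typically involves joint-typicality checks on the transmitted codeword, and the lifted codeword is only randomly related to the original. The resolution relies on the fact that Bob's posterior over $Y^n$ given the message is identical under the two codes, so a joint-typicality test on the original ${\cal X}$-codeword can be replaced by one on the lifted ${\cal X}_0$-codeword coordinate-wise, with any mismatch absorbed into a vanishing error term in the asymptotic analysis.
\end{proofof}
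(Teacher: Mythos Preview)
Your approach is essentially the same as the paper's: reduce to Proposition \ref{Pro1} on the restricted alphabet ${\cal X}_0$, using that ${\cal X}_0$ automatically satisfies (W2) by minimality, and handle the converse by simulating ${\cal X}$-inputs via randomized ${\cal X}_0$-inputs with identical output law. The paper states this in two sentences; you fill in the details (the extreme-point argument for (W2), the explicit lifting via $Q_x$), all of which are correct.

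Your final paragraph, however, manufactures a difficulty that is not there. In the converse you are free to keep Bob's verification function $\beta$ \emph{unchanged}: $\beta$ acts on $(Z_1,Z_2,\hat M)$, and the reveal-phase data $Z_2$ travels over an unrestricted noiseless channel, so in the lifted protocol honest Alice may still transmit the original ${\cal X}^n$-valued codeword (or whatever $Z_2$ was) even though her commit-phase input was the randomized ${\cal X}_0^n$-sequence. Since the joint law of $(Z_1,Z_2,\hat M)$ is identical under both protocols, \eqref{BIN1} transfers verbatim. For \eqref{BIN2}, any cheating strategy available to Alice over ${\cal X}_0^n$ is already a strategy over ${\cal X}^n$ (as ${\cal X}_0\subset{\cal X}$), so the binding parameter can only improve. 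No ``coordinate-wise replacement of joint-typicality tests'' or asymptotic absorption of error terms is needed; the simulation is exact and non-asymptotic.
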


Proposition \ref{Pro2} follows from 
Proposition \ref{Pro1} in the following way. 
Due to Condition (W1), 
the channel $\bW$ with input alphabet ${\cal X}_0$
satisfies Condition (W2) as well as (W1).
Hence, the commitment capacity is lower bounded by \eqref{WIN5}.
Since any operation with 
the channel $\bW$ with input alphabet ${\cal X}$
can be simulated with ${\cal X}_0$.
Therefore, the commitment capacity is upper bounded by \eqref{WIN5}.
Thus, we obtain Proposition \ref{Pro2}.

\section{Various types of conditional entropies with general probability space}\Label{S51T}
We focus on an input alphabet ${\cal X}$ with finite cardinality, and 
denote the set of probability distributions on ${\cal X}$ by ${\cal P}({\cal X})$.
But, an output alphabet ${\cal Y}$ may have infinite cardinality and is a general measurable set.
In this paper, the output alphabet ${\cal Y}$ is treated as a general probability space with a measure $\mu(dy)$
because this description covers the probability space of finite elements and the set of real values.
Hence, when the alphabet ${\cal Y}$ is a discrete set including a finite set,
the measure $\mu(dy)$ is chosen to be the counting measure.
When the alphabet ${\cal Y}$ is a vector space over the real numbers $\RR$,
the measure $\mu(dy)$ is chosen to be the Lebesgue measure.
Throughout this paper,
we will use an upper case letter and corresponding lower case letter to stand for a probability measure and its density function.
When we treat a probability distribution $P$ on the alphabet ${\cal Y}$,
it is restricted to a distribution absolutely continuous with respect to $\mu(dy)$. 
In the following, we use the lower case $p(y)$ to express the Radon-Nikodym derivative of $P$ with respect to the measure $\mu(dy)$, i.e., the probability density function of $P$ 
so that $P(dy)=p(y) \mu(dy)$.
This kind of channel description covers many useful channels.
For example, phase-shift keying (PSK) scheme of additive white Gaussian noise (AWGN) channels satisfies this condition.
In addition, 
the capacity of AWGN channel with the energy constraint
can be approximately achieved
when the input alphabet for encoding is restricted to a finite subset of the set of real numbers.

For a distribution $P$ on ${\cal Y}$ and a general measure $Q$ on ${\cal Y}$,
we define the Kullback–Leibler (KL) divergence
$D(P\|Q):= \mathbb{E}_{P}[\log \frac{p(Y)}{q(Y)}]$
and R\'{e}nyi divergence of order $\alpha (\neq 1) >0$
$D_\alpha(P\|Q):= \frac{1}{\alpha-1}\log 
\mathbb{E}_{P}[(\frac{p(Y)}{q(Y)})^{\alpha-1}]$.

When ${\cal M}$ is a finite set and ${\cal Y}$ is a general probability space,
the conditional entropy is defined as
\begin{align}
H(M|Y):= \int_{{\cal Y}} H(P_{M|Y=y}) p(y)\mu (dy).
\end{align}
This quantity can be written as
\begin{align}
&H(M|Y)= -D(P_{MY}\|I_{M} \times P_Y) \nonumber \\
=&\max_{Q \in {\cal P}({\cal Y})} -D(P_{MY}\|I_{M} \times Q) ,
\end{align}
where $I_{M}$ is defined as $I_{M}(m)=1$.
We focus on the following type of R\'{e}nyi conditional entropy
$H_\alpha(M|Y)$ as \cite{Arimoto,Iwamoto,epsilon}
\begin{align}
H_\alpha(M|Y)&:=
\max_{Q \in {\cal P}({\cal Y})} -D_\alpha(P_{MY}\|I_{M} \times Q) .
\end{align}
$H_\alpha(M|Y)$ is monotonically
decreasing for $\alpha$ \cite[Lemma 7]{epsilon}.
Hence, we have
$H(M|Y)\ge H_\alpha(M|Y)$ for $\alpha>1$. 
It is known that the maximum is attained by 
$q_\alpha(y):=
\frac{(\sum_m p_{MY}(m,y)^{\alpha} )^{1/\alpha}}{
\int_{{\cal Y}} (\sum_m p_{MY}(m,y)^{\alpha} )^{1/\alpha} \mu(dy) }$
\cite[Lemma 4]{epsilon}.
Hence, when two pairs of variables $(M_1,Y_1)$ and $(M_2,Y_2)$ are independent, 
we have the additivity;
\begin{align}
H_\alpha(M_1M_2|Y_1Y_2)= H_\alpha(M_1|Y_1)+H_\alpha(M_2|Y_2).\Label{AOR}
\end{align}

\section{Problem setting}\Label{S2}
\subsection{Our problem setting without explicit description of coding structure}\Label{S2-1}
To realize the requirements (a), (b), and (c) mentioned in Section \ref{S1}, 
we formulate the mathematical conditions for the protocol for 
a given channel $\bW$ from the discrete system ${\cal X}$ to the other system ${\cal Y}$
with integers $\sL < \sM$ and security parameters $\epsilon_A,\delta_C,\delta_D$.
In the asymptotic regime, i.e., the case when 
the channel $\bW$ is used $n$ times and $n$ goes to infinity,
the integers $\sL$ and $ \sM$ go to infinity, which realizes 
the situation that the security parameters $\epsilon_A,\delta_C,$ and $\delta_D$
approach to zero.
Hence, when $\sL$ and $\sM$ is fixed,
the security parameters cannot be chosen to be arbitrarily small.
In the following, we describe the condition in an intuitive form in the first step. Later, we 
transform it into a coding-theoretic form
because the coding-theoretic form matches the theoretical discussion including the proofs of our main results.

Alice sends her message $M \in {\cal M}:= \{1, \ldots, \sM\}$ via a noisy channel 
with an encoder $\phi$, which is a map from ${\cal M} $ to ${\cal X}$.
Bob outputs the $\sL$ messages $M_1, \ldots M_{\sL}$.
The decoder is given as the following $\Psi$;
For $y\in {\cal Y}$, we choose a subset $\Psi(y) \subset {\cal M}$ with $|\Psi(y)|= \sL$.

Then, we impose the following conditions
for an encoder $\phi$ and a decoder $\Psi$.
\begin{description}
\item[(A)] 
Verifiable condition with $\epsilon_A>0$. Any element $m \in {\cal M}$ satisfies
\begin{align}
    \Pr [m \notin \Psi(Y) | X = \phi(m)] \le \epsilon_A.
    \end{align}
\item[(B)]
Equivocation version of concealing condition with $r>0$.
The inequality
\begin{align}
H(M|Y) \ge r \Label{HI1}
\end{align}
holds.
\item[(C)]
Binding condition for honest Alice with $\delta_C>0$.
Any distinct pair $m'\neq m$ satisfies
\begin{align}
\Pr [m' \in \Psi(Y) | X=\phi(m)] \le \delta_C . 
\end{align}
\end{description}

Now, we discuss how the code $(\phi,\Psi)$ can be used for the task explained in Section \ref{S1}.
Assume that Alice sends her message $M$ to Bob 
by using the encoder $\phi$ via noisy channel $\bW$
and Bob gets the list $M_1, \ldots, M_{\sL}$ by applying the decoder $\Psi$ at Step (i).
At Step (ii), Alice sends her message $M$ to Bob via a noiseless channel.
Verifiable condition (A) guarantees that her message $M$ belongs to Bob's list. 
Hence, the requirement (a) is satisfied.
Equivocation version of concealing condition (B) forbids Bob to identify Alice's message at Step (i), hence it guarantees the requirement (b).
In the asymptotic setting, this condition is weaker than Concealing condition (CON)
when $\delta_{\CON}$ goes to zero and $r$ is smaller than $\log \sM$.
Hence, this relaxation enables us to exceed the rate \eqref{WIN} derived by \cite{BC1,BC2}.
This type of relaxation is often used in wire-tap channel \cite{CK79}.

In fact, if $m$ is Alice's message and there exists another element $m' (\neq m) \in {\cal M}$ 
such that
$\Pr [m \in \Psi(Y) | X=\phi(m)] $ and 
$\Pr [m' \in \Psi(Y) | X=\phi(m)] $
are close to $1$,
Alice can make the following cheating as follows;
She sends $m'$ instead of $m$ at the phase (ii).
Since Condition (C) forbids Alice such cheating, 
it guarantees the requirement (c). Hence, it can be considered as the binding condition for honest  Alice.
Further, Bob is allowed to decode less than $\sL$ messages.
That is, $\sL$ is the maximum number that Bob can list as the candidates of the original message.
However, Condition (C) assumes honest Alice who uses the correct encoder $\phi$.
Dishonest Alice can send an element  $x_0 $ different from $\phi(m)$
such that $\Pr [m \in \Psi(Y) | X=x_0] $ and 
$\Pr [m' \in \Psi(Y) | X=x_0] $ are close to $1$.
To cover such a case, we impose the following condition instead of Condition (C).

\begin{description}
\item[(D)] Binding condition for dishonest Alice with $\delta_D>0$.
For $x\in {\cal X}$, we define the quantity $\delta(x,\Psi)$ as the second largest value among
$\{ \Pr [m \in \Psi(Y) | X=x] \}_{m=1}^{\sM}$.
Then, any $x\in {\cal X}$ satisfies
\begin{align}
\delta(x,\Psi) \le \delta_D .  \Label{HB3}
\end{align}
\end{description}

In fact, Condition (D) implies that
\begin{align}
\Pr [m' ,m \in \Psi(Y) | X=x] \le \delta_D. 
\Label{XPA}
\end{align}
Eq. \eqref{XPA} can be shown by contradiction due to the following relation;
\begin{align}
&\Pr [m' ,m \in \Psi(Y) | X=x]  \nonumber \\
\le &\min (\Pr [m \in \Psi(Y) | X=x],\Pr [m' \in \Psi(Y) | X=x])
 \nonumber \\
 \le & \delta(x,\Psi).
\end{align}

The difference between Conditions (C) and (D) are summarized as follows.
Condition (C) expresses the possibility 
that Alice makes cheating in the reveal phase while she behaves honestly in the commit phase.
Condition (D) expresses the possibility 
that Alice makes cheating in the reveal phase 
when she behaves dishonestly even in the commit phase.
Hence, it can be considered as the binding condition for
dishonest Alice.
Therefore, while 
the case with honest Alice and honest Bob is summarized in Fig. \ref{F-honest},
the case with dishonest Alice and honest Bob is summarized in Fig. \ref{F-dishonest}.

We consider another possibility for requirement (b)
by replacing the conditional entropy by the conditional R\'{e}nyi entropy of order $\alpha>1$.
\begin{description}
\item[(B$\alpha$)]
R\'{e}nyi equivocation type of concealing condition
of order $\alpha>1$ with $r$.
The inequality
\begin{align}
H_\alpha(M|Y) \ge r\Label{HI1-1}
\end{align}
holds.
\end{description}

Now, we observe how to characterize the code constructed to achieve the capacity in the paper \cite{Haya}.
For this characterization, we consider the following code when $\sM' \sL=\sM$.
We divide the $\sM$ messages into $\sM'$ groups whose group is composed of $\sL$ messages.
First, we prepare a code $({\phi}',{\psi}')$ to transmit the message with size $\sM'$ with a decoding error probability $\epsilon_A'$,
where ${\phi}'$ is the encoder and ${\psi}'$ is the decoder.
When the message $M$ belongs to the $i$-th group,
Alice sends ${\phi}'(i)$.
Using the decoder ${\psi}'$, Bob recovers $i'$.
Then, Bob outputs $\sL$ elements that belongs to the $i'$-th group.
In this code, the parameter $H(M|Y)$ is given as $\log \sL$. Hence, it satisfies condition (B) with a good parameter.
However, the parameters $\delta_C$ and $\delta_D$ become at least $1-\epsilon_A'$.
Hence, this protocol essentially does not satisfy Biding condition (C) nor (D).
In this way, our security parameter rules out the above trivial code construction.

\begin{figure}[t]
\begin{center}
  \includegraphics[width=0.98\linewidth]{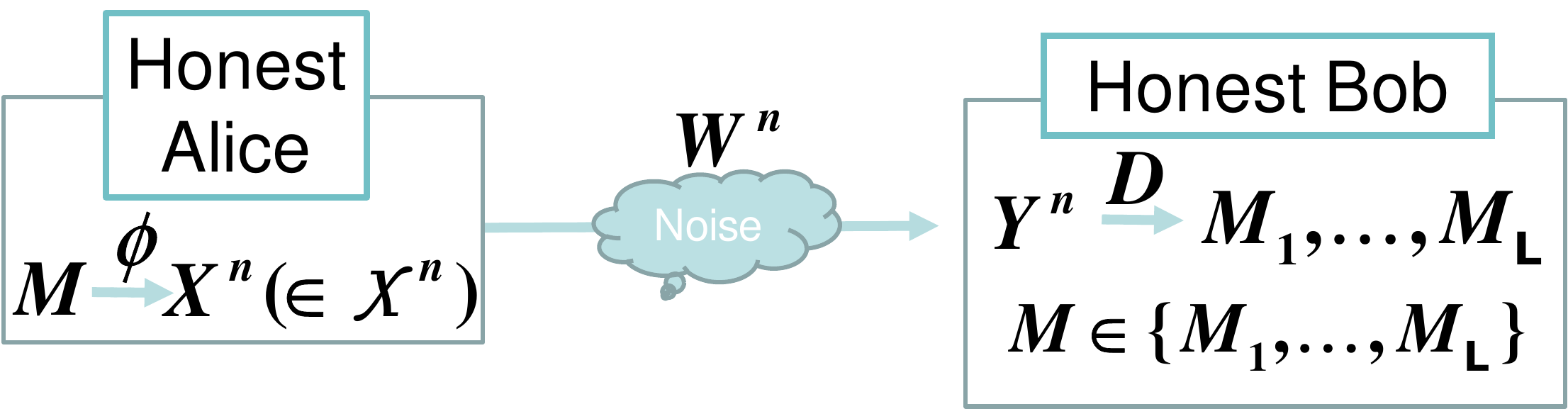}
  \end{center}
\caption{Case with honest Alice and honest Bob.
The set of Bob's decoded messages contains Alice's message $M$.
Alice cannot infer other decoded messages.}
\Label{F-honest}
\end{figure}   

\begin{figure}[t]
\begin{center}
  \includegraphics[width=0.98\linewidth]{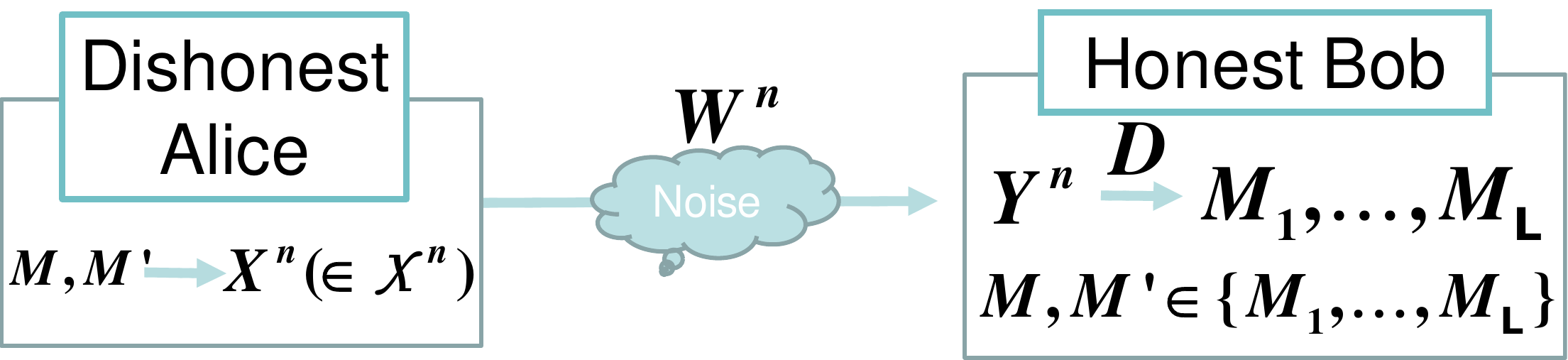}
  \end{center}
\caption{Case with dishonest Alice and honest Bob.
Dishonest Alice chooses $X^n \in {\cal X}^n$ such that
she infers at least two elements in the set of Bob's decoded messages.
Condition (D) guarantees the non-existence of such an element $X^n \in {\cal X}^n$.}
\Label{F-dishonest}
\end{figure}   

\subsection{Our setting with coding-theoretic description}\Label{S2-1B}
To rewrite the above conditions in a coding-theoretic way, 
we introduce several notations.
For $x\in {\cal X}$ and a distribution on ${\cal X}$, we define the distribution $W_x$ and $W_{P}$ on ${\cal Y}$
as $W_x(y):=W(y|x)$ and $W_P(y):= \sum_{x\in {\cal X}}P(x)W(y|x)$.
Alice sends her message $M \in {\cal M}:= \{1, \ldots, \sM\}$ via noisy channel 
$\bW$ with a code $\phi$, which is a map from ${\cal M} $ to ${\cal X}$.
Bob' decoder is described as disjoint subsets 
$D=\{{\cal D}_{m_1, \ldots, m_{\sL}}\}_{\{m_1, \ldots, m_{\sL}\} \subset {\cal M}}$
such that $\cup_{\{m_1, \ldots, m_{\sL}\} \subset {\cal M}} {\cal D}_{m_1, \ldots, m_{\sL}}={\cal Y}$.
That is, 
we have the relation $
{\cal D}_{m_1, \ldots, m_{\sL}}=\{y| \{m_1, \ldots, m_{\sL}\}= \Psi(y)\}$.
In the following, we denote our decoder by $D$ instead of $\Psi$.

In particular, 
when a decoder 
has only one outcome as an element of ${\cal M}$ 
it is called a single-element decoder.
It is given as disjoint subsets 
$\tilde{{\cal D}}=\{ \tilde{{\cal D}}_{m}\}_{m \in {\cal M}}$
such that $\cup_{m \in {\cal M} } \tilde{{\cal D}}_{m}={\cal Y}$.
Here, remember that Winter et al \cite{BC1,BC2} assumes 
the uniform distribution on ${\cal M}$ for the message $M$ in Binding condition.

\begin{thm}
When the message $M$ is subject to the uniform distribution on ${\cal M}$ 
in a similar way to Winter et al \cite{BC1,BC2},
the conditions (A) -- (D) for an encoder $\phi$ 
and a decoder $D=\{{\cal D}_{m_1, \ldots, m_{\sL}}\}_{\{m_1, \ldots, m_{\sL}\} \subset {\cal M}}$
are rewritten in a coding-theoretic way as follows.
\begin{description}
\item[(A)] 
Verifiable condition.
\begin{align}
\epsilon_{A}(\phi,D)
:=&\max_{m \in {\cal M}}\epsilon_{A,m}(\phi(m),D) 
\le  \epsilon_A \\
\epsilon_{A,m}(x,D) 
:=&
1-\sum_{
m_1, \ldots, m_{\sL}
}
W_{x}({\cal D}_{m_1, \ldots, m_{\sL}}), 
\end{align}
where the above sum is taken under the condition
$m \in \{m_1, \ldots, m_{\sL}\}$.

\item[(B)]
Equivocation version of concealing condition with $r>0$.
\begin{align}
E(\phi):=& \log \sM -\min_{Q \in {\cal P}({\cal Y})}
\sum_{m=1}^{\sM}\frac{1}{\sM} D(W_{\phi(m)} \|  Q)\nonumber \\
\ge & r.\Label{HI2}
\end{align}

\item[(B$\alpha$)]
R\'{e}nyi equivocation type of concealing condition of order $\alpha>1$ with $r$.
\begin{align}
&E_\alpha(\phi)\nonumber \\
:=&\log \sM \nonumber \\
&- \min_{Q \in {\cal P}({\cal Y})}
\frac{1}{\alpha-1}\log 
\sum_{m=1}^{\sM}\frac{1}{\sM} 2^{(\alpha-1)D_\alpha(W_{\phi(m)} \|  Q)}
\nonumber \\
\ge &r.\Label{HI3}
\end{align}

\item[(C)]
Binding condition for honest Alice.
\begin{align}
&\delta_{C}(\phi,D)
:=
\max_{m \in {\cal M}}
\delta_{C,m}(\phi(m),D) \le  \delta_C 
\\
&\delta_{C,m}(x,D)  \nonumber \\
&:=\max_{m' (\neq m) \in {\cal M}}
\sum_{ m_1, \ldots, m_{\sL}}
W_{x}({\cal D}_{m_1, \ldots, m_{\sL}}) ,
\Label{EE8}
\end{align}
where the above sum is taken under the condition 
$m' \in \{m_1, \ldots, m_{\sL}\}$.
\item[(D)] Binding condition for dishonest Alice.
For $x\in {\cal X}$, we define the quantity
$\delta_{D,x}(D)$ as the second largest value among 
$\{(1-\epsilon_{A,m}(x,D))\}_{m=1}^{\sM}$.
Then, the relation
\begin{align}
\delta_{D}(D)&:=
\max_{x \in {\cal X}} \delta_{D,x}(D)
\le  \delta_D \Label{HB3E}
\end{align}
holds.
\end{description}
\hfill $\square$\end{thm}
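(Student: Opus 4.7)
The claim is essentially a translation exercise: each intuitive condition from Section \ref{S2-1} is to be rewritten by expanding the event $\{m\in\Psi(Y)\}$ in terms of the disjoint decoding regions $\{\mathcal{D}_{m_1,\ldots,m_\sL}\}$, and by invoking the standard variational characterizations of the (R\'enyi) conditional entropy. So my plan is to handle the five items one by one, dispatching the partition-style identities first and leaving the entropy identities to the end.

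For conditions (A) and (C), the plan is to use the fact that $\{\mathcal{D}_{m_1,\ldots,m_\sL}\}$ forms a measurable partition of $\mathcal{Y}$ together with the defining property $\mathcal{D}_{m_1,\ldots,m_\sL}=\{y:\Psi(y)=\{m_1,\ldots,m_\sL\}\}$. For a fixed $m$, the event $\{m\in\Psi(Y)\}$ is the disjoint union of those cells $\mathcal{D}_{m_1,\ldots,m_\sL}$ indexed by $\sL$-subsets containing $m$. Thus, conditional on $X=x$,
\begin{align}
\Pr[m\in\Psi(Y)\mid X=x]
=\sum_{m\in\{m_1,\ldots,m_\sL\}} W_x(\mathcal{D}_{m_1,\ldots,m_\sL}),
\end{align}
which is $1-\epsilon_{A,m}(x,D)$. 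Plugging $x=\phi(m)$ and taking $1$ minus on both sides gives the coding-theoretic form of (A); applying the same expansion to an index $m'\neq m$ and maximizing over $m'$ gives (C).

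For condition (D), I would observe that the quantity $\delta(x,\Psi)$ defined in Section \ref{S2-1} is exactly the second-largest value of the numbers $\Pr[m\in\Psi(Y)\mid X=x]$ as $m$ ranges over $\mathcal{M}$. By the identity just derived, this equals the second-largest value of $\{1-\epsilon_{A,m}(x,D)\}_{m=1}^{\sM}$, which is $\delta_{D,x}(D)$; taking the max over $x\in\mathcal{X}$ yields \eqref{HB3E}.

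Conditions (B) and (B$\alpha$) rely on variational formulas. Under the uniform prior on $\mathcal{M}$, the joint $P_{MY}$ satisfies $P_{Y|M=m}=W_{\phi(m)}$ and $P_Y = \frac{1}{\sM}\sum_m W_{\phi(m)}$, so the well-known identity $H(M|Y)=\log\sM-I(M;Y)$ combined with the Donsker-Varadhan-type representation $I(M;Y)=\min_{Q\in\mathcal{P}(\mathcal{Y})} \sum_m \frac{1}{\sM} D(W_{\phi(m)}\|Q)$ (the minimum being attained at $Q=P_Y$) gives $H(M|Y)=E(\phi)$. For (B$\alpha$) I would directly use the definition of $H_\alpha(M|Y)$ introduced in Section \ref{S51T}, namely $H_\alpha(M|Y)=\max_{Q} -D_\alpha(P_{MY}\|I_M\times Q)$, and unfold $D_\alpha$ under the uniform prior to obtain
\begin{align}
H_\alpha(M|Y)=\log\sM-\min_{Q}\tfrac{1}{\alpha-1}\log\sum_m\tfrac{1}{\sM}2^{(\alpha-1)D_\alpha(W_{\phi(m)}\|Q)},
\end{align}
which matches $E_\alpha(\phi)$. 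The only mildly delicate step is this last algebraic unfolding, since one must handle the $\alpha$-th power of the density $p_{MY}(m,y)=\frac{1}{\sM}w_{\phi(m)}(y)$ carefully and verify that the outer $\min_Q$ in the definition of $H_\alpha(M|Y)$ coincides with the $\min_Q$ in \eqref{HI3}; this is where I expect the bookkeeping to require the most care, but it is purely computational once the definitions are unfolded.
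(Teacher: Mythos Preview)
Your proposal is correct and follows essentially the same approach as the paper's proof: the paper also reduces (A), (C), (D) to the partition identity $\{m\in\Psi(y)\}=\bigcup_{\{m_1,\ldots,m_\sL\}\ni m}\mathcal{D}_{m_1,\ldots,m_\sL}$ and handles (B), (B$\alpha$) via the variational formulas $H(M|Y)=\log\sM-I(M;Y)$ and $H_\alpha(M|Y)=\max_Q -D_\alpha(P_{MY}\|I_M\times Q)$. The only cosmetic difference is that the paper writes out the (B$\alpha$) unfolding explicitly as a chain of equalities in $2^{-(\alpha-1)(\cdot)}$ form, which is exactly the ``purely computational'' step you flagged.
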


\begin{proof}
For any $m\in {\cal M}$ and $y \in {\cal Y}$,
the condition $m \in \Psi(y)$ is equivalent to the condition $y \in \cup _{ m_1, \ldots, m_{\sL}:\{m_1, \ldots, m_{\sL}\} \ni m'} {\cal D}_{m_1, \ldots, m_{\sL}}$.
Since 
\begin{align*}
&\sum_{ m_1, \ldots, m_{\sL}:\{m_1, \ldots, m_{\sL}\} \ni m }
W_{x}({\cal D}_{m_1, \ldots, m_{\sL}}) \\
=& W_{x}\Big(\bigcup _{ m_1, \ldots, m_{\sL}:\{m_1, \ldots, m_{\sL}\} \ni m} {\cal D}_{m_1, \ldots, m_{\sL}}
\Big),
\end{align*}
we obtain the equivalence between the conditions (A) and (C) given in Section \ref{S2-1} and those given here. 
In a similar way, the condition \eqref{HB3} is equivalent to the condition \eqref{HB3E}, which implies the desired equivalence with respect to the condition (D).
Since $M$ is subject to the uniform distribution, \eqref{HI1} and \eqref{HI1-1} are
equivalent to \eqref{HI2} and \eqref{HI3}. In fact, 
since $\min_{Q \in {\cal P}({\cal Y})}
\sum_{m=1}^{\sM}\frac{1}{\sM} D(W_{\phi(m)} \|  Q)$\par
\noindent$= \sum_{m=1}^{\sM}\frac{1}{\sM} D(W_{\phi(m)} \| 
\sum_{m=1}^{\sM}\frac{1}{\sM}W_{\phi(m)})
=I(M;Y)$,
$E(\phi)$ is calculated as $H(M)-I(M;Y)=H(M|Y)$,
and $E_\alpha(\phi)$ is calculated as
\begin{align}
&2^{-(\alpha-1)E_\alpha(\phi)}\nonumber \\
= &
 \min_{Q \in {\cal P}({\cal Y})}
\Big( \frac{1}{\sM}\Big)^{\alpha-1} \sum_{m=1}^{\sM}\frac{1}{\sM} 2^{(\alpha-1)D_\alpha(W_{\phi(m)} \|  Q)}\nonumber \\
= &
 \min_{Q \in {\cal P}({\cal Y})}
 \sum_{m=1}^{\sM}\frac{1}{\sM} 
 \int_{{\cal Y}} 
\Big( \frac{\frac{w_{\phi(m)}(y)}{\sM}}{q(y)}\Big)^{\alpha-1}
 w_{\phi(m)}(y) \mu(dy) \nonumber \\
 =&
 \min_{Q \in {\cal P}({\cal Y})}
2^{(\alpha-1) D_\alpha (P_{MY}\|I_M \times Q )}  \nonumber \\
 =& 
 2^{-(\alpha-1) \max_{Q \in {\cal P}({\cal Y})} (-D_\alpha (P_{MY}\|I_M \times Q ))} \nonumber \\
 =&
2^{-(\alpha-1) H_\alpha(M|Y) } .
\end{align}
Hence,
we obtain the desired equivalence for the conditions (B) and (B$\alpha$).
\end{proof}

In the following, 
when a code $(\phi,D) $ satisfies conditions (A), (B) and (D), 
it is called an $(\epsilon_A,r,\delta_D)$ code. Also, 
for a code $(\phi,D) $,
we denote $\sM$ and $\sL$
by $|(\phi,D)|_1$ and $|(\phi,D)|_2$.
Also, we allow a stochastic encoder, in which $\phi(m)$ is a distribution $P_m$ on ${\cal X}$.
In this case, for a function $f$ from ${\cal X}$ to $\mathbb{R}$,
$f(\phi(m))$ expresses $\sum_{x}f(x)P_m(x)$.


\section{Information quantities and regions with general probability space}\Label{S51II}
\subsection{Information quantities}
Section \ref{S51T} introduced 
various types of conditional entropies with general probability space.
This section introduces other types of information quantities with general probability space.
In general, a channel from ${\cal X}$ to ${\cal Y}$ is described as a collection $\bW$ of conditional
probability measures $W_{x}$ on ${\cal Y}$ for all inputs $x \in {\cal X}$.
Then, we impose the above assumption to $W_{x}$ for any $x \in {\cal X}$.
Hence, we have $W_x(dy)=w_x(y)\mu(dy)$.
We denote the conditional probability density function by $\bw=(w_x)_{x\in {\cal X}}$.
When a distribution on ${\cal X}$ is given by a probability distribution $P\in {\cal P}({\cal X})$,
and a conditional distribution on a set ${\cal Y}$ with the condition on ${\cal X}$ is given by $\bV$,
we define the joint distribution $\bW \times P$
on ${\cal X} \times {\cal Y}$ by $\bW \times P(B,x):=W(B|x)P(x)$, 
and the distribution $\bW \cdot P$ on ${\cal Y}$ by $\bW \cdot P(B):=\sum_x W(B|x)P(x)$
for a measurable set $B \subset {\cal Y}$.
Also, we define  the notations $\bw \times P$ and $\bw \cdot P$ as
$\bw \times P(y,x)\mu(dy):=\bW \times P(dy,x)=w_x(y)P(x)\mu(dy)$ and
$\bw \cdot P(y)\mu(dy):=\bW \cdot P(dy)=\sum_{x\in {\cal X}}w_x(y)P(x)\mu(dy)$.
We also employ the notations $W_P:= \bW \cdot P$ and $w_P:= \bw \cdot P$.

As explained in Section \ref{S51},
we denote the expectation and the variance under the distribution $P \in {\cal P}({\cal Y})$ 
by $\mathbb{E}_P[~]$ and $\mathbb{V}_P[~]$, respectively.
When $P$ is the distribution $W_x \in {\cal P}({\cal Y})$ with $x \in {\cal X}$,
we simplify them as $\mathbb{E}_x[~]$ and $\mathbb{V}_x[~]$, respectively.
This notation is also applied to the $n$-fold extended setting on ${\cal Y}^n$.
In contrast, when we consider the expectation on the discrete set ${\cal X}$ or ${\cal X}^n$,
$\rE_{T}$ expresses the expectation with respect to the random variable $T$ that takes values in 
the set ${\cal X}$ or the set ${\cal X}^n$.

In our analysis, 
for $P \in {\cal P}({\cal X})$, we address the following quantities;
\begin{align}
&I(X;Y)_P\nonumber \\
:=&
D(\bW \times P\| W_P \times P)=\sum_{x\in {\cal X}}P(x) D(W_x\|W_P),\\
&I_\alpha (X;Y)_P \nonumber \\
:=&
\min_{Q \in {\cal P}({\cal Y})} D_\alpha(\bW \times P\| Q \times P) \nonumber \\
=&
\min_{Q \in {\cal P}({\cal Y})}
\frac{1}{\alpha-1}
\log \int_{{\cal Y}}
\sum_{x \in {\cal X}} P(x) w_x(y)^{\alpha} q(y)^{-\alpha+1} \mu(dy) \nonumber \\
\stackrel{(a)}{=}&
\frac{\alpha}{\alpha-1}\log 
\int_{{\cal Y}} \Big(\sum_{x \in {\cal X}} P(x) w_x(y)^{\alpha}\Big)^{\frac{1}{\alpha}}\mu(dy),
\Label{LL192E}\\
&H(X)_P\nonumber \\
 :=&-\sum_{x\in {\cal X}}P(x) \log P(x),
\end{align}
where $(a)$ follows from the equality condition of H\"{o}lder inequality \cite{Sibson}. 
Since 
in this paper, the conditional distribution on $Y$ conditioned with $X$ is fixed to the channel $\bW$,
it is sufficient to fix a joint distribution $P \in {\cal P}({\cal X})$ in the above notation.
In addition, our analysis needs mathematical analysis with a Markov chain $U-X-Y$ with a variable on a finite set ${\cal U}$.
Hence, we generalize the above notation as follows.
\begin{align}
&I(X;Y|U)_P:=\sum_{u \in {\cal U}}P_U(u) D(\bW \times P\| W_P \times P_{X|U=u}),\\
&H(X|U)_P :=-\sum_{u \in {\cal U}}
\sum_{x\in {\cal X}}P(x,u) \log \frac{P(x,u)}{P_U(u)},
\end{align}
and
\begin{align}
&I_\alpha (X;Y|U)_P \nonumber \\
:=&
\sum_{u \in {\cal U}}P_U(u)
\min_{Q \in {\cal P}({\cal Y})} D_\alpha(\bW \times P\| Q \times P_{X|U=u}) .
\end{align}

\subsection{Regions}
Then, we define the following regions.
\begin{align}
&{\cal C}\nonumber \\
:=&
\bigcup_{P\in {\cal P}({\cal U}\times {\cal X})} 
\left\{
\!\left(\!
\begin{array}{c}
R_1 \\R_2 \\R_3
\end{array}
\! \right)\!
 \left| 
\begin{array}{l}
0 < R_1-R_2< I(X;Y|U)_P, \!\!\!\! \\
R_3 \le R_1- I(X;Y|U)_P, \\
R_1 < H(X|U)_P ,\\
0< R_1,R_2,R_3
\end{array}
\right. \right\}  \\
&{\cal C}^s\nonumber \\
:=&
\bigcup_{P\in {\cal P}({\cal U}\times {\cal X})} 
\left\{
\!\left(\!
\begin{array}{c}
R_1 \\R_2 \\R_3
\end{array}
\! \right)\!
 \left| 
\begin{array}{l}
0 < R_1-R_2< I(X;Y|U)_P, \!\!\!\!\\
 R_3 \le H(X|YU)_P, \\
R_1 < H(X|U)_P ,\\
0< R_1,R_2,R_3
\end{array}
\right. \right\}  \\
&{\cal C}_\alpha\nonumber \\
:=&
\bigcup_{P\in {\cal P}({\cal U}\times {\cal X})} 
\left\{
\!\left(\!
\begin{array}{c}
R_1 \\R_2 \\R_3
\end{array}
\! \right)\!
 \left| 
\begin{array}{l}
0 < R_1-R_2< I(X;Y|U)_P,\!\!\!\! \\
R_3 < R_1- I_\alpha(X;Y|U)_P, \\
R_1 < H(X|U)_P ,\\
0< R_1,R_2,R_3
\end{array}
\right. \right\} .
\end{align}
In the above definitions, there is no restriction for the cardinality of ${\cal U}$.
Due to the relations
\begin{align}
\begin{aligned}
H(X|U)_P& =\sum_{u \in {\cal U}}P_U(u) H(X)_{P_{X|U=u}},\\
H(X|YU)_P&=\sum_{u \in {\cal U}}P_U(u) H(X|Y)_{P_{X|U=u}}, 
\end{aligned}
\Label{BNI}
\end{align}
and
$I(X;Y|U)_P=H(X|U)_P-H(X|YU)_P$, 
Caratheodory lemma guarantees that
the cardinality of ${\cal U}$ can be restricted to $3$
in the definitions of  ${\cal C}$  
and ${\cal C}^s$.
In addition, the condition $R_3 < R_1- I_\alpha(X;Y|U)_P$ 
in the definition of  ${\cal C}_\alpha$
is rewritten as
\begin{align}
2^{(\alpha-1)I_\alpha(X;Y|U)_P}< 2^{(\alpha-1)(R_1-R_3)} .
\end{align}
Since the relation $2^{(\alpha-1)I_\alpha(X;Y|U)_P}=
\sum_{u \in {\cal U}}P_U(u) 
2^{(\alpha-1)I_\alpha(X;Y)_{P|X|U=u}}$ holds,
Caratheodory lemma guarantees that
the cardinality of ${\cal U}$ can be restricted to $4$
in the definition of  ${\cal C}_\alpha$.  

To see the relation between two regions ${\cal C}$ and ${\cal C}^s$, 
we focus on the inequality
\begin{align}
R_1- I(X;Y|U)_P < &H(X|U)_P - I(X;Y|U)_P 
\nonumber \\
=& H(X|YU)_P
\end{align}
in the region ${\cal C}$.
Hence, 
the condition $R_3 \le R_1- I(X;Y|U)_P$ is stronger than
the condition $R_3 \le H(X|YU)_P$, which implies the relation;
\begin{align}
{\cal C} \subset {\cal C}^s.\Label{COQ}
\end{align}

When we focus only on $R_1$ and $R_3$ instead of $(R_1,R_2,R_3)$,
we have simpler characterizations.
We define the regions;
\begin{align}
{\cal C}^{1,3} &:= \{ (R_1,R_3) |\exists R_2 \hbox{ such that }  
(R_1,R_2,R_3) \in {\cal C}\} \\
{\cal C}^{s,1,3} &:= \{ (R_1,R_3) |\exists R_2 \hbox{ such that }  
(R_1,R_2,R_3) \in {\cal C}^s\} \\
{\cal C}_{\alpha}^{1,3} &:= \{ (R_1,R_3) |\exists R_2 \hbox{ such that }  
(R_1,R_2,R_3) \in {\cal C}_\alpha\} .
\end{align}
Then, we have the following lemma.

\begin{lem}\Label{LL2}
We have
\begin{align}
\overline{{\cal C}^{1,3}} &= 
\left\{
 (R_1,R_3) \left| 0 \le R_1 \le \log d ,~
 0 \le R_3 \le \gamma_1(R_1) 
\right. \right\}\Label{XM1} \\
\overline{{\cal C}_{\alpha}^{1,3}} &=
\left\{
 (R_1,R_3) \left| 0 \le R_1 \le \log d ,~
 0 \le R_3 \le \gamma_{\alpha}(R_1) 
\right. \right\} , \Label{XM3}
\end{align}
and
\begin{align}
&\overline{{\cal C}^{s,1,3}}\nonumber \\
=&
\left\{
 (R_1,R_3) \left| 0 \le R_1 \le \log d ,~
 0 \le R_3 \le  \max_{R \le R_1} \gamma_1(R) 
\right. \right\}\Label{XM2} ,
\end{align}
where $d:= |{\cal X}|$ and 
\begin{align}
&\gamma_1(R_1) \nonumber \\
:=&\max_{P \in {\cal P}({\cal U}\times {\cal X})}
\{ H(X|YU)_P| H(X|U)_P=R_1 \},
 \\
& \gamma_\alpha(R_1) \nonumber \\
:=&
\max_{P \in {\cal P}({\cal U}\times {\cal X})}
\{ R_1- I_\alpha(X;Y|U)_P
| H(X|U)_P=R_1 \}.
\end{align}
When $|{\cal X}|$ is infinite, 
the condition $\le \log d$ is removed in the above equations.
\hfill $\square$\end{lem}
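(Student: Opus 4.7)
The plan is to compute the $(R_1, R_3)$-projection of each region by first eliminating $R_2$ and then reading off the characterization from the geometry of the set $\mathcal{R} := \{(H(X|U)_P, H(X|YU)_P) : P \in \cP(\cU \times \cX)\}$. In each of the three regions, the $R_2$-constraint $\max(0, R_1 - I(X;Y|U)_P) < R_2 < R_1$ (with $I_\alpha$ in place of $I$ for $\cC_\alpha$) admits a valid $R_2 > 0$ exactly when $R_1 > 0$ and the pertinent mutual information is positive, a mild condition that can be absorbed by the closure operation. The central structural observation is then that $\mathcal{R}$ is convex and contains the origin: convexity follows by combining $P_1, P_2$ on disjoint $\cU$-alphabets so that both coordinates average linearly in the mixing weight, and the origin is realized by any deterministic $P$. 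Consequently $\mathcal{R}$ is star-shaped about $(0,0)$ and $\gamma_1$, its upper boundary, is concave; the analogous argument via \eqref{AOR} yields concavity of $\gamma_\alpha$.

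For \eqref{XM1}, the reduced constraint $R_3 \le R_1 - I(X;Y|U)_P$ rewrites as $R_3 \le H(X|YU)_P - (H(X|U)_P - R_1)$ with $R_1 < H(X|U)_P$. To prove $\subseteq$, scale the certifying $P$ by $\lambda = R_1/H(X|U)_P < 1$ to obtain a $P^\lambda \in \mathcal{R}$ with $H(X|U)_{P^\lambda} = R_1$ and $H(X|YU)_{P^\lambda} = \lambda H(X|YU)_P$; the desired bound $R_3 \le \lambda H(X|YU)_P$ reduces, using $H(X|YU) \le H(X|U)$, to the trivial $(H(X|YU)_P - H(X|U)_P)(R_1 - H(X|U)_P) \ge 0$, and hence $R_3 \le \gamma_1(R_1)$. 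For $\supseteq$, take $P^*$ achieving $\gamma_1(R_1)$ and perturb by mixing a small weight $\epsilon$ on a new $\cU$-symbol supporting a uniform $X$; the resulting $P_\epsilon$ satisfies $H(X|U)_{P_\epsilon} > R_1$ strictly while $H(X|YU)_{P_\epsilon} \to \gamma_1(R_1)$, so the constraint $R_3 \le R_1 - I(X;Y|U)_{P_\epsilon}$ holds in the limit. The proof of \eqref{XM3} is identical with $\gamma_\alpha, I_\alpha$ in place of $\gamma_1, I$, using the multiplicative form of \eqref{AOR} for the scaling step.

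For \eqref{XM2}, the weaker $\cC^s$-constraint $R_3 \le H(X|YU)_P$ decouples $R_3$ from the gap $H(X|U)_P - R_1$, so one may take $H(X|U)_P$ arbitrarily large above $R_1$; the envelope $\max_{R \le R_1} \gamma_1(R)$ then arises by matching the monotonicity structure of the projection with the scaling property described above. The main obstacle will be making the closure and perturbation arguments tight---in particular, justifying the envelope in \eqref{XM2} carefully and, in the infinite-$|\cX|$ case, performing a truncation to a large finite sub-alphabet so that the perturbation (mixing with a uniform $X$) is well-defined and the whole scheme goes through uniformly.
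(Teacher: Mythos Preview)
Your approach is correct and takes a genuinely different route from the paper's. Both proofs reduce \eqref{XM1} and \eqref{XM3} to the monotonicity of $\bar\gamma_1(R):=\min_{P}\{I(X;Y|U)_P:H(X|U)_P=R\}$ (and its $\alpha$-analogue), but they establish that monotonicity differently. The paper works at the single-letter level: it fixes the minimizer $P(R)=\argmin_{P:H(X)_P=R} I(X;Y)_P$, proves via a swap argument that $x\mapsto D(W_x\|W_{P(R)})$ is anti-monotone in $P(R)(x)$, and then shifts a small amount of mass from the symbol of largest divergence to the one of smallest; this lowers both $H(X)$ and $\sum_x P(x)D(W_x\|W_{P(R)})\ge I(X;Y)$, giving $\bar\gamma_{1,o}(R_0)\le\bar\gamma_{1,o}(R)$ for $R_0<R$ nearby. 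You instead exploit the auxiliary variable $U$ directly: since both coordinates of $\mathcal{R}=\{(H(X|U)_P,H(X|YU)_P)\}$ are linear under mixing on disjoint $\mathcal{U}$-alphabets, $\mathcal{R}$ is convex and contains the origin, so any witnessing point can be scaled toward $(0,0)$. Your route is shorter and sidesteps the symbol-swap lemma entirely; the paper's route has the merit of yielding single-letter monotonicity of $\bar\gamma_{1,o}$ without invoking $U$.

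Two minor repairs. First, your appeal to \eqref{AOR} in the $\alpha$ case is misplaced: \eqref{AOR} is the additivity of $H_\alpha$ under products, whereas what you actually need is that $2^{(\alpha-1)I_\alpha(X;Y|U)_P}$ (not $I_\alpha$ itself) averages linearly under $\mathcal{U}$-mixing. With this, scaling toward a deterministic point gives $2^{(\alpha-1)I_\alpha(X;Y|U)_{P^\lambda}}=\lambda\,2^{(\alpha-1)I_\alpha(X;Y|U)_P}+(1-\lambda)\le 2^{(\alpha-1)I_\alpha(X;Y|U)_P}$, which is all the $\subseteq$ direction requires; your claimed concavity of $\gamma_\alpha$ does not follow from this but is also never used. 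Second, for \eqref{XM2} your own reasoning (``take $H(X|U)_P$ arbitrarily large above $R_1$'') delivers $\kappa_1^s(R_1)=\max_{R\ge R_1}\gamma_1(R)$, not $\max_{R\le R_1}\gamma_1(R)$; the printed $\le$ in the lemma statement appears to be a typo (consistent with the horizontal green line in Fig.~\ref{F-graph} and with the paper's own \eqref{XZ2}), so you should derive and state the $\ge$ version rather than contort the argument to match the printed inequality.
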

Lemma \ref{LL2} is shown in Appendix \ref{A-LL2}.
For the analysis on the above regions, 
we define the functions;
\begin{align}
\gamma_{1,o}(R_1) &:=\max_{P \in {\cal P}( {\cal X})}
\{ H(X|Y)_P| H(X)_P=R_1 \}
 \\
\gamma_{\alpha,o}(R_1) &:=
\max_{P \in {\cal P}( {\cal X})}
\{ R_1- I_\alpha(X;Y)_P
| H(X)_P=R_1 \}.
\end{align}
Then, we have the following lemma.
\begin{lem}\Label{LL3}
When $\gamma_{1,o}$ is a concave function, 
we have $\gamma_{1}(R_1)=\gamma_{1,o}(R_1)$.
When $\gamma_{\alpha,o}$ is a concave function, 
we have $\gamma_{\alpha}(R_1)=\gamma_{\alpha,o}(R_1)$.
\end{lem}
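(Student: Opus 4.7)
The plan is to prove both equalities by the same mechanism: express the defining conditional quantities as $P_U$-convex combinations of their unconditional counterparts evaluated at $P_{X|U=u}$, and then apply Jensen's inequality through the hypothesized concavity of $\gamma_{1,o}$ or $\gamma_{\alpha,o}$.

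First, the inequality $\gamma_1(R_1) \ge \gamma_{1,o}(R_1)$ is immediate: given any $P \in {\cal P}({\cal X})$ achieving $\gamma_{1,o}(R_1)$, view it as an element of ${\cal P}({\cal U}\times{\cal X})$ with $|{\cal U}|=1$, which is feasible in the optimization defining $\gamma_1(R_1)$. For the reverse inequality, take any $P \in {\cal P}({\cal U}\times{\cal X})$ with $H(X|U)_P = R_1$, and let $P_u := P_{X|U=u}$. By the decomposition \eqref{BNI},
\begin{align}
H(X|U)_P &= \sum_{u} P_U(u)\, H(X)_{P_u}, \\
H(X|YU)_P &= \sum_{u} P_U(u)\, H(X|Y)_{P_u}.
\end{align}
Since each $H(X|Y)_{P_u} \le \gamma_{1,o}(H(X)_{P_u})$ by the definition of $\gamma_{1,o}$, and since $\gamma_{1,o}$ is concave by assumption, Jensen's inequality gives
\begin{align}
H(X|YU)_P &\le \sum_{u} P_U(u)\, \gamma_{1,o}(H(X)_{P_u}) \nonumber\\
&\le \gamma_{1,o}\Big(\sum_{u} P_U(u)\, H(X)_{P_u}\Big) = \gamma_{1,o}(R_1).
\end{align}
Taking the supremum over feasible $P$ yields $\gamma_1(R_1) \le \gamma_{1,o}(R_1)$, so equality holds.

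For the R\'enyi case the same argument applies once we observe that $I_\alpha(X;Y|U)_P$ decomposes as $\sum_u P_U(u)\, I_\alpha(X;Y)_{P_u}$ directly from its definition (the minimization over $Q$ inside the sum reduces, for each $u$, to $I_\alpha(X;Y)_{P_u}$). Consequently,
\begin{align}
R_1 - I_\alpha(X;Y|U)_P = \sum_{u} P_U(u)\,\big(H(X)_{P_u} - I_\alpha(X;Y)_{P_u}\big),
\end{align}
each summand being bounded by $\gamma_{\alpha,o}(H(X)_{P_u})$. Applying Jensen to the concave $\gamma_{\alpha,o}$ and invoking the constraint $\sum_u P_U(u) H(X)_{P_u} = R_1$ gives $\gamma_\alpha(R_1) \le \gamma_{\alpha,o}(R_1)$; the reverse inequality is again obtained with $|{\cal U}|=1$.

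There is no real obstacle in the argument; the only point requiring some care is the verification that $I_\alpha(X;Y|U)_P$ factorizes cleanly over $u$ as a convex combination of the unconditional $I_\alpha(X;Y)_{P_u}$, so that the subtraction $R_1 - I_\alpha(X;Y|U)_P$ really does split as a $P_U$-expectation whose summands lie under the graph of $\gamma_{\alpha,o}$. Once this is in place, the lemma is a direct consequence of Jensen's inequality, and in fact shows more generally that $\gamma_1$ and $\gamma_\alpha$ are always the concave envelopes of $\gamma_{1,o}$ and $\gamma_{\alpha,o}$, respectively.
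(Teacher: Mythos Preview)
Your proof follows the same Jensen-based strategy as the paper's, and for the $\gamma_1$ case it is identical (the paper only says ``follows from \eqref{BNI}'' and you spell out exactly that). For the R\'enyi case there is one discrepancy worth flagging: you assert the \emph{linear} decomposition $I_\alpha(X;Y|U)_P=\sum_u P_U(u)\,I_\alpha(X;Y)_{P_u}$, which does match the displayed definition, but the paper elsewhere (and in its own proof of this lemma) uses the \emph{exponential} relation $2^{(\alpha-1)I_\alpha(X;Y|U)_P}=\sum_u P_U(u)\,2^{(\alpha-1)I_\alpha(X;Y)_{P_u}}$. Under that reading one only has the inequality $I_\alpha(X;Y|U)_P\ge\sum_u P_U(u)\,I_\alpha(X;Y)_{P_u}$ (via concavity of $-\log$), which is precisely the paper's step $(a)$; after that your argument and the paper's coincide. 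Either interpretation yields the lemma, so the proof is correct; just be aware that the equality you claim for the R\'enyi decomposition may need to be weakened to an inequality depending on which of the paper's two formulations of $I_\alpha(X;Y|U)_P$ is taken as primary.
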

Lemma \ref{LL3} is shown in Appendix \ref{A-LL3}.
Using these two lemmas, we numerically calculate
the regions
$\overline{{\cal C}^{1,3}}$,
$\overline{{\cal C}^{s,1,3}}$, and
$\overline{{\cal C}_{\alpha}^{1,3}}$ as Fig. \ref{F-graph}.

\begin{figure}[t]
\begin{center}
  \includegraphics[width=0.98\linewidth]{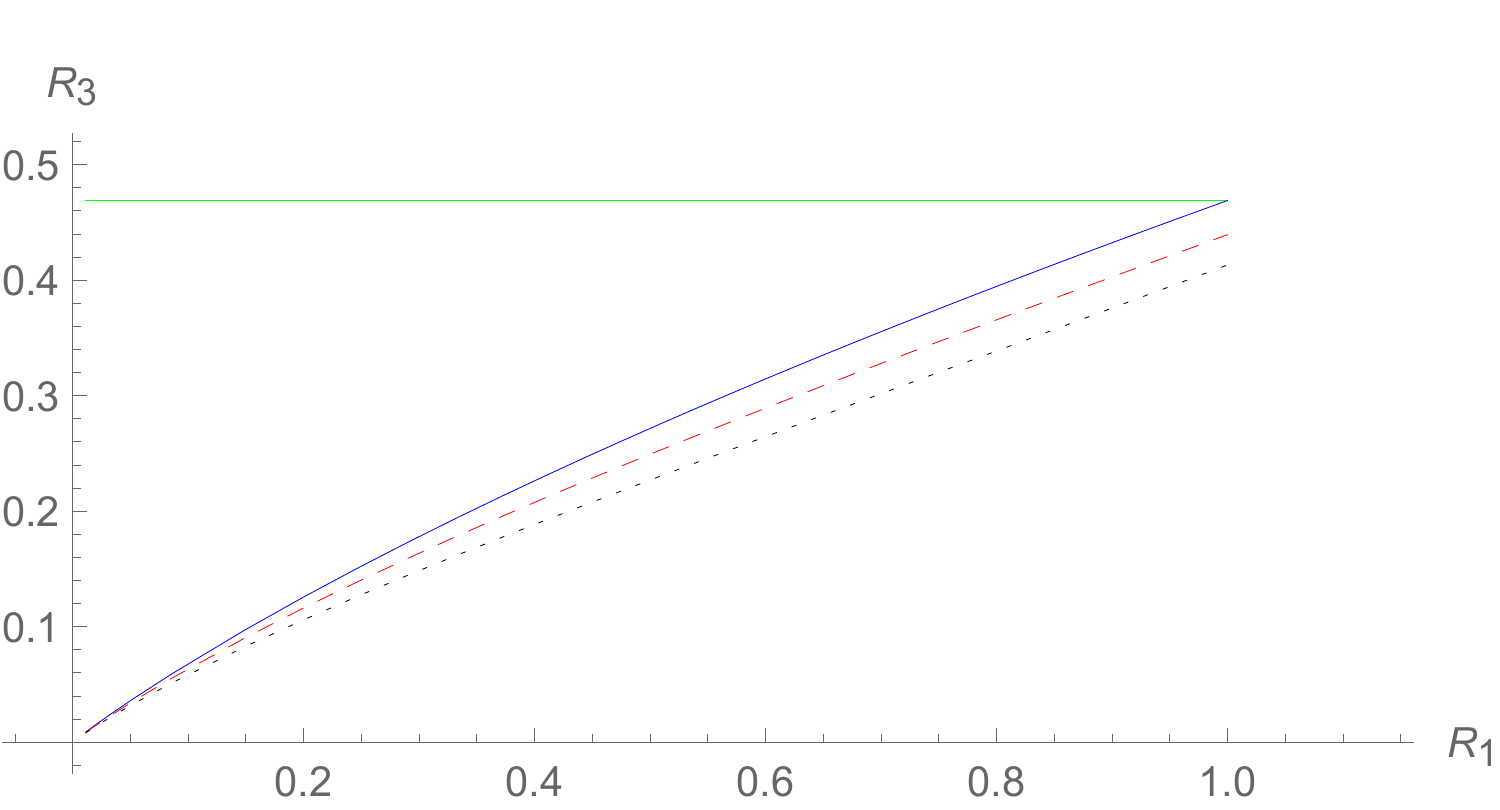}
  \end{center}
\caption{Numerical plots for 
$\overline{{\cal C}^{1,3}}$, 
$\overline{{\cal C}^{s,1,3}}$ and $\overline{{\cal C}_{\alpha}^{1,3}}$
under the binary symmetric channel with cross over probability $0.1$.
Green normal horizontal line expresses the upper bound of $\overline{{\cal C}^{s,1,3}}$.
Blue normal line expresses the upper bound of $\overline{{\cal C}^{1,3}}$.
Red dashed line expresses the upper bound of $\overline{{\cal C}_{1.1}^{1,3}}$.
Black dotted line expresses the upper bound of $\overline{{\cal C}_{1.2}^{1,3}}$.
Other bounds of $\overline{{\cal C}_{1,3}}$ and $\overline{{\cal C}_{\alpha,1,3}}$ are $R_1=1$ and $R_3=0$.
We numerically checked that 
 $\gamma_{1,o}$, $\gamma_{1.1,o}$, and $\gamma_{1.2,o}$
satisfy the condition in Lemma \ref{LL3}.}
\Label{F-graph}
\end{figure}

We also define the quantities;
\begin{align}
C&:=\sup_{(R_1,R_2,R_3)\in {\cal C}}  R_3,\quad
C^s:=
\sup_{(R_1,R_2,R_3)\in {\cal C}^s}  R_3,\\
C_\alpha&:=
\sup_{(R_1,R_2,R_3)\in {\cal C}_\alpha}  R_3.
\end{align}
Then, using \eqref{XM2} and \eqref{XM1}, we have the following lemma.
\begin{lem}\Label{LL1}
\begin{align}
C&=C^s=\max_{P \in {\cal P}({\cal X})} H(X|Y)_P, \Label{BKO}\\
C_\alpha&=\max_{P \in {\cal P}({\cal X})}H(X)_P - I_\alpha(X;Y)_P.
\end{align}
\hfill $\square$\end{lem}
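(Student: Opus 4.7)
The plan is to translate each supremum into the pointwise characterization of the corresponding region given by Lemma \ref{LL2}, and then eliminate the auxiliary variable $U$ by a simple convexity argument.

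For $C$, taking closure does not change the supremum, so $C = \sup_{(R_1,R_3)\in \overline{{\cal C}^{1,3}}} R_3$. Using \eqref{XM1}, this becomes $\sup_{R_1} \gamma_1(R_1)$. Because $R_1$ is a free variable in the outer supremum, the constraint $H(X|U)_P = R_1$ hidden inside $\gamma_1(R_1)$ effectively disappears, so
\begin{align}
C = \sup_{P\in {\cal P}({\cal U}\times {\cal X})} H(X|YU)_P.
\end{align}
By the decomposition \eqref{BNI}, $H(X|YU)_P = \sum_u P_U(u)\, H(X|Y)_{P_{X|U=u}}$ is a convex combination, and therefore is maximized by collapsing $U$ to a single value, namely the maximizer on ${\cal P}({\cal X})$. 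This yields $C = \max_{P\in {\cal P}({\cal X})} H(X|Y)_P$.

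For $C^s$, I would apply \eqref{XM2} to obtain $C^s = \sup_{R_1} \max_{R \le R_1}\gamma_1(R) = \sup_R \gamma_1(R)$, which by exactly the argument above equals $\max_{P\in {\cal P}({\cal X})} H(X|Y)_P$, proving \eqref{BKO}.

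For $C_\alpha$, by \eqref{XM3} I get $C_\alpha = \sup_{R_1} \gamma_\alpha(R_1) = \sup_{P\in {\cal P}({\cal U}\times {\cal X})} \bigl[H(X|U)_P - I_\alpha(X;Y|U)_P\bigr]$. The key observation is that both terms decompose as a $P_U$-mixture: $H(X|U)_P = \sum_u P_U(u)\, H(X)_{P_{X|U=u}}$ and, directly from the defining formula, $I_\alpha(X;Y|U)_P = \sum_u P_U(u)\, I_\alpha(X;Y)_{P_{X|U=u}}$, since the minimization over $Q$ inside $I_\alpha(X;Y|U)$ acts on each $u$-fiber independently. Thus the objective is again a convex combination, which is maximized by a degenerate $U$, giving $C_\alpha = \max_{P\in {\cal P}({\cal X})} \bigl[H(X)_P - I_\alpha(X;Y)_P\bigr]$.

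The main obstacle is only bookkeeping: one must confirm that the constraint in the definitions of $\gamma_1$ and $\gamma_\alpha$ is harmless after taking the outer supremum over $R_1$, and that the fiberwise decomposition of $I_\alpha(X;Y|U)_P$ follows from the definition without any regularity hypothesis on the channel $\bW$. Neither step is delicate, so the lemma reduces to invoking Lemma \ref{LL2} and a convexity (degenerate-$U$) argument.
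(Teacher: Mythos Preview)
Your proposal is correct and follows essentially the same approach the paper intends: the paper merely states ``using \eqref{XM2} and \eqref{XM1}, we have the following lemma,'' and your argument fills in exactly those details---reduce each supremum to $\sup_{R_1}\gamma_1(R_1)$ (resp.\ $\gamma_\alpha$) via Lemma~\ref{LL2}, drop the constraint $H(X|U)_P=R_1$ under the outer supremum, and collapse $U$ using the fiberwise decompositions \eqref{BNI} and the definition of $I_\alpha(X;Y|U)_P$. Nothing is missing.
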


\section{Results for secure list decoding with discrete input}\Label{S51}
\subsection{Statements of results}
To give the capacity region, 
we consider $n$-fold discrete memoryless extension $\bW^n$ of the channel $\bW$.
A sequence of codes $\{(\phi_n,D_n)\}$ is called strongly secure 
when 
$\epsilon_A(\phi_n,D_n) $ and
$\delta_{D}(D_n) $ approach to zero.
A sequence of codes $\{(\phi_n,D_n)\}$ is called weakly secure 
when 
$\epsilon_A(\phi_n,D_n) $ and
$\delta_C(\phi_n,D_n) $ approach to zero.
A rate triple $(R_1,R_2,R_3)$ is strongly deterministically (stochastically) achievable when
there exists a strongly secure sequence of deterministic (stochastic) codes $\{(\phi_n,D_n)\}$
such that 
$\frac{1}{n}\log |(\phi_n,D_n)|_1$ approaches to $ R_1$,  
$\frac{1}{n}\log |(\phi_n,D_n)|_2$ approaches to $R_2$\footnote{The definitions of 
$|(\phi_n,D_n)|_1$ and $|(\phi_n,D_n)|_2$ are given in the end of Section \ref{S2-1B}.}, and
$\lim_{n \to \infty}\frac{1}{n}E(\phi_n) \ge R_3$.
A rate triple $(R_1,R_2,R_3)$ is $\alpha$-strongly deterministically (stochastically) achievable when
there exists a strongly secure sequence of deterministic (stochastic) codes $\{(\phi_n,D_n)\}$
such that 
$\frac{1}{n}\log |(\phi_n,D_n)|_1$ approaches to $ R_1$,  
$\frac{1}{n}\log |(\phi_n,D_n)|_2$ approaches to $R_2$, and
$\lim_{n \to \infty}\frac{1}{n}E_\alpha(\phi_n) \ge R_3$.
A rate triplet $(R_1,R_2,R_3)$ is ($\alpha$-)weakly deterministically (stochastically) achievable when
there exists a weakly secure sequence of deterministic (stochastic) codes $\{(\phi_n,D_n)\}$
such that 
$\frac{1}{n}\log |(\phi_n,D_n)|_1$ approaches to $R_1$,
$\frac{1}{n}\log |(\phi_n,D_n)|_2$ approaches to $R_2$, and
$\lim_{n \to \infty}\frac{1}{n}E(\phi_n) \ge R_3$
($\lim_{n \to \infty}\frac{1}{n}E_\alpha(\phi_n) \ge R_3$).
Then, we denote the set of strongly deterministically (stochastically) achievable rate triple 
$(R_1,R_2,R_3)$ by ${\cal R}_{(s,d)}^L$ (${\cal R}_{(s,s)}^L$).
In the same way, we denote the set of weakly deterministically (stochastically) achievable rate triple 
$(R_1,R_2,R_3)$ by ${\cal R}_{(w,d)}^L$ (${\cal R}_{(w,s)}^L$).
The $\alpha$-version with $\alpha>1$ is denoted by 
${\cal R}_{(s,d)}^{L,\alpha}$, ${\cal R}_{(s,s)}^\alpha$,
${\cal R}_{(w,d)}^{L,\alpha}$, and ${\cal R}_{(w,s)}^\alpha$, respectively.

As outer bounds of ${\cal R}_{(w,d)}^L$, ${\cal R}_{(s,s)}^L$, and ${\cal R}_{(s,d)}^L$, we have the following theorem.
\begin{thm}\Label{Converse}
We have the following characterization.
\begin{align}
{\cal R}_{(w,d)}^L  \subset \overline{{\cal C}}
,\quad 
{\cal R}_{(s,s)}^L \subset \overline{{\cal C}^s} 
,\quad 
{\cal R}_{(s,d)}^L \subset \overline{{\cal C}} ,\Label{Con2} 
\end{align}
where $ \overline{{\cal C}}$ expresses the closure of the set ${\cal C}$.
\hfill $\square$\end{thm}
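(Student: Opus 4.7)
The plan is to prove each inclusion by a standard single-letterization converse: for any sequence of codes $(\phi_n,D_n)$ achieving $(R_1,R_2,R_3)$, I would identify a joint distribution $P\in{\cal P}({\cal U}\times{\cal X})$ so that the rate triple satisfies the defining inequalities of the target region. I would introduce a time-sharing random variable $T$ uniform on $\{1,\ldots,n\}$, set $X := X_T$, $Y := Y_T$, and take $U := (T, Y^{T-1})$, because memorylessness of $\bW^n$ gives the useful identity
\begin{align*}
I(X;Y|U)_P = \frac{1}{n}\sum_{i=1}^n I(X_i;Y_i|Y^{i-1}) = \frac{1}{n} I(X^n;Y^n),
\end{align*}
which is crucial for matching the tight deterministic bound.

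The first constraint $R_1 - R_2 < I(X;Y|U)_P$ follows from a Fano-type inequality for list decoding: verifiable condition (A) gives $H(M|Y^n) \le h(\epsilon_A) + (1-\epsilon_A)\log\sL + \epsilon_A\log\sM = nR_2 + o(n)$, hence $I(M;Y^n) \ge nR_1 - nR_2 - o(n)$, and data processing (aided by the binding condition in the stochastic case) then yields the bound. The constraint $R_1 < H(X|U)_P$ comes from $H(M) \le H(X^n) + H(M|X^n)$ combined with the key fact that the binding condition forces $H(M|X^n) = o(n)$: (D) together with (A) implies that for typical $x^n$, Bob's list contains at most one message with non-negligible probability, so the posterior $P_{M|X^n}$ is essentially deterministic.

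The equivocation constraint is handled differently in the two regions. For ${\cal C}$ (deterministic encoder, applicable to ${\cal R}_{(w,d)}^L$ and ${\cal R}_{(s,d)}^L$), I use that an injective deterministic encoder gives $H(M|Y^n) = H(X^n|Y^n) = nR_1 - I(X^n;Y^n)$, which combined with the choice of $U$ yields $R_3 \le R_1 - I(X;Y|U)_P + o(1)$ directly. For ${\cal C}^s$ (stochastic, strong), I decompose $H(M|Y^n) \le H(X^n|Y^n) + H(M|X^n)$, use $H(M|X^n) = o(n)$ as above, and bound $H(X^n|Y^n) = \sum_i H(X_i|X^{i-1},Y^n) \le \sum_i H(X_i|Y_i) = n H(X|YU)_P$ by the chain rule and conditioning-reduces-entropy.

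The main obstacle I foresee is establishing $H(M|X^n) = o(n)$ from the binding condition, because (D) speaks only about list probabilities under the channel, not directly about encoder structure. The argument must show that the parameters $\delta_D$ and $\epsilon_A$ jointly force $P_{M|X^n}$ to concentrate on a single atom for typical $x^n$, which is what effectively collapses the stochastic encoder to a deterministic one from the security standpoint. A parallel subtlety arises for the weak case ${\cal R}_{(w,d)}^L \subset \overline{{\cal C}}$, where only (C) (honest Alice binding) is available; there the injectivity of the deterministic encoder, combined with (A) and (C), should suffice to make the deterministic chain of equalities above go through without invoking (D).
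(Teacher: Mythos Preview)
Your overall scaffold is the same as the paper's: the auxiliary choice $U=(T,Y^{T-1})$ and the resulting identity $I(X;Y|U)_P=\tfrac1n I(X^n;Y^n)$ are exactly what the paper uses (its Lemma~4). Two of your sub-arguments, however, differ from the paper's, and it is worth knowing where.

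For $R_1-R_2$, you invoke the list-decoding Fano bound $H(M|Y^n)\le h(\epsilon_A)+(1-\epsilon_A)\log\sL+\epsilon_A\log\sM$, whereas the paper goes through a meta converse: it tests $P(m,y^n)=\tfrac1{\sM}W^n(y^n|\phi_n(m))$ against the product $Q(m,y^n)=\tfrac1{\sM^2}\sum_{m'}W^n(y^n|\phi_n(m'))$ using the acceptance region $\bigcup_{\vec m}\{m_1,\dots,m_{\sL}\}\times{\cal D}_{\vec m}$, and reads off $\log(\sM/\sL)\le\tfrac{I(X^n;Y^n)+h(1-\epsilon_A)}{1-\epsilon_A}$. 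Both are standard and yield the same asymptotic constraint.

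For $R_1\le H(X|U)_P$ (equivalently $H(M|X^n)=o(n)$), your plan is to observe that (A)+(D) force each $x^n$ to admit at most one message $m^*(x^n)$ with list-inclusion probability above $\delta_D$, so Fano applied to the estimator $m^*(X^n)$ gives $H(M|X^n)\le h(p)+p\log\sM$ with $p\le\tfrac{\epsilon_A}{1-\delta_D}$; this is correct. The paper instead uses a repetition trick: Alice sends the \emph{same} $X^n$ through $\bW^n$ independently $\sK$ times, Bob applies the list decoder $\sK$ times and outputs the most frequent list member; condition (C) alone guarantees the true message wins the vote as $\sK\to\infty$, and Fano gives $\log\sM\le I(M;X^n)+\log 2$. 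The repetition argument buys the paper a uniform treatment of the weak-deterministic case (only (C) available) and the strong-stochastic case (where it is reused after first extracting a deterministic $\phi_n'$ with $\delta_C(\phi_n',D_n)\le\delta_D(D_n)$); your route handles (D) directly and falls back on injectivity in the weak-(C) deterministic case, which is equally valid.

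One small slip to fix: in your ${\cal C}^s$ chain you wrote $\sum_i H(X_i|X^{i-1},Y^n)\le\sum_i H(X_i|Y_i)=nH(X|YU)_P$. With $U=(T,Y^{T-1})$, conditioning on $(Y,U)$ is conditioning on $Y^T$, so the correct single-letter term is $H(X_i|Y^i)$, not $H(X_i|Y_i)$; the inequality $H(X_i|X^{i-1},Y^n)\le H(X_i|Y^i)$ still holds and gives exactly $nH(X|YU)_P$.
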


For their inner bounds, we have the following theorem.
\begin{thm}\Label{TH3}
Assume the condition (W2).
(i) A rate triplet $(R_1,R_2,R_3)$ is strongly deterministically achievable
when 
there exists a distribution $P \in {\cal P}(\cX)$ such that 
\begin{align}
0 < R_1-R_2 &< I(X;Y)_P ,\Label{NHOA}\\
R_1 & < H(X)_P, \Label{NHOB}\\
R_3& \le R_1-I(X;Y)_P \Label{NHO}.
\end{align}
(ii) A rate triplet $(R_1,R_2,R_3)$ is $\alpha$-strongly deterministically achievable
when 
there exists a distribution $P \in {\cal P}(\cX)$ such that 
\begin{align}
0 < R_1-R_2 &< I(X;Y)_P, \Label{NHO2A}\\
 R_1 & < H(X)_P , \Label{NHO2B}\\
 R_3 & \le R_1- I_\alpha(X;Y)_P 
 \Label{NHO2}.
\end{align}
\hfill $\square$\end{thm}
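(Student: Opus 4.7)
The plan is to establish both (i) and (ii) by a single random-coding construction driven by the input distribution $P$. Fix $P \in \mathcal{P}(\mathcal{X})$ satisfying the stated inequalities, draw $\sM = \lceil 2^{nR_1} \rceil$ codewords $\phi_n(1),\ldots,\phi_n(\sM)$ independently from $P^n$, and employ an information-density list decoder: include $m$ in $\Psi_n(y^n)$ whenever $\log \frac{w_{\phi_n(m)}^n(y^n)}{w_P^n(y^n)} > \tau_n$, with threshold $\tau_n \approx n I(X;Y)_P$. Truncate (or tune $\tau_n$) so that every list has size exactly $\sL = \lceil 2^{nR_2} \rceil$; the strict inequality $R_1 - R_2 < I(X;Y)_P$ keeps the typical list size below $\sL$, so the truncation removes only a vanishing fraction of mass and does not harm (A).

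Condition (A) is then routine: when $X^n = \phi_n(m)$, the information density has mean $\approx n I(X;Y)_P$ under $P^n \times W$, so by the law of large numbers it exceeds $\tau_n$ with probability tending to one, uniformly in $m$. For the equivocation condition (B), since $M$ is uniform we have $H(M|Y^n) = nR_1 - I(M;Y^n)$, and a direct random-coding computation gives $\mathbb{E}[I(M;Y^n)]/n \to I(X;Y)_P$, so $E(\phi_n)/n \to R_1 - I(X;Y)_P$ and \eqref{NHO} holds. For part (ii), the same codebook is plugged into the definition of $E_\alpha(\phi_n)$; expressing $2^{-(\alpha-1) E_\alpha(\phi_n)}$ through the Sibson-optimal $Q$ as a product of i.i.d.\ per-letter contributions and concentrating around its expectation via Markov/Jensen gives $E_\alpha(\phi_n)/n \to R_1 - I_\alpha(X;Y)_P$, i.e., \eqref{NHO2}.

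The main obstacle is condition (D), which must hold uniformly over \emph{all} $x^n \in \mathcal{X}^n$, not just over random codewords. Fix $x^n$ and let $Y^n \sim W_{x^n}^n$; for any $m$ whose codeword $\phi_n(m)$ is drawn independently of $x^n$, a change-of-measure bound shows $\phi_n(m) \in \Psi_n(Y^n)$ with probability at most $2^{-\tau_n + o(n)}$, so the expected number of cheating codewords in the list is at most $\sM \cdot 2^{-\tau_n + o(n)} = 2^{n(R_1 - I(X;Y)_P) + o(n)}$. To promote this expectation to the simultaneous-in-$x^n$ bound demanded by (D), one needs a Chernoff concentration whose exponent exceeds $n \log |\mathcal{X}|$, so that a union bound over all $x^n$ survives; the slack provided by $R_3 \le R_1 - I(X;Y)_P$ and $R_1 < H(X)_P$, together with the non-redundancy assumption (W2), supplies this exponent. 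Once (A), (D), and the target equivocation each hold in expectation with exponentially small failure probability, Markov's inequality applied to the three failure events, followed by expurgation, selects a deterministic codebook satisfying all three conditions simultaneously, completing both (i) and (ii).
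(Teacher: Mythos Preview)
Your treatment of (A) and of the equivocation (B)/(B$\alpha$) is essentially the paper's: random codewords from $P^n$, an information-density threshold, and for (B$\alpha$) the expectation of $\frac{1}{\sM}\sum_m 2^{(\alpha-1)D_\alpha(W_{\phi_n(m)}\|Q_{P,\alpha}^n)}$ computed against the Sibson-optimal $Q$ (this is the paper's Lemma~\ref{LL19}, after which Markov plus expurgation fixes a good codebook). The paper also derives (i) from (ii) by letting $\alpha\to 1$, rather than arguing (B) separately.

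The genuine gap is (D). What (D) demands is that for \emph{every} $x^n\in\mathcal{X}^n$ the \emph{second-largest} value of $\Pr[m\in\Psi_n(Y^n)\mid X^n=x^n]$ over $m$ be small. You instead bound the \emph{expected number of messages in the list}, $\sum_m \Pr[m\in\Psi_n(Y^n)\mid X^n=x^n]$; your own computation shows this is $2^{n(R_1-I(X;Y)_P)+o(n)}$, which is exponentially \emph{large} (necessarily, since the list has size $2^{nR_2}$). Concentrating that quantity and union-bounding over $x^n$ says nothing about whether two \emph{particular} messages can simultaneously have probability close to $1$. With an information-density-only decoder this can in fact happen: if $W_{x}$ and $W_{x'}$ are close (still allowed under (W2)), two codewords that are Hamming-far can both pass the threshold under a single adversarial $x^n$.

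The paper handles (D) by a different mechanism that you are missing. Using (W2) it constructs auxiliary functions $\{\xi_x\}_{x\in\mathcal{X}}$ with $\mathbb{E}_x[\xi_x(Y)]=0$, $\min_{x'\neq x}\mathbb{E}_{x'}[-\xi_x(Y)]>0$, and bounded variance (Lemma~\ref{LS3}), and adds the extra acceptance rule $\xi_{\phi_n(m)}(y^n)\ge -n\epsilon_1$ to the decoder (Definition~\ref{Def1}). Separately, it expurgates so that all surviving codewords satisfy $d_H(\phi_n(m),\phi_n(m'))>n\epsilon_2$; this is where $R_1<H(X)_P$ is actually used (Lemma~\ref{LL11}). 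Then for any $x^n$ and any pair $m\neq m'$, the triangle inequality forces $\max\{d_H(x^n,\phi_n(m)),d_H(x^n,\phi_n(m'))\}>n\epsilon_2/2$, and Chebyshev on $\xi$ shows the farther codeword fails the $\xi$-test with probability $1-O(1/n)$ (Lemma~\ref{LL12}), giving $\delta_D(D)\to 0$ uniformly in $x^n$. Your sketch invokes $R_1<H(X)_P$ and (W2) only nominally; without the $\xi$-test (or an equivalent device) and the minimum-distance expurgation, the argument for (D) does not go through.
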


In fact, 
the condition $R_1-R_2< I(X;Y)_P$ corresponds to Verifiable condition (A),
the condition $I(X;Y)_P \le R_1-R_3$ ($I_\alpha(X;Y)_P \le R_1-R_3 $) does to 
(R\'{e}nyi) equivocation type of concealing condition (B),
and
the condition $R_1 < H(X)_P $ does to the binding condition for dishonest Alice (D).
Theorems \ref{Converse} and \ref{TH3} are shown in Sections \ref{S-C} and \ref{S-K}, respectively.
We have the following corollaries from Theorems \ref{Converse} and \ref{TH3}. 

\begin{corollary}\Label{Cor46}
When Condition (W2) holds,
we have the following relation for $G\in \{(s,d),(w,d)\}$;
\begin{align}
\overline{{\cal R}_{G}^L}=\overline{{\cal C}}
\Label{TDAB}
\end{align}
and
\begin{align}
{\cal C}_{\alpha} \subset{\cal R}_{G}^{L,\alpha}.
\Label{TDAB5}
\end{align}
\hfill $\square$
\end{corollary}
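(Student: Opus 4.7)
The plan is to combine the outer bound from Theorem~\ref{Converse} with the single-letter inner bound of Theorem~\ref{TH3} via a block-wise time-sharing argument. The inclusion $\overline{{\cal R}_G^L} \subset \overline{{\cal C}}$ for $G \in \{(s,d),(w,d)\}$ is immediate from Theorem~\ref{Converse}. I first observe that $(s,d)$-achievability implies $(w,d)$-achievability: whenever $\epsilon_A(\phi_n,D_n)$ and $\delta_D(D_n)$ both vanish, eventually $\delta_D(D_n) < 1-\epsilon_A(\phi_n,D_n)$, so the true message $m$ is the unique argmax of $\{\Pr[m' \in \Psi(Y)|X=\phi(m)]\}_{m'}$. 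In that regime $\delta_{C,m}(\phi(m),D_n) = \delta_{D,\phi(m)}(D_n) \le \delta_D(D_n) \to 0$, hence ${\cal R}_{(s,d)}^L \subset {\cal R}_{(w,d)}^L$. It therefore suffices to show ${\cal C} \subset {\cal R}_{(s,d)}^L$ and ${\cal C}_\alpha \subset {\cal R}_{(s,d)}^{L,\alpha}$; taking closures and combining with Theorem~\ref{Converse} then yields \eqref{TDAB} and \eqref{TDAB5}.

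To prove ${\cal C} \subset {\cal R}_{(s,d)}^L$, fix $(R_1,R_2,R_3) \in {\cal C}$ realized by $P \in {\cal P}({\cal U}\times{\cal X})$ with $|{\cal U}|\le 3$ (Caratheodory, as noted after the definition of ${\cal C}$). Let $\lambda_u := P_U(u)$ and $P_u := P_{X|U=u}$. Using the additive decompositions $I(X;Y|U)_P = \sum_u \lambda_u I(X;Y)_{P_u}$ and $H(X|U)_P = \sum_u \lambda_u H(X)_{P_u}$, choose per-letter rates $(r_1(u),r_2(u),r_3(u))$ that satisfy the single-letter conditions \eqref{NHOA}--\eqref{NHO} of Theorem~\ref{TH3}(i) for $P_u$ and average to $(R_1,R_2,R_3)$. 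Partition $n$ channel uses into blocks of lengths $n_u \approx n\lambda_u$, apply Theorem~\ref{TH3}(i) on block $u$ with $P_u$ at rates $(r_1(u),r_2(u),r_3(u))$, and concatenate: the product code has message $M=(M_u)_u$, encoder $\phi(m)=(\phi_u(m_u))_u$, and decoder returning the Cartesian product of the per-block lists. The identical construction based on Theorem~\ref{TH3}(ii) handles ${\cal C}_\alpha \subset {\cal R}_{(s,d)}^{L,\alpha}$, since the same additive decomposition applies to $2^{(\alpha-1)I_\alpha(X;Y|U)_P}$ as noted in the excerpt.

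The remaining work is to verify that the product code inherits the security parameters. Verifiable condition (A) follows by a union bound across blocks, $\epsilon_A(\phi,D) \le \sum_u \epsilon_A^{(u)} \to 0$. For (B), the blocks are independent so $H(M|Y) = \sum_u H(M_u|Y_u) \ge n R_3 + o(n)$; for (B$\alpha$), the additivity \eqref{AOR} of the Arimoto--R\'enyi entropy under product distributions gives the same conclusion. The step I expect to require the most care is Binding condition (D) for the product code, since it must hold against an arbitrary (possibly dishonest) input $x=(x_u)_u \in {\cal X}^n$. For such an $x$, the probability that a tuple $m=(m_u)_u$ lies in the product list factorizes as $\prod_u p_u(m_u)$ with $p_u(m_u) := \Pr[m_u \in \Psi_u(Y_u) \mid X_u=x_u]$. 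Let $a_u$ and $b_u$ denote the largest and second-largest values of $p_u(\cdot)$. Any non-maximizing tuple differs from the argmax in at least one coordinate $u^*$, and its value is bounded by $b_{u^*} \prod_{u \ne u^*} a_u \le \max_u b_u \le \max_u \delta_D^{(u)}$, which tends to zero since each sub-code satisfies (D). Hence $\delta_D(D) \to 0$ for the product code, completing the argument.
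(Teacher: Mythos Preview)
Your proof is correct and follows essentially the same route as the paper: both invoke Theorem~\ref{Converse} for the outer bound and establish the direct part by block-wise time-sharing over the auxiliary variable $U$, concatenating per-block codes from Theorem~\ref{TH3}, with $\epsilon_A$ combined by a union bound, equivocation rates by additivity~\eqref{AOR}, and the binding parameter of the product decoder controlled by the worst sub-block. The only cosmetic differences are that the paper deduces \eqref{TDAB} from \eqref{TDAB5} via $\alpha\to 1$ rather than re-running the time-sharing for part~(i), and your bound $\delta_D(D)\le\max_u \delta_D^{(u)}$ is the careful statement of what the paper writes (with a harmless sign slip) as a $\min$.
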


Hence, even when our binding condition is relaxed to Condition (C), 
when our code is limited to deterministic codes,
we have the same region as the case with Condition (D).

\begin{proof}
It is sufficient to show the direct part.
For this aim, we notice that the following relation for $\alpha >\alpha'>1$;
\begin{align}
{{\cal R}_{G}^{L,\alpha}}
\subset {{\cal R}_{G}^{L,\alpha'}},
\quad
\overline{\cup_{\alpha>1} {\cal C}_\alpha}=
\overline{{\cal C}}.
\end{align}
Hence, it is sufficient to show 
that there exists a strongly secure sequence of deterministic codes
with the rate triplet $(R_1,R_2,R_3)$ to satisfy 
\begin{align}
0 < R_1-R_2&< I(X;Y|U)_P, \Label{NPAA}\\
 R_1 &< H(X|U)_P ,\Label{NPAB}\\
 R_3 &\le R_1- I_\alpha(X;Y|U)_P \Label{NPA}
\end{align}
for a given $P\in {\cal P}({\cal X} \times {\cal U})$.
There exist
distributions $P_1, \ldots, P_{\sU} \in {\cal P}({\cal X})$ such that
${\cal U}=\{1, \ldots, \sU \} $ and $P_u(x)= \frac{P(x,u)}{P_U(u)}$ for $u \in {\cal U}$,
where $P_U(u)=\sum_{x'\in {\cal X}} P(x',u) $.
Then, we have
$\sum_{u \in {\cal U}}P_U (u)I(X;Y)_{P_u}=I(X;Y|U)_{P}$,
$\sum_{u \in {\cal U}}P_U (u)H(X)_{P_u}=H(X|U)_{P}$, and
$\sum_{u \in {\cal U}}P_U (u)I_\alpha(X;Y)_{P_u}=I_\alpha(X;Y|U)_{P}$.

For simplicity, in the following, we consider the case with $\sU=2$.
We choose a sequence $\{(\phi_{n,1},D_{n,1})\}$ ($\{(\phi_{n,2},D_{n,2})\}$) 
of strongly secure deterministic codes that achieve
the rates to satisfy \eqref{NHO2A}, \eqref{NHO2B}, and \eqref{NHO2} with $P=P_1 (P_2)$.
We denote $P_U(1)$ by $\lambda$.
Then, we define the concatenation $\{(\phi_{n},D_{n})\}$ 
as follows.
We assume that $\phi_{ \lfloor \lambda n \rfloor,1}$($\phi_{n-\lfloor \lambda n \rfloor,2}$) 
is a map from ${\cal M}_1$(${\cal M}_2$) to 
${\cal X}^{\lfloor \lambda n \rfloor}$
(${\cal X}^{n-\lfloor \lambda n \rfloor}$).
The encoder $\phi_{n} $ is given as a map 
from $(m_1,m_2)\in {\cal M}_1 \times {\cal M}_2$ to 
$(\phi_{\lfloor \lambda n \rfloor}(m_1),
\phi_{n-{\lfloor \lambda n \rfloor}}(m_2))\in {\cal X}^{n}$. 
The decoder $D_{n}$ is given as a map from 
${\cal Y}^{n}$ to ${\cal M}_1^{\sL_1} \times {{\cal M}_2}^{\sL_2}$ as
\begin{align}
&D_{n}(y_1, \ldots, y_{n})\nonumber \\
:=&(D_{\lfloor \lambda n \rfloor,1}(y_1, \ldots, y_{\lfloor \lambda n \rfloor}),
D_{n-\lfloor \lambda n \rfloor,2}
(y_{\lfloor \lambda n \rfloor+1}, \ldots, y_{n})) 
\end{align}
for $(y_1, \ldots, y_{n})\in {\cal Y}^{n}$.
We have 
$\epsilon_A(\phi_{n},D_{n}) \le \epsilon_A(\phi_{\lfloor \lambda n \rfloor,1},D_{\lfloor \lambda n \rfloor,1}) 
+ \epsilon_A(\phi_{n-\lfloor \lambda n \rfloor,2},D_{n-\lfloor \lambda n \rfloor,2}) $
because 
the code $(\phi_{n},D_{n})$ is correctly decoded when 
both codes $(\phi_{\lfloor \lambda n \rfloor,1},D_{\lfloor \lambda n \rfloor,2})$ 
and $(\phi_{n-\lfloor \lambda n \rfloor,2},D_{n-\lfloor \lambda n \rfloor,2})$ are correctly decoded.
Alice can cheat the decoder $D_{n}$
only when 
Alice cheats one of the decoders $D_{\lfloor \lambda n \rfloor,1}$ and $D_{n-\lfloor \lambda n \rfloor,2}$.
Hence,
$\delta_{D}(D_{n}) \le \min( \delta_{D}(D_{\lfloor \lambda n \rfloor,1}),
\delta_{D}(D_{n-\lfloor \lambda n \rfloor,2}))$.
Therefore, 
the concatenation $\{(\phi_{n},D_{n})\}$ is also strongly secure.

The rate tuples of the code $(\phi_{n},D_{n})$ is calculated as  
$|(\phi_{n},D_{n})|_i= |(\phi_{\lfloor \lambda n \rfloor,1},D_{\lfloor \lambda n \rfloor,1})|_i
+|(\phi_{n-\lfloor \lambda n \rfloor,2},D_{n-\lfloor \lambda n \rfloor,2})|_i$ for $i=1,2$.
Also, using the additivity property \eqref{AOR}, 
we have $E_\alpha(\phi_n)=E_\alpha(\phi_{\lfloor \lambda n \rfloor,1})
+E_\alpha(\phi_{n-\lfloor \lambda n \rfloor,2})$.
Hence, we have shown 
the existence of a strongly secure sequence of deterministic codes
with the rate triplet $(R_1,R_2,R_3)$ to satisfy the conditions
\eqref{NPAA}, \eqref{NPAB}, and \eqref{NPA}
when $\sU=2$.
For a general $\sU$, we can show the same statement by repeating 
the above procedure.
\end{proof}

\subsection{Outline of proof of direct theorem}\Label{OUT-K}
Here, we present the outline of the direct theorem (Theorem \ref{TH3}).
Since $\lim_{\alpha \to 1}I_\alpha(X;Y)_P=I (X;Y)_P$, 
the first part (i) follows from the second part (ii).
Hence, we show only the second part (ii) in Section \ref{S-K}
based on the random coding.
To realize Binding condition for dishonest Alice (D),
we need to exclude the existence of $x^n \in {\cal X}^n$ and $m\neq m' \in {\cal M}_n$
such that $1-\epsilon_{A,m}(x^n,D)$ and $1-\epsilon_{A,m'}(x^n,D)$ are far from 0.
For this aim, we 
focus on Hamming distance 
$d_H(x^n,{x^n}')$ between $x^n=(x_1^n, \ldots, x^n_n), {x^n}'=({x_1^n}', \ldots, {x^n_n}') \in {\cal X}^n$
as
\begin{align}
d_H(x^n,{x^n}'):= | \{  k|  x_k^n\neq {x_k^n}'\}|.
\end{align}
and introduce functions $\{\xi_x\}_{x \in {\cal X}}$ to satisfy the following conditions;
\begin{align}
&\mathbb{E}_x[\xi_x(Y)] =0,\Label{CS1}\\
&\zeta_1:=\min_{x\neq x' \in {\cal X}} \mathbb{E}_{x'}[-\xi_x(Y)] >0, \Label{CS2}\\
&\zeta_2 :=\max_{x, x' \in {\cal X}} \mathbb{V}_{x'}[\xi_x(Y)] < \infty.\Label{CS3}
\end{align}
For $x^n=(x_1^n, \ldots, x_n^n)\in {\cal X}^n$ and $y^n=(y_1^n, \ldots, y_n^n)\in {\cal Y}^n$, 
we define
\begin{align}
\xi_{x^n}(y^n):= \sum_{i=1}^n \xi_{x_i^n}(y_i^n).
\end{align}
Then, given an encoder $\phi_n$ mapping ${\cal M}_n$ to ${\cal X}^n$,
we impose the following condition on Bob's decoder to include the message $m$ in his decoded list;
the inequality 
\begin{align}
\xi_{\phi_n(m)}(Y^n) \ge - \epsilon_1 n \Label{CS4}
\end{align}
holds when $Y^n$ is observed.
The condition \eqref{CS4} guarantees that 
$1-\epsilon_{A,m}(x^n,D)$ is small when 
$d_H(x^n, \phi_n(m))$ is larger than a certain threshold.

As shown in Section \ref{S-K},
due to the conditions \eqref{CS1},  \eqref{CS2}, and \eqref{CS3}, 
the condition \eqref{CS4} guarantees that the quantity $\delta_D(D)$ is small.
Indeed, we have the following lemma, which is shown in Section \ref{PfL1}.
\begin{lem}\Label{LS3}
When the condition (W2) holds, there exist
functions $\{\xi_x\}_{x \in {\cal X}}$ to satisfy the conditions \eqref{CS1},  \eqref{CS2}, and \eqref{CS3}.
\hfill $\square$\end{lem}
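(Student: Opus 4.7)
The plan is to reduce the lemma to a Hahn--Banach separation statement: for each fixed $x \in {\cal X}$, I would produce a bounded measurable function $\eta_x: {\cal Y} \to \RR$ that strictly separates $W_x$ from the convex hull
\begin{align*}
\mathcal{K}_x := CH\{W_{x'}\}_{x' \in {\cal X} \setminus \{x\}},
\end{align*}
and then take $\xi_x(y) := \eta_x(y) - \mathbb{E}_x[\eta_x(Y)]$. The centering automatically gives \eqref{CS1}; strict separation gives \eqref{CS2}; boundedness of $\eta_x$ gives \eqref{CS3}.

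First, I would verify the separation hypothesis. Since $|{\cal X}|$ is finite, $\mathcal{K}_x$ is the image of the compact simplex ${\cal P}({\cal X} \setminus \{x\})$ under the affine map $P \mapsto \sum_{x' \neq x} P(x') W_{x'}$, hence is a compact convex set of probability measures. Condition (W2) asserts $\inf_{Q \in \mathcal{K}_x} D(Q\|W_x) > 0$, so in particular $W_x \notin \mathcal{K}_x$.

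Next, I would realize the separation concretely by working with densities. Pick the dominating probability measure $\mu_0 := |{\cal X}|^{-1} \sum_{x' \in {\cal X}} W_{x'}$ and set $f_{x'} := dW_{x'}/d\mu_0 \in L^1(\mu_0)$. The convex hull of $\{f_{x'}\}_{x' \neq x}$ is a compact convex subset of $L^1(\mu_0)$ not containing $f_x$. Applying the Hahn--Banach separation theorem in the $L^1$--$L^\infty$ duality produces $\eta_x \in L^\infty(\mu_0)$ and a constant $c_x$ with
\begin{align*}
\int \eta_x f_x \, d\mu_0 \;>\; c_x \;>\; \int \eta_x f_{x'} \, d\mu_0 \quad \text{for every } x' \neq x.
\end{align*}
Then the three required conditions follow routinely for $\xi_x := \eta_x - \mathbb{E}_x[\eta_x]$: \eqref{CS1} holds by definition; for \eqref{CS2} each scalar $\mathbb{E}_{x'}[-\xi_x(Y)] = \mathbb{E}_x[\eta_x] - \mathbb{E}_{x'}[\eta_x]$ is strictly positive, and the minimum over the finitely many pairs $(x, x') \in {\cal X}^2$ with $x \neq x'$ remains strictly positive; for \eqref{CS3}, $\|\xi_x\|_\infty \le 2\|\eta_x\|_\infty < \infty$, so each variance is bounded by this squared sup-norm, and the maximum over the finite set ${\cal X}^2$ is finite.

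The main obstacle is guaranteeing that the separating functional lives concretely as a bounded measurable function on a possibly continuous ${\cal Y}$, rather than merely as an abstract element of a dual Banach space. The $L^1$--$L^\infty$ framework is exactly the right duality for this, and the compactness of $\mathcal{K}_x$ (inherited from finiteness of ${\cal X}$) eliminates any closure pathology one might worry about in the infinite-dimensional setting. Every other step is cosmetic book-keeping that exploits finiteness of ${\cal X}$.
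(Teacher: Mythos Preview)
Your argument is correct and takes a genuinely different route from the paper's. The paper constructs $\xi_x$ explicitly via information projection: it sets $P_x := \argmin_{P \in {\cal P}({\cal X}\setminus\{x\})} D(W_P\|W_x)$ and then takes $\xi_x(y) := \log w_x(y) - \log w_{P_x}(y) - c$, verifying \eqref{CS2} through the Pythagorean inequality for KL divergence (Lemma~\ref{VGT}). Because this log-likelihood construction can blow up when supports differ or when ${\cal Y}$ is not finite, the paper proceeds by case analysis: first finite ${\cal Y}$ with common supports, then finite ${\cal Y}$ with differing supports (requiring a smoothing $W_{x,\delta}$), and finally general ${\cal Y}$ by passing to a coarse enough finite partition on which (W2) still holds. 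Your Hahn--Banach argument sidesteps all of this: compactness of the finite convex hull in $L^1(\mu_0)$ plus the $L^1$--$L^\infty$ duality delivers a \emph{bounded} separating function in one stroke, so \eqref{CS3} is automatic and no reduction to finite ${\cal Y}$ or support-matching is needed. The price you pay is non-constructiveness---the paper's $\xi_x$ has an explicit information-theoretic form that could in principle feed into quantitative exponent calculations, whereas yours is abstract---but for the purposes of Lemma~\ref{LS3} as stated and as used downstream (only the existence of $\zeta_1>0$ and $\zeta_2<\infty$ matters), your approach is cleaner and more direct.
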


\section{Results for secure list decoding with continuous input}\Label{S7}
In the previous section, we assume that Alice can access only 
elements of the finite set ${\cal X}$ even when Alice is malicious.
However, in the wireless communication case,
the input system is given as a continuous space $\tilde{\cal X}$.
When we transmit a message via such a channel,
usually we fix the set ${\cal X}$ of constellation points as a subset of $\tilde{\cal X}$,
and the modulator converts an element of input alphabet to a constellation point.
That is, the choice of the set ${\cal X}$ depends on the performance of the modulator.
In this situation, it is natural that dishonest Alice can send any element of the continuous space 
$\tilde{\cal X}$ while honest Alice sends only an element of ${\cal X}$.
Therefore, only the condition (D) is changed as follows
because only the condition (D) is related to dishonest Alice.

\begin{description}
\item[(D')] Binding condition for dishonest Alice.
For $x\in \tilde{\cal X}$, we define the quantity
$\delta_{D',x}(D)$ as the second largest value among 
$\{(1-\epsilon_{A,m}(x,D))\}_{m=1}^{\sM}$.
Then, the relation
\begin{align}
\delta_{D'}(D)&:=
\max_{x \in {\cal X}} \delta_{D',x}(D)
\le  \delta_C \Label{HB3EP}
\end{align}
holds.
\end{description}

Then, a sequence of codes $\{(\phi_n,D_n)\}$ is called ultimately secure when 
$\epsilon_A(\phi_n,D_n) $ and $\delta_{D'}(D_n) $ approach to zero.
A rate triple $(R_1,R_2,R_3)$ is ($\alpha$)-ultimately deterministically (stochastically) achievable when
there exists a ultimately secure sequence of deterministic (stochastic) codes $\{(\phi_n,D_n)\}$
such that 
$\frac{1}{n}\log |(\phi_n,D_n)|_1$ approaches to $ R_1$,  
$\frac{1}{n}\log |(\phi_n,D_n)|_2$ approaches to $R_2$, and
$\lim_{n \to \infty}\frac{1}{n}E(\phi_n) \ge R_3$
($\lim_{n \to \infty}\frac{1}{n}E_\alpha(\phi_n) \ge R_3$).
We denote the set of ultimately deterministically (stochastically) achievable rate triple 
$(R_1,R_2,R_3)$ by ${\cal R}_{(u,d)}^L$ (${\cal R}_{(u,s)}^L$).
The $\alpha$-version with $\alpha>1$ is denoted by 
${\cal R}_{(u,d)}^{L,\alpha}$, ${\cal R}_{(u,s)}^{L,\alpha}$, respectively.

The same converse result as Theorem \ref{Converse} holds 
for ${\cal R}_{(u,d)}^L$ and ${\cal R}_{(u,s)}^L$
because a sequence of ultimately secure codes is strongly secure.
Hence, the aim of this section is to recover the same result as Theorem \ref{TH3} for 
ultimately secure codes under a certain condition for our channel.
The key point of this problem setting is 
to exclude the existence of $x^n\in \tilde{\cal X}^n$ and $m\neq m' \in {\cal M}_n$
such that $1-\epsilon_{A,m}(x^n,D)$ and $1-\epsilon_{A,m'}(x^n,D)$ are far from 0.
For this aim, we need to assume a distance $d$ on the space $\tilde{\cal X}$.
Then, we may consider functions $\{\xi_x\}_{x \in {\cal X}}$ to satisfy the following conditions
in addition to \eqref{CS1};
\begin{align}
&\hat{\zeta}_1(r):=\inf_{x \in {\cal X}, x' \in \tilde{\cal X}: d(x,x')\ge r} 
\mathbb{E}_{x'}[-\xi_x(Y)] >0, \Label{CS2B}\\
&\hat{\zeta}_2 :=\sup_{x \in {\cal X}, x' \in \tilde{\cal X}} \mathbb{V}_{x'}[\xi_x(Y)] < \infty\Label{CS3B}
\end{align}
for $r>0$.
It is not difficult to prove the same result as Theorem \ref{TH3}
when the above functions $\{\xi_x\}_{x \in {\cal X}}$ exist.
However, it is not so easy to prove the existence of the above functions
under natural models including AWGN channel.
Therefore, we introduce the following condition instead of \eqref{CS2B} and \eqref{CS3B}.
\begin{description}
\item[(W3)] 
There exist functions $\{\xi_x\}_{x \in \tilde{\cal X}}$ to satisfy the following conditions in addition to \eqref{CS1};
\begin{align}
&\bar{\zeta}_{1,t}(r)\nonumber \\
:=&\frac{-1}{t}\log \sup_{x \in {\cal X}, x' \in \tilde{\cal X}: d(x,x')\ge r} 
\mathbb{E}_{x'}[2^{t(\xi_{x}(Y)-\xi_{x'}(Y))}] \nonumber \\
>&0, \Label{CS2C}\\
&\bar{\zeta}_2 :=\sup_{x \in \tilde{\cal X}} \mathbb{V}_{x}[\xi_x(Y)] < \infty\Label{CS3C}
\end{align}
for $r>0$ and $t\in (0,1/2)$.
Indeed, as discussed in Step 1 of our proof of Lemma \ref{LL12B},
when functions $\{\xi_x\}_{x \in \tilde{\cal X}}$ satisfy the above conditions 
and the difference between two vectors ${x^n}'$ and $x^n$ satisfy a certain condition,
we can distinguish a vector ${x^n}'$ from $x^n$ by using $\xi_{x_1}+\cdots +\xi_{x_n}$.
\end{description}
Notice that $\bar{\zeta}_{1,t}(r)$ is monotonically increasing for $r$.

That is, we have the following theorem.
\begin{thm}\Label{TH4}
Assume the conditions (W2) and (W3).
(i) A rate triplet $(R_1,R_2,R_3)$ is ultimately deterministically achievable
when 
there exists a distribution $P \in {\cal P}(\cX)$ such that 
\begin{align}
0 < R_1-R_2< I(X;Y)_P \le R_1-R_3\le  R_1 < H(X)_P \Label{NHOLL}.
\end{align}
(ii) A rate triplet $(R_1,R_2,R_3)$ is $\alpha$-ultimately deterministically achievable
when 
there exists a distribution $P \in {\cal P}(\cX)$ such that 
\begin{align}
0 <& R_1-R_2< I(X;Y)_P \le I_\alpha(X;Y)_P \nonumber \\
\le &R_1-R_3\le  R_1 < H(X)_P \Label{NHO2BT}.
\end{align}
\hfill $\square$\end{thm}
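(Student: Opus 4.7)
Part (i) follows from part (ii) by sending $\alpha \downarrow 1$, using $\lim_{\alpha\to 1} I_\alpha(X;Y)_P = I(X;Y)_P$; so my plan is to prove (ii). The construction is the random-coding one outlined in Section \ref{OUT-K}: draw the codewords $\phi_n(1),\ldots,\phi_n(\sM)$ i.i.d.\ from $P^n$, and design the list-decoder around the additive statistic $\xi_{\phi_n(m)}(y^n)=\sum_{i=1}^n \xi_{\phi_n(m)_i}(y_i)$, placing $m$ in the list whenever $\xi_{\phi_n(m)}(Y^n)\ge -\epsilon_1 n$ (possibly conjoined with a standard typical-set criterion to control the equivocation). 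The verifiable condition (A) follows from \eqref{CS1} plus \eqref{CS3C} via Chebyshev for the honest input $\phi_n(m)$, and the R\'enyi equivocation condition (B$\alpha$) is obtained, as in the discrete case, by a channel-resolvability-type analysis under the rate constraint $R_1-R_2<I(X;Y)_P$, giving equivocation rate at least $R_1-I_\alpha(X;Y)_P$.

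The essential new ingredient is the binding condition for dishonest Alice (D'), in which the adversary may choose $x^n \in \tilde{\cX}^n$ rather than ${\cX}^n$, so a naive union bound over inputs is hopeless. The plan is to exploit condition (W3) through a Chernoff-type estimate that is uniform in $x^n\in \tilde{\cX}^n$. Concretely, for any $x^n\in\tilde{\cX}^n$ and any candidate codeword $c^n\in {\cX}^n$, under $Y^n\sim W_{x^n}$ apply Markov to $2^{t(\xi_{c^n}(Y^n)-\xi_{x^n}(Y^n))}$ and factor coordinate-wise using \eqref{CS2C}: the bound decays like $2^{-t\bar\zeta_{1,t}(r)\cdot k}$ where $k=|\{i:d(c_i,x_i)\ge r\}|$, while \eqref{CS3C} controls the "diagonal" term coming from $\xi_{x^n}(Y^n)$ itself (mean zero with bounded variance gives the requisite sub-exponential control for small $t$). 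Hence $\Pr[\xi_{c^n}(Y^n)\ge -\epsilon_1 n]$ is exponentially small whenever $c^n$ differs from $x^n$ on a linear fraction of coordinates.

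To conclude (D'), I then need to argue that for each potential adversarial $x^n$ at most one random codeword is "close" to $x^n$ in the above sense. Since the codewords are i.i.d.\ $P^n$-random with $P\in{\cP}({\cX})$ and the rate satisfies $R_1<H(X)_P$, a covering estimate based on types in ${\cX}^n$ (not $\tilde{\cX}^n$) shows that, with high probability over the codebook, no element of $\tilde{\cX}^n$ can be simultaneously close to two distinct codewords; the key point is that "closeness" of $x^n$ to a codeword reduces via $r$-covering to comparison inside the finite set ${\cX}^n$, so the uncountability of $\tilde{\cX}^n$ is absorbed into the Chernoff exponent and does not re-enter the union bound. Combining these three steps and expurgating yields a deterministic code with $\epsilon_A(\phi_n,D_n),\delta_{D'}(D_n)\to 0$ and $\tfrac1n E_\alpha(\phi_n)\ge R_1-I_\alpha(X;Y)_P$, proving (ii).

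The main obstacle is precisely the uniformity in $x^n\in\tilde{\cX}^n$: it is handled not by discretizing $\tilde{\cX}$ but by stating (W3) in the Chernoff form \eqref{CS2C}, which produces pointwise exponential bounds strong enough that "being decoded to $m$" forces many coordinates where $d(x_i,\phi_n(m)_i)$ is small; the random-codebook argument then takes place in the discrete space ${\cX}^n$, where counting is routine. I expect the bookkeeping around the simultaneous achievability of (A), (B$\alpha$), and (D') — in particular choosing $\epsilon_1$, $t$, and $r$ so that all three exponents are positive once \eqref{NHO2BT} holds strictly — to require some care but no fundamentally new idea beyond what Theorem \ref{TH3} already uses.
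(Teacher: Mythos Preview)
Your approach matches the paper's: it reuses the random-coding proof of Theorem~\ref{TH3} verbatim and replaces only Lemma~\ref{LL12} by a continuous-input analogue (Lemma~\ref{LL12B}), whose proof is exactly the Chernoff bound on $2^{t(\xi_{c^n}-\xi_{x^n})}$ via \eqref{CS2C}, Chebyshev on $\xi_{x^n}$ via \eqref{CS3C}, and the triangle-inequality reduction to pairwise Hamming separation of the codewords inside $\cX^n$ that you outlined. Two minor corrections: the likelihood-ratio (typical-set) criterion in the decoder is needed to bound the list size in condition (A), not for the equivocation (B$\alpha$), which is entirely decoder-free (Lemma~\ref{LL19}); and the diagonal term handled by \eqref{CS3C} decays only like $O(1/n)$ via Chebyshev, not sub-exponentially.
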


Since $ {\cal R}_{(u,d)}^L \subset {\cal R}_{(s,d)}^L$
and $ {\cal R}_{(u,s)}^L \subset {\cal R}_{(s,s)}^L$,
the combination of Theorems \ref{Converse} and \ref{TH4} yields the following corollary
in the same way as Corollary \ref{Cor46}.

\begin{corollary}\Label{CorT}
When Conditions (W2) and (W3) hold,
we have the following relations 
\begin{align}
\overline{{\cal R}_{(u,d)}^L}=\overline{{\cal C}}, \quad
\overline{{\cal R}_{(u,s)}^L}\subset\overline{{\cal C}^s}
\Label{TDAN}
\end{align}
and
\begin{align}
{\cal C}_\alpha\subset{\cal R}_{(u,d)}^{L,\alpha} \subset{\cal R}_{(u,s)}^{L,\alpha}.
\Label{TDAB5LL}
\end{align}
\hfill $\square$
\end{corollary}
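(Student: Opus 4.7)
The plan is to imitate the proof of Corollary \ref{Cor46} almost verbatim, with Theorem \ref{TH4} supplying the building-block codes in place of Theorem \ref{TH3}, and with the easy monotonicity that ultimate security implies strong security carrying the converse bounds over from Theorem \ref{Converse}.

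First I would handle the converse inclusions. Because the supremum defining $\delta_{D'}(D_n)$ in Condition (D') ranges over $x\in \tilde{\cal X}\supset {\cal X}$, we have $\delta_D(D_n)\le \delta_{D'}(D_n)$, hence every ultimately secure sequence is strongly secure. This yields ${\cal R}_{(u,d)}^L\subset {\cal R}_{(s,d)}^L$ and ${\cal R}_{(u,s)}^L\subset {\cal R}_{(s,s)}^L$, and Theorem \ref{Converse} then gives $\overline{{\cal R}_{(u,d)}^L}\subset \overline{{\cal C}}$ and $\overline{{\cal R}_{(u,s)}^L}\subset \overline{{\cal C}^s}$. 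The right-hand inclusion in \eqref{TDAB5LL} is immediate because any deterministic encoder is a (degenerate) stochastic encoder.

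For the direct halves $\overline{{\cal C}}\subset \overline{{\cal R}_{(u,d)}^L}$ and ${\cal C}_\alpha\subset {\cal R}_{(u,d)}^{L,\alpha}$, I would reproduce the time-sharing argument of Corollary \ref{Cor46}. Using the monotonicity ${\cal R}_{(u,d)}^{L,\alpha}\subset {\cal R}_{(u,d)}^{L,\alpha'}$ for $\alpha>\alpha'>1$ together with $\overline{\cup_{\alpha>1}{\cal C}_\alpha}=\overline{{\cal C}}$, it suffices to show that for each $\alpha>1$ and each $P\in {\cal P}({\cal U}\times {\cal X})$ satisfying the defining inequalities of ${\cal C}_\alpha$, an ultimately secure deterministic sequence exists at the corresponding rate triple. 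Writing $P(x,u)=P_U(u)P_u(x)$, the conditional quantities $I_\alpha(X;Y|U)_P$, $I(X;Y|U)_P$, and $H(X|U)_P$ decompose as convex combinations in $u$, so Theorem \ref{TH4}(ii) applied to each $P_u$ produces an ultimately secure deterministic sequence $\{(\phi_{n,u},D_{n,u})\}$ at the $u$-th rate triple (Conditions (W2) and (W3) are properties of the underlying channel $\bW$ and so apply uniformly in $u$). Concatenating these over blocks of length $\lfloor P_U(u)n\rfloor$ as in Corollary \ref{Cor46}, the additivity \eqref{AOR} of the R\'enyi conditional entropy transfers the per-block lower bounds on $E_\alpha(\phi_{n,u})$ into the desired lower bound on $E_\alpha(\phi_n)$, and a union bound handles Condition (A).

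The main obstacle, and the only place where the argument genuinely differs from Corollary \ref{Cor46}, is the composition of Condition (D') over the concatenation when the adversarial input ranges over the continuous product $\tilde{\cal X}^n$ rather than the finite set ${\cal X}^n$. I would verify that $\delta_{D'}(D_n)\le \min_u \delta_{D'}(D_{n_u,u})$ by exploiting the Cartesian product structure of the concatenated decoder: a malicious $x^n=(x^{n_1},\ldots,x^{n_{\sU}})\in \tilde{\cal X}^n$ can have two composite messages $(m_1,\ldots,m_{\sU})\ne (m_1',\ldots,m_{\sU}')$ both accepted with large probability only if, for some index $u$ with $m_u\ne m_u'$, the block decoder $D_{n_u,u}$ accepts both $m_u$ and $m_u'$ from its sub-input $x^{n_u}\in \tilde{\cal X}^{n_u}$, an event whose probability is bounded by $\delta_{D'}(D_{n_u,u})$. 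Once this bookkeeping is checked, the remainder of the proof proceeds exactly as in Corollary \ref{Cor46}, and Theorem \ref{TH4}(i) recovers part (i) of the corollary by passing to the $\alpha\downarrow 1$ limit.
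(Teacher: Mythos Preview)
Your proposal is correct and follows the paper's own approach: the paper simply states that Corollary~\ref{CorT} follows ``in the same way as Corollary~\ref{Cor46}'' from Theorems~\ref{Converse} and~\ref{TH4} together with the inclusions ${\cal R}_{(u,d)}^L\subset{\cal R}_{(s,d)}^L$ and ${\cal R}_{(u,s)}^L\subset{\cal R}_{(s,s)}^L$, which is exactly what you do. One small slip: the verbal argument you give for the composition of Condition~(D') (``for \emph{some} index $u$ with $m_u\ne m_u'$\ldots'') actually yields $\delta_{D'}(D_n)\le\max_u\delta_{D'}(D_{n_u,u})$ rather than $\min_u$---the paper's proof of Corollary~\ref{Cor46} contains the same inequality---but since every $\delta_{D'}(D_{n_u,u})\to 0$ this does not affect the conclusion.
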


As an example, we consider an additive noise channel when 
$\tilde{\cal X}=\mathbb{R}^d$,
which equips the standard Euclidean distance $d$.
The output system ${\cal Y}$ is also given as $\mathbb{R}^d$.
We fix a distribution $P_N$ for the additive noise $N$ on 
$\tilde{\cal X}$.
Then, we define the additive noise channel $\{W[P_N]_x\}_{x \in \tilde{\cal X}}$ as $w_x(y):= p_N(y-x)$.
We assume the following conditions;
\begin{align}
&\infty> \mathbb{E}_0[ -\log w_0(Y)]>-\infty \Label{FA1}\\
& \mathbb{V}_0[ -\log w_0(Y)]<\infty.\Label{FA2}
\end{align}
Then, we have the following lemma.
\begin{lem}\Label{LE4}
When the additive noise channel $\{W[P_N]_x\}_{x \in \tilde{\cal X}}$ satisfies \eqref{FA1} and \eqref{FA2}, 
and when $\xi_x$ is chosen as $\xi_x(y):=\log w_x (y)
-\mathbb{E}_0[ \log w_0(Y)]$,
the condition (W3) holds.
\hfill $\square$\end{lem}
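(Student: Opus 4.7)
The plan is to verify the three parts of (W3) in turn. The invariance of $P_N$ under translation makes \eqref{CS1} and \eqref{CS3C} immediate, leaving \eqref{CS2C} as the substantive step. Condition \eqref{FA1} ensures that the constant subtracted in the definition of $\xi_x$ is finite, so $\xi_x$ is well-defined.

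For the easy parts, when $Y\sim W_x$ we have $Y=x+N$ with $N\sim P_N$, and therefore $\log w_x(Y)=\log p_N(N)$ has a law that does not depend on $x$. Hence $\mathbb{E}_x[\xi_x(Y)]=0$, proving \eqref{CS1}, and $\mathbb{V}_x[\xi_x(Y)]=\mathbb{V}_0[\log w_0(Y)]$, which is independent of $x$ and finite by \eqref{FA2}, yielding \eqref{CS3C} with $\bar{\zeta}_2=\mathbb{V}_0[\log w_0(Y)]$.

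For \eqref{CS2C}, the change of variables $n=y-x'$ yields
\begin{align}
\mathbb{E}_{x'}\bigl[2^{t(\xi_x(Y)-\xi_{x'}(Y))}\bigr]
&=\int w_x(y)^t w_{x'}(y)^{1-t}\,\mu(dy) \nonumber \\
&=\int p_N(n-v)^t p_N(n)^{1-t}\,\mu(dn) \nonumber \\
&=:H_t(v),
\end{align}
where $v:=x-x'$. Because $x'$ can be any point of $\tilde{\cal X}=\mathbb{R}^d$ at distance at least $r$ from some $x\in{\cal X}$, the supremum in \eqref{CS2C} equals $\sup_{\|v\|\ge r} H_t(v)$, and the task reduces to proving this is strictly less than $1$ for every $r>0$ and every $t\in(0,1/2)$.

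I would establish this via three ingredients and then glue them together. \emph{(i) Pointwise strict inequality:} Hölder's inequality gives $H_t(v)\le 1$, with equality iff $p_N(\cdot-v)$ and $p_N(\cdot)$ are proportional a.e.; for two probability densities this forces $p_N(\cdot-v)=p_N(\cdot)$ a.e., i.e.\ $v$-periodicity of $p_N$, which is impossible for a nonzero $L^1$ density unless $v=0$, so $H_t(v)<1$ for every $v\neq 0$. \emph{(ii) Decay at infinity:} given $\epsilon>0$, choose $R$ with $\int_{\|n\|>R} p_N\,d\mu<\epsilon$; then for $\|v\|>2R$, splitting $\mathbb{R}^d=\{\|n\|\le R\}\cup\{\|n\|>R\}$ and applying Hölder on each piece with exponents $1/t$ and $1/(1-t)$ yields $H_t(v)\le \epsilon^t+\epsilon^{1-t}$, so $H_t(v)\to 0$ as $\|v\|\to\infty$. \emph{(iii) Continuity of $v\mapsto H_t(v)$:} follows from $L^1$-continuity of translation combined with the pointwise Young bound $p_N(n-v)^t p_N(n)^{1-t}\le t\,p_N(n-v)+(1-t)\,p_N(n)$ and a Vitali/uniform-integrability argument. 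Combining (ii) and (iii), there exists $R_0$ with $H_t(v)\le 1/2$ for $\|v\|\ge R_0$, while on the compact annulus $\{r\le\|v\|\le R_0\}$ the continuous function $H_t$ attains a maximum which by (i) is strictly less than $1$. Hence $\sup_{\|v\|\ge r} H_t(v)<1$, giving $\bar{\zeta}_{1,t}(r)>0$ as required. The main obstacle is making the continuity step (iii) rigorous for a general density $p_N$; the pointwise strict inequality in (i) and the tail decay in (ii) are essentially routine once one adopts the Hellinger-affinity viewpoint.
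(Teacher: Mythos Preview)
Your proof is correct and follows essentially the same route as the paper. The paper writes $-\frac{1}{t}\log \mathbb{E}_{x'}[2^{t(\xi_x(Y)-\xi_{x'}(Y))}]=D_{1-t}(W_{x'}\|W_x)$ and then, by translation invariance, reduces to $D_{1-t}(W_v\|W_0)$ with $v=x'-x$; your $H_t(v)$ is exactly $2^{-tD_{1-t}(W_v\|W_0)}$, so the two formulations coincide. For the tail, the paper picks a radius $r_0$ with small tail mass and applies data processing for $D_{1-t}$ under the binary map $f(y)=\mathbf{1}\{\|y\|\ge r_0\}$, obtaining the bound $\epsilon^t+\epsilon^{1-t}$; your H\"older-on-two-pieces argument is precisely the computation underlying that data-processing step and yields the identical bound. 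The compact-annulus step is the same in both: strict positivity of $D_{1-t}(W_v\|W_0)$ for $v\neq 0$ plus continuity and compactness. If anything, you are more explicit than the paper about why continuity holds (the paper simply asserts it); your Young/Vitali outline is a sound way to justify it for a general $p_N$.
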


\begin{proof}
Since the range of $t$ in the condition \eqref{CS2C}
is $(0,1/2)$,
we assume thwe assume that the real number $t$ belongs to $(0,1/2)$ in this proof.
The conditions \eqref{CS1} and \eqref{CS3C} follow from \eqref{FA1} and \eqref{FA2}, respectively.
\begin{align}
-\frac{1}{t}\log \mathbb{E}_{x'}[2^{t(\xi_{x}(Y)-\xi_{x'}(Y))}] 
=D_{1-t}(W_{x'}\|W_{x}) .
\end{align}
For an small real number $\epsilon<1/3$, we choose $r_0>0$ such that
\begin{align}
W_0(\{ y\in {\cal Y}|  d(y,0)< r_0 \}) \le \epsilon. \Label{BH1}
\end{align}
We define the function $f$ from ${\cal Y}$ to $\{0,1\}$ such that
$f^{-1}(\{0\})=\{ y\in {\cal Y}|  d(y,0)< r_0 \}$.
When $x_0 $ satisfies $ d(x_0,0)>2r_0$, we have
\begin{align}
W_{x_0}\circ f^{-1}(\{0\}) \le  \epsilon.\Label{BH2}
\end{align}
Since $W_{x_0}\circ f^{-1}(\{1\}),W_{0}\circ f^{-1}(\{0\})\le 1 $,
\eqref{BH1} and \eqref{BH2} imply that
\begin{align}
&2^{ -t D_{1-t}(W_{x_0}\circ f^{-1}\|W_{0}\circ f^{-1}) }
\nonumber \\
=&W_{x_0}\circ f^{-1}(\{0\})^{1-t}W_{0}\circ f^{-1}(\{0\})^t\nonumber \\
&+W_{x_0}\circ f^{-1}(\{1\})^{1-t}W_{0}\circ f^{-1}(\{1\})^t \nonumber\\
\le& \epsilon^{t}+\epsilon^{1-t}.
\end{align}
Thus,
\begin{align}
D_{1-t}(W_{x_0}\circ f^{-1}\|W_{0}\circ f^{-1}) 
\ge - \frac{1}{t} \log (\epsilon^{t}+\epsilon^{1-t}).
\end{align}
When $d(x,x')>2r_0$, we have
\begin{align}
& -\frac{1}{t}\log \mathbb{E}_{x'}[2^{t(\xi_{x}(Y)-\xi_{x'}(Y))}] 
\nonumber \\
=&D_{1-t}(W_{x'}\|W_{x}) 
= D_{1-t}(W_{x'-x}\|W_{0})\nonumber  \\
\ge & D_{1-t}(W_{x'-x}\circ f^{-1}\|W_{0}\circ f^{-1}) \nonumber \\
\ge &- \frac{1}{t} \log (\epsilon^{t}+\epsilon^{1-t}) >0.
\end{align}
Therefore,
\begin{align}
&\inf_{x' \in \tilde{\cal X}: d(0,x')\ge r} 
\frac{-1}{t}\log 
\mathbb{E}_{x'}[2^{t(\xi_{0}(Y)-\xi_{x'}(Y))}]  
\nonumber \\
=&
\min \Big(
\inf_{ x' \in \tilde{\cal X}: r_0 \ge d(0,x')\ge r} 
\frac{-1}{t}\log 
\mathbb{E}_{x'}[2^{t(\xi_{0}(Y)-\xi_{x'}(Y))}]  ,\nonumber \\
&\inf_{ x' \in \tilde{\cal X}: d(0,x')> r_0} 
\frac{-1}{t}\log 
\mathbb{E}_{x'}[2^{t(\xi_{0}(Y)-\xi_{x'}(Y))}]  
\Big) \nonumber \\
\ge &
\min \Big(
\min_{ x' \in \tilde{\cal X}: r_0 \ge d(0,x')\ge r} 
D_{1-t}(W_{x'}\|W_{0}) ,\nonumber \\
&- \frac{1}{t} \log (\epsilon^{t}+\epsilon^{1-t})
\Big). \Label{NNT}
\end{align}
Since $D_{1-t}(W_{x'}\|W_{0}) >0$ for $x'\neq 0$,
the set $\{ x' \in \tilde{\cal X}| r_0 \ge d(0,x')\ge r\}$ is compact, and
the map $x' \mapsto D_{1-t}(W_{x'}\|W_{0})$ continuous, we find that
$\min_{ x' \in \tilde{\cal X}: r_0 \ge d(0,x')\ge r} 
D_{1-t}(W_{x'}\|W_{0}) >0$.
Hence, the quantity \eqref{NNT} is strictly positive.

Since
\begin{align}
\bar{\zeta}_{1,t}(r)
=&
\inf_{x \in {\cal X}, x' \in \tilde{\cal X}: d(x,x')\ge r} 
\frac{-1}{t}\log 
\mathbb{E}_{x'}[2^{t(\xi_{x}(Y)-\xi_{x'}(Y))}]\nonumber   \\
=&
\inf_{ x' \in \tilde{\cal X}: d(0,x')\ge r} 
\frac{-1}{t}\log 
\mathbb{E}_{x'}[2^{t(\xi_{0}(Y)-\xi_{x'}(Y))}]  ,
\end{align}
the condition \eqref{CS2C} holds.
\end{proof}

\section{Application to bit-string commitment}\Label{SecBS}
\subsection{Bit-string commitment based on secure list decoding}
Now, we construct a code for bit-string commitment
by using our code $(\phi,D)$ for secure list decoding.
(i) The previous studies \cite[Theorem 2]{BC1}, \cite{BC2} considered only the case with a discrete input alphabet ${\cal X}$ and discrete output alphabet ${\cal Y}$
while a continuous generalization of their result was mentioned as an open problem in 
\cite{BC2}. 
We allow a continuous output alphabet ${\cal Y}$ with a discrete input alphabet ${\cal X}$.
(ii) As another setting, we consider the continuous input alphabet ${\cal X}$.
In this case, it is possible to make the capacity infinite, 
as pointed by the paper \cite{NBSI} in the case of the Gaussian channel.
However, it is difficult to manage an input alphabet with infinitely many cardinality.
Hence, we consider a restricted finite subset $\tilde{\cal X}$ of the continuous input alphabet ${\cal X}$ so that
honest Alice accesses only a restricted finite subset $\tilde{\cal X}$ of the continuous input alphabet ${\cal X}$
and
dishonest Alice accesses the continuous input alphabet ${\cal X}$.

Since the binding condition (BIN) is satisfied by Condition (D) or (D'),
it is sufficient to strengthen Condition (B) to Concealing condition (CON).
For this aim, we combine a hash function and a code $(\phi,D)$ for secure list decoding.
A function $f$ from ${\cal M}$ to ${\cal K}$ is called 
{\it a regular hash function}
when $f$ is surjective and the cardinality $|f^{-1}(k)|$ does not depend on $k \in {\cal K}$.
When a code $(\phi,D)$ and a regular hash function $f$ are given, 
as explained in Fig. \ref{FF2},
we can naturally consider 
the following protocol for bit-string commitment with message set $ {\cal K}$.
Before starting the protocol, Alice and Bob share a code $(\phi,D)$ and a regular hash function $f$.
\begin{description}
\item[(I)] (Commit Phase) 
When $k \in {\cal K}$ is a message to be sent by Alice,
she randomly chooses an element $M \in {\cal M}$ subject to uniform distribution on  $f^{-1}(k)$. 
Then, Alice sends $\phi(M)$ to Bob via a noisy channel. 

\item[(II)] (Reveal Phase) 
From Bob's receiving information in Commit Phase,
Bob outputs $\sL$ elements of ${\cal M}$ as the list. 
Alice sends $M$ to Bob via a noiseless channel.
The list is required to contain the message $M$. 
If the transmitted information via the noiseless channel is contained in Bob's decoded list, Bob accepts it, 
and recovers the message $k=f(M)$.
Otherwise, Bob rejects it.
\end{description}

\begin{figure}[t]
\begin{center}
  \includegraphics[width=0.98\linewidth]{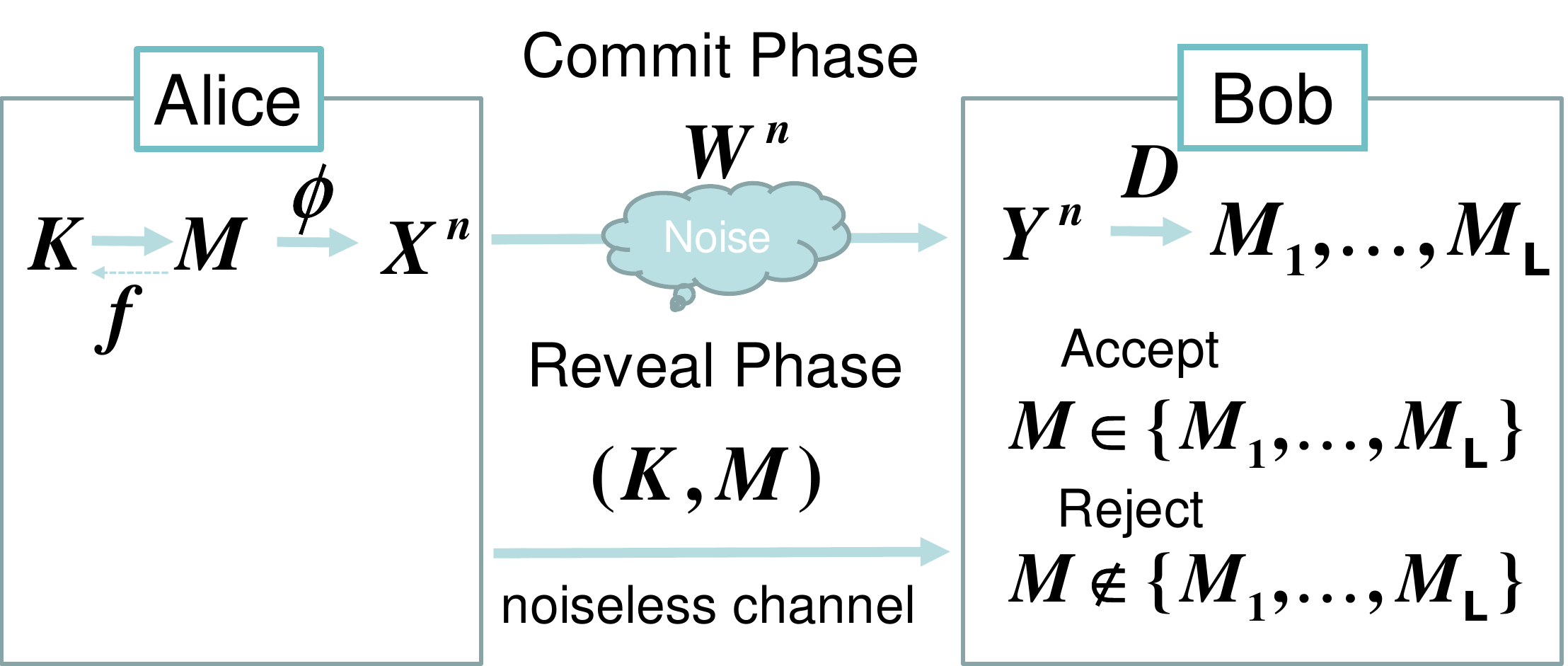}
  \end{center}
\caption{Our protocol for bit-string commitment with message set $ {\cal K}$.}
\Label{FF2}
\end{figure}   

The binding condition (BIN) is evaluated by the parameter 
$\delta_{C}(\phi,D)$, $\delta_D(D)$, or $\delta_{D'}(D)$.
To discuss the concealing condition (CON), for a deterministic encoder $\phi$ for secure list decoding,
we define the conditional distribution $P^{\phi,f}_{Y|K=k}$ 
and the distribution $P^{\phi,f}_{Y}$ on ${\cal Y}$ as 
\begin{align}
P^{\phi,f}_{Y|K=k}:=&
\sum_{m \in f^{-1}(k)} \frac{1}{|f^{-1}(k)|}W_{\phi(m)} \\
P^{\phi,f}_{Y}:=&
\sum_{m \in {\cal M}} \frac{1}{|{\cal M}|}W_{\phi(m)} .
\end{align}
When $\phi$ is given as a stochastic encoder by distributions $\{P_m\}_{m \in {\cal M}}$ on ${\cal X}$, 
these are defined as
\begin{align}
P^{\phi,f}_{Y|K=k}:=&
\sum_{m \in f^{-1}(k)} \frac{1}{|f^{-1}(k)|}\sum_{x\in {\cal X}} P_m(x) W_{x} \\
P^{\phi,f}_{Y}:=&
\sum_{m \in {\cal M}} \frac{1}{|{\cal M}|}\sum_{x\in {\cal X}} P_m(x) W_{x} .
\end{align}
The concealing condition (CON) is evaluated by the following quantity;
\begin{align}
\delta_{E}(f,\phi)
:=\max_{k,k' \in {\cal K}}
\frac{1}{2 } \| P^{\phi,f}_{Y|K=k} - P^{\phi,f}_{Y|K=k'} \|_1.\Label{AOY}
\end{align}
Therefore, we say that the tuple $(\phi,D,f)$
is a code for bit-string commitment based on secure list decoding.
Then, we have the following theorem, which is shown in Section \ref{S8B}.
\begin{thm}\Label{TH7}
For a code $(\phi,D)$ of secure list code with message set ${\cal M}$,
we assume that the size $\sM=|{\cal M}|=|(\phi,D)|_1$ is a power of a prime $p$, i.e., $\sM=p^{\sm}$.
Then, for an integer $\sk$ and 
a set ${\cal K}$ with $|{\cal K}|=p^{\sk}$,
there exist 
a subset $\bar{\cal K}\subset  {\cal K}$ 
with $|\bar{\cal K}|=p^{\sk-1}$,
a subset $\bar{\cal M}\subset {\cal M}$
with $|\bar{\cal K}|=p^{\sm-1}$,
and a regular hash function $f$
from ${\cal M}$ to ${\cal K}$ 
such that
$f(\bar{\cal M})=\bar{\cal K}$ 
and
\begin{align}
\delta_{E}(f,\phi|_{\bar{\cal M}})
\le \frac{3p}{p-1} 
p^{\frac{t\sk}{1+t}}2^{-\frac{t}{1+t} H_{1+t}(M|Y)  }. \Label{SO4}
\end{align}
\hfill $\square$\end{thm}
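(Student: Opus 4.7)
The plan is to combine a R\'enyi-type leftover hash lemma with a Markov-style pruning argument. First I would identify $\cM$ with $\bF_p^{\sm}$ and ${\cal K}$ with $\bF_p^{\sk}$, and take $\{f\}$ to be a universal$_2$ ensemble of surjective linear maps $\bF_p^{\sm}\to\bF_p^{\sk}$ (for example, composing a random invertible linear transformation with a coordinate projection). Every member of this ensemble is regular with $|f^{-1}(k)|=p^{\sm-\sk}$, so $K=f(M)$ is uniform on ${\cal K}$ when $M$ is uniform on $\cM$, and $P^{\phi,f}_Y$ equals the arithmetic mean $\frac{1}{|{\cal K}|}\sum_k P^{\phi,f}_{Y|K=k}$.

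Next I would apply a leftover hash lemma based on the R\'enyi conditional entropy $H_{1+t}(M|Y)$, in the style developed in \cite{YH}, to the joint distribution $P_{MY}$ induced by $\phi$ and $\bW$. The usual route is to express the averaged $\ell_1$ discrepancy in terms of a collision-type quantity, invoke the universal$_2$ property $\rE_f \mathbf{1}[f(m)=f(m')]\le 1/|{\cal K}|$ for $m\neq m'$ to control that quantity, then pass from $\ell_2$ to $\ell_1$ via Cauchy--Schwarz and optimize using H\"older with the concave map $x\mapsto x^{1/(1+t)}$; this should yield
\begin{align}
\rE_f \sum_{k}\tfrac{1}{|{\cal K}|}\|P^{\phi,f}_{Y|K=k}-P^{\phi,f}_Y\|_1 \le 3\,p^{\frac{t\sk}{1+t}}2^{-\frac{t}{1+t}H_{1+t}(M|Y)}=:\mu.
\end{align}
Picking an $f$ that attains this average bound, I would sort $\{\|P^{\phi,f}_{Y|K=k}-P^{\phi,f}_Y\|_1\}_{k}$ in increasing order and let $\bar{\cal K}$ consist of the $p^{\sk-1}$ smallest indices. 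Since the complementary $p^{\sk-1}(p-1)$ indices each contribute at least $\max_{k\in\bar{\cal K}}\|\cdot\|_1$ to the sum, that maximum is at most $\tfrac{p}{p-1}\mu$. Setting $\bar{\cM}:=f^{-1}(\bar{\cal K})$ gives $|\bar{\cM}|=p^{\sk-1}\cdot p^{\sm-\sk}=p^{\sm-1}$ by regularity, while for each $k\in\bar{\cal K}$ the restricted distribution $P^{\phi|_{\bar{\cM}},f}_{Y|K=k}$ coincides with $P^{\phi,f}_{Y|K=k}$ because $f^{-1}(k)\subseteq\bar{\cM}$.

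Finally, the triangle inequality applied to any pair $k,k'\in\bar{\cal K}$ yields
\begin{align}
\tfrac{1}{2}\|P^{\phi,f}_{Y|K=k}-P^{\phi,f}_{Y|K=k'}\|_1 \le \tfrac{p}{p-1}\mu = \tfrac{3p}{p-1}\,p^{\frac{t\sk}{1+t}}2^{-\frac{t}{1+t}H_{1+t}(M|Y)},
\end{align}
which is precisely \eqref{SO4}. The main technical hurdle is the leftover hash step: securing the explicit constant $3$ (rather than some larger absolute constant) requires a tightly universal$_2$ ensemble together with careful ordering of H\"older's inequality. The remaining pieces---the Markov sorting argument defining $\bar{\cal K}$, the counting $|\bar{\cM}|=p^{\sm-1}$ via regularity, and the triangle-inequality bound on $\delta_E$---are essentially bookkeeping.
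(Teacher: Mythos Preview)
Your proposal is correct and takes essentially the same approach as the paper: apply the R\'enyi leftover hash lemma to a universal$_2$ family of linear surjections $\bF_p^{\sm}\to\bF_p^{\sk}$, fix one $f$ achieving the averaged bound, prune to the best $|{\cal K}|/p$ output symbols via a Markov/sorting argument, set $\bar{\cal M}=f^{-1}(\bar{\cal K})$, and finish with the triangle inequality. The only cosmetic difference is that the paper carries the factor $\tfrac12$ inside its averaged discrepancy $\bar\delta_E$ (so the hash-lemma constant shows up as $\tfrac32$ rather than your $3$), but the final bound $\tfrac{3p}{p-1}\,p^{\frac{t\sk}{1+t}}2^{-\frac{t}{1+t}H_{1+t}(M|Y)}$ agrees exactly.
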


For a code $(\phi,D,f)$ for bit-string commitment based on secure list coding,
we define three parameters 
$|(\phi,D,f)|_1:=| (\phi,D ) |_1$,
$|(\phi,D,f)|_2:=|(\phi,D )|_2$,
and $|(\phi,D,f)|_3:=| \im f  |=|{\cal K}|$.
To discuss this type of code in the asymptotic setting, we make the following definitions.
A sequence of codes $\{(\phi_n,D_n,f_n)\}$ for bit-string commitment based on secure list coding
is called strongly (weakly, ultimately) secure
when $\epsilon_A(\phi_n,D_n)$, $\delta_{E}(f_n,\phi_n)$,
and $\delta_D(D_n)$ ($\delta_{C}(\phi_n,D_n)$, $\delta_{D'}(D_n)$) approach to zero.
A rate triple $(R_1,R_2,R_3)$ is strongly (weakly, ultimately) deterministically achievable 
for bit-string commitment based on secure list coding
when there exists a strongly (weakly, ultimately) secure deterministically
sequence of codes $\{(\phi_n,D_n,f_n)\}$
such that
$ \lim_{n \to \infty} \frac{1}{n}\log |(\phi_n,D_n,f_n)|_i=R_i$ for $i=1,2,3$.
We denote the set of strongly (weakly, ultimately) deterministically achievable rate triple 
$(R_1,R_2,R_3)$ for bit-string commitment based on secure list coding
by ${\cal R}_{(s,d)}^B$ (${\cal R}_{(w,d)}^B$, ${\cal R}_{(u,d)}^B$).
We define strongly (weakly, ultimately) stochastically achievable rate triple
for bit-string commitment based on secure list coding in the same way.
Then, we denote the set of strongly (weakly, ultimately) stochastically achievable rate triple 
$(R_1,R_2,R_3)$ for bit-string commitment based on secure list coding
by ${\cal R}_{(s,s)}^B$ (${\cal R}_{(w,s)}^B$, ${\cal R}_{(u,s)}^B$).
Then, we have
\begin{align}
{\cal R}_{(g,d)}^B \subset {\cal R}_{(g,s)}^B.\Label{COX}
\end{align}
for $g=s,w,u$. We obtain the following theorem under the above two settings.

\begin{thm}\Label{TH8}
(i)
Assume that the input alphabet ${\cal X}$ is discrete.
When Condition (W2) holds,
we have the following relations for $G\in \{(w,d),(s,d)\}$.
\begin{align}
\overline{{\cal R}_{G}^B}=\overline{{\cal C}}, \quad
\overline{{\cal R}_{(s,s)}^B}\subset \overline{{\cal C}^s}
\Label{TD42}.
\end{align}
(ii) 
Assume that the input alphabet ${\cal X}$ is continuous.
We choose a restricted finite subset $\tilde{\cal X}$ of the continuous input alphabet ${\cal X}$.
When the channel $\bW$ with $\tilde{\cal X}\subset {\cal X}$ satisfies
Conditions (W2) and (W3),
we have the following relations 
\begin{align}
\overline{{\cal R}_{(u,d)}^B}=\overline{{\cal C}}, \quad
\overline{{\cal R}_{(u,s)}^B}\subset \overline{{\cal C}^s}
\Label{TD43}.
\end{align}
\hfill $\square$\end{thm}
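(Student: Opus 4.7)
I plan to derive Theorem \ref{TH8} from the previously established results for secure list decoding (Theorems \ref{Converse}, \ref{TH3}, \ref{TH4}, Corollaries \ref{Cor46} and \ref{CorT}) together with the hashing reduction of Theorem \ref{TH7}. Both statements split into a converse and a direct part, and the argument for the discrete-input case (part (i)) and the continuous-input case (part (ii)) is parallel.

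For the converse inclusions $\overline{{\cal R}_{G}^B} \subset \overline{{\cal C}}$ with $G \in \{(s,d),(w,d)\}$ and $\overline{{\cal R}_{(s,s)}^B} \subset \overline{{\cal C}^s}$, the key observation is that any code $(\phi_n,D_n,f_n)$ for bit-string commitment based on secure list decoding is, after forgetting $f_n$, a code $(\phi_n,D_n)$ for secure list decoding, and the small concealing parameter $\delta_E(f_n,\phi_n)\to 0$ forces a large equivocation. Indeed, $\delta_E(f_n,\phi_n) \to 0$ implies $\frac{1}{2}\|P^{\phi_n,f_n}_{KY^n}-P^{\phi_n,f_n}_K\times P^{\phi_n,f_n}_{Y^n}\|_1\to 0$, whence the Fannes-type continuity of the discrete conditional entropy $H(K|Y^n)$ yields $\frac{1}{n}H(K|Y^n)\to R_3$. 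The data-processing bound $H(M|Y^n)\ge H(K|Y^n)$, valid because $K=f_n(M)$, then gives $\liminf_n \frac{1}{n}E(\phi_n)\ge R_3$, so $(R_1,R_2,R_3)$ is strongly (resp.\ weakly, resp.\ strongly-stochastically) achievable for secure list decoding and Theorem \ref{Converse} forces it into $\overline{{\cal C}}$ or $\overline{{\cal C}^s}$. The analogous argument with ultimately-secure codes yields the converse for part (ii).

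For the direct inclusion $\overline{{\cal C}} \subset \overline{{\cal R}_{G}^B}$, I would combine Corollary \ref{Cor46} (respectively Corollary \ref{CorT}) with Theorem \ref{TH7}. Fix a triple $(R_1,R_2,R_3)$ in the interior of ${\cal C}$ realised by some $P\in {\cal P}({\cal U}\times {\cal X})$ with $R_3<R_1-I(X;Y|U)_P$. By continuity of $\alpha\mapsto I_\alpha(X;Y|U)_P$ at $\alpha=1$, pick $\alpha=1+t>1$ so that still $R_3 < R_1-I_{\alpha}(X;Y|U)_P$. Corollary \ref{Cor46} (resp.\ \ref{CorT}) then produces a sequence of deterministic secure list decoding codes $\{(\phi_n,D_n)\}$ that is strongly (resp.\ ultimately) secure and satisfies $\frac{1}{n}\log|(\phi_n,D_n)|_1\to R_1$, $\frac{1}{n}\log|(\phi_n,D_n)|_2\to R_2$, and $\liminf_n \frac{1}{n}E_\alpha(\phi_n)\ge R_1-I_\alpha(X;Y|U)_P$. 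Choose a prime $p$ and integers $\sk_n$ with $\sk_n\log p/n\to R_3$, padding $|(\phi_n,D_n)|_1$ to a power of $p$ at a cost of $o(1)$ in $R_1$. Applying Theorem \ref{TH7} to each $(\phi_n,D_n)$ produces a regular hash function $f_n$ with
\begin{align*}
\delta_E(f_n,\phi_n|_{\bar{{\cal M}}_n}) \le \frac{3p}{p-1}\, 2^{\frac{t}{1+t}[\sk_n\log p - H_{1+t}(M|Y^n)]},
\end{align*}
which decays exponentially because $\sk_n\log p/n\to R_3$ is strictly smaller than $\liminf_n \frac{1}{n}H_{1+t}(M|Y^n)\ge R_1-I_\alpha(X;Y|U)_P$. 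Restricting the encoder to $\bar{{\cal M}}_n$ can only decrease $\epsilon_A$ and $\delta_D$; combined with the obvious bound $\delta_C(\phi_n,D_n)\le \delta_D(D_n)$, the sequence $\{(\phi_n|_{\bar{{\cal M}}_n},D_n,f_n)\}$ is simultaneously strongly (resp.\ ultimately) and weakly secure for bit-string commitment with the desired rates $(R_1,R_2,R_3)$.

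The main obstacle I anticipate is reconciling the auxiliary time-sharing variable $U$ used to parameterise the region ${\cal C}$ with the hashing bound of Theorem \ref{TH7}, which is stated for an encoder with no $U$ structure. The time-sharing construction inside the proof of Corollary \ref{Cor46} concatenates codes built for each conditional distribution $P_{X|U=u}$, and the additivity \eqref{AOR} of $H_{1+t}$ across independent blocks is precisely what guarantees that the Rényi conditional entropy of the global encoder aggregates correctly, so that Theorem \ref{TH7} can be applied to the concatenated code. A secondary, more routine, technical point is the converse continuity step $\delta_E\to 0 \Rightarrow \frac{1}{n}H(K|Y^n)\to R_3$ in the general-probability-space setting: since no differential entropy of $Y^n$ is available, one must integrate out $Y^n$ first and then apply Fannes continuity to the discrete variable $K$, being careful to use the concavity of the binary entropy term after averaging in $y^n$.
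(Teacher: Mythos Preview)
Your proposal is correct and, for the direct part, follows the paper's route almost verbatim: combine Corollary~\ref{Cor46} (respectively Corollary~\ref{CorT}) with the hashing bound of Theorem~\ref{TH7} at a fixed $\alpha=1+t>1$, and then let $\alpha\downarrow 1$ to recover all of $\overline{{\cal C}}$. Your concern about the auxiliary variable $U$ is handled exactly as you say, by the concatenation construction inside the proof of Corollary~\ref{Cor46} together with the additivity \eqref{AOR}.

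The only genuine difference is in the converse. The paper packages the implication ``$\delta_E(f_n,\phi_n)\to 0 \Rightarrow \liminf_n \frac{1}{n}E(\phi_n)\ge R_3$'' as Lemma~\ref{KLR}, and proves it not by Fannes continuity but by an information-spectrum converse (Proposition~\ref{PO1}): the bound
\[
\tfrac{1}{2}\|P_{f(M)Y}-P_{f(M)}\times P_Y\|_1 \ge P_{MY}\Big\{\log\tfrac{1}{P_{M|Y}(m|y)}<\gamma\Big\}-\frac{2^\gamma}{|\im f|}
\]
is combined with the trivial estimate $H(M|Y)\ge \gamma(1-p)$ and the choice $\gamma=\log|(\phi_n,D_n,f_n)|_3-\sqrt{n}$. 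Your Fannes route---freeze $y^n$, apply the Audenaert--Fannes bound to the discrete distribution $P_{K|Y^n=y^n}$ versus $P_K$, and then average using concavity of the binary entropy---is a perfectly valid alternative proof of the same lemma; it is arguably more elementary and avoids citing Proposition~\ref{PO1}, at the cost of the technical averaging step you already flagged. The paper's argument, by contrast, is insensitive to the alphabet size of $K$ and works directly on the information spectrum of $M$ given $Y^n$, so no continuity constant of order $\log|{\cal K}|$ ever appears.
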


Also, we define the optimal transmission rate in the above method as
\begin{align}
C^B_{G}:= \sup_{(R_1,R_2,R_3)\in {\cal R}_{G}^B} R_3
\end{align}
for $G \in \{ (s,d),(w,d),(u,d),(s,s),(w,s),(u,s)\}$.
Then, Lemma \ref{LL1}, Theorem \ref{TH8}, and \eqref{COX} imply the relation
\begin{align}
C^B_{G}=\sup_{P \in {\cal P}({\cal X})} H(X|Y)_P 
\end{align}
for $G\in \{ (s,d),(w,d),(u,d),(s,s),(u,s)\}$ under the same assumption 
as Theorem \ref{TH8}.
Here, we cannot determine only $C^B_{(w,s)}$
because the restriction for Alice is too weak in the setting $(w,s)$, i.e.,
Alice is allowed to use a stochastic encoder and 
Alice's cheating is not possible only when
Alice uses the correct encoder.
Fig. \ref{F-capacity} shows the numerical plot for 
AWGN  channel with binary phase-shift keying (BPSK) modulation.

Since our setting allows the case with the continuous input and output systems,
Theorem \ref{TH8} can be considered as a generalization of the results by Winter et al \cite[Theorem 2]{BC1}, \cite{BC2} while a continuous generalization of their result was mentioned as an open problem in 
\cite{BC2}. 
Although the paper \cite{NBSI} addressed the Gaussian channel,
it considers only the special case when 
the cardinality of the input alphabet is infinitely many.
It did not derive a general capacity formula with a finite input alphabet and a continuous output alphabet.
At least, 
the paper \cite{NBSI} did not consider the case when
honest Alice accesses only a restricted finite subset $\tilde{\cal X}$ of the continuous input alphabet ${\cal X}$
and
dishonest Alice accesses the continuous input alphabet ${\cal X}$.

In addition to Theorem \ref{TH7},
to show Theorem \ref{TH8}, we prepare  the following lemma, which is shown in Section \ref{S8C}.
\begin{lem}\Label{KLR}
When a sequence of codes $\{(\phi_n,D_n,f_n)\}$ for bit-string commitment based on secure list coding
satisfies the condition $\delta_{E}(f_n,\phi_n)\to 0$,
we have
\begin{align}
\lim_{n \to \infty} \frac{1}{n}\log |(\phi_n,D_n,f_n)|_3 \le 
\lim_{n \to \infty} \frac{1}{n} E(\phi_n).\Label{KLRE}
\end{align}
\hfill $\square$
\end{lem}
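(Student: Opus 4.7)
The plan is to reduce the claim to a mutual information bound and then control that mutual information via a standard continuity estimate for entropy. First, observe that under the bit-string commitment protocol Alice draws $K \in {\cal K}_n$ uniformly and then $M \in f_n^{-1}(K)$ uniformly; since $f_n$ is regular, this makes the induced $M$ uniform on ${\cal M}_n$, so the definition of $E(\phi_n)$ gives $E(\phi_n) = H(M|Y^n)$. Because $K = f_n(M)$ is a deterministic function of $M$, data processing yields $H(K|Y^n) \le H(M|Y^n) = E(\phi_n)$. Using $H(K) = \log |{\cal K}_n|$, we may write
\begin{align}
\log |(\phi_n,D_n,f_n)|_3 = H(K|Y^n) + I(K;Y^n) \le E(\phi_n) + I(K;Y^n),
\end{align}
so it suffices to show $I(K;Y^n)/n \to 0$.

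Next, the hypothesis $\delta_E(f_n,\phi_n) \to 0$ and the triangle inequality applied to $P^{\phi_n,f_n}_Y = \frac{1}{|{\cal K}_n|}\sum_{k'} P^{\phi_n,f_n}_{Y|K=k'}$ give $\frac{1}{2}\|P^{\phi_n,f_n}_{Y|K=k} - P^{\phi_n,f_n}_Y\|_1 \le \delta_{E,n}$ for every $k$. Converting to the reverse conditional via Bayes' rule and averaging produces
\begin{align}
\int_{{\cal Y}^n} \tfrac{1}{2}\|P_{K|Y^n=y} - U_{{\cal K}_n}\|_1 \, p^{\phi_n,f_n}_Y(y)\, \mu(dy) \le \delta_{E,n},
\end{align}
where $U_{{\cal K}_n}$ is the uniform distribution on ${\cal K}_n$. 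Because $K$ is discrete on a set of size $|{\cal K}_n|$, Fannes's inequality applies pointwise in $y$ to give
\begin{align}
\log|{\cal K}_n| - H(P_{K|Y^n=y}) \le \epsilon_n(y)\log|{\cal K}_n| + h(\epsilon_n(y)),
\end{align}
where $\epsilon_n(y) := \tfrac{1}{2}\|P_{K|Y^n=y}-U_{{\cal K}_n}\|_1$ and $h$ denotes binary entropy. Averaging over $y$ and invoking concavity and monotonicity of $h$ on $[0,1/2]$ yields
\begin{align}
I(K;Y^n) = \log|{\cal K}_n| - H(K|Y^n) \le \delta_{E,n}\log|{\cal K}_n| + h(\delta_{E,n}).
\end{align}

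Combining the two steps gives $(1-\delta_{E,n})\log|{\cal K}_n| \le E(\phi_n) + h(\delta_{E,n})$; dividing by $n$ and letting $n\to\infty$ yields \eqref{KLRE} since $\delta_{E,n}\to 0$ and $h(\delta_{E,n})/n \to 0$. The only step that needs care is the application of Fannes in the possibly continuous output setting, but since $K$ is discrete for every fixed $y$ the pointwise bound poses no difficulty, and the measurability required to integrate the pointwise estimate against $p^{\phi_n,f_n}_Y(y)\mu(dy)$ is supplied by the general-probability-space framework of Section \ref{S51T}. I do not anticipate any further obstacle.
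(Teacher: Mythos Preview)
Your proof is correct and reaches the same conclusion as the paper, but by a genuinely different route. The paper invokes a one-shot converse lemma (Proposition~\ref{PO1}, taken from \cite[Lemma 30]{Ran-Mar}) that lower-bounds $\tfrac{1}{2}\|P_{f(M)Y}-P_{f(M)}\times P_Y\|_1$ by a tail probability of the conditional self-information $-\log P_{M|Y}$; it then translates this tail probability into the lower bound $H(M|Y)\ge \gamma\bigl(1-\delta_E(f,\phi)-2^{\gamma}/|\im f|\bigr)$ and optimizes $\gamma$. You instead split $\log|{\cal K}_n|=H(K|Y^n)+I(K;Y^n)$, bound the first term by $E(\phi_n)=H(M|Y^n)$ via data processing ($K=f_n(M)$), and control $I(K;Y^n)$ through the Fannes--Audenaert continuity of entropy applied pointwise in $y$ and then averaged. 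Both arguments land on an inequality of the form $(1-\delta_{E,n})\log|{\cal K}_n|\le E(\phi_n)+o(n)$.

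What each buys: your argument is fully self-contained with textbook tools and avoids importing the specialized Proposition~\ref{PO1}; the paper's information-spectrum style bound is closer to a one-shot statement and would be the natural starting point if one wanted finite-blocklength refinements. One minor remark on your write-up: the Jensen step for $h$ uses concavity of $h$ on all of $[0,1]$ (not just $[0,1/2]$), and you should note that $\epsilon_n(y)=\tfrac{1}{2}\|P_{K|Y=y}-U_{{\cal K}_n}\|_1\le 1-1/|{\cal K}_n|$ automatically (since $U_{{\cal K}_n}$ has full support), so the Audenaert form of the continuity bound applies at every $y$ without a separate case for $\epsilon_n(y)>1/2$.
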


\begin{proofof}{Theorem \ref{TH8}}
The converse part of Theorem \ref{TH8} follows from 
the combination of Theorem \ref{Converse} and Lemma \ref{KLR}, which is shown in Section \ref{S8C}.

The direct part of Theorem \ref{TH8} can be shown as follows.
For a given $\alpha>1$, 
the combination of Theorem \ref{TH7} and Corollary \ref{Cor46} implies 
${{\cal C}_\alpha} \subset {\cal R}_{G}^B$.
Taking the limit $\alpha \to 1$, we have
$\overline{\cal C} \subset \overline{{\cal R}_{G}^B}$.
In the same way, using Theorem \ref{TH7} and Corollary \ref{CorT}, we can show 
$\overline{\cal C} \subset \overline{{\cal R}_{(u,d)}^B}$.
\end{proofof}

\begin{figure}[t]
\begin{center}
\includegraphics[scale=0.55]{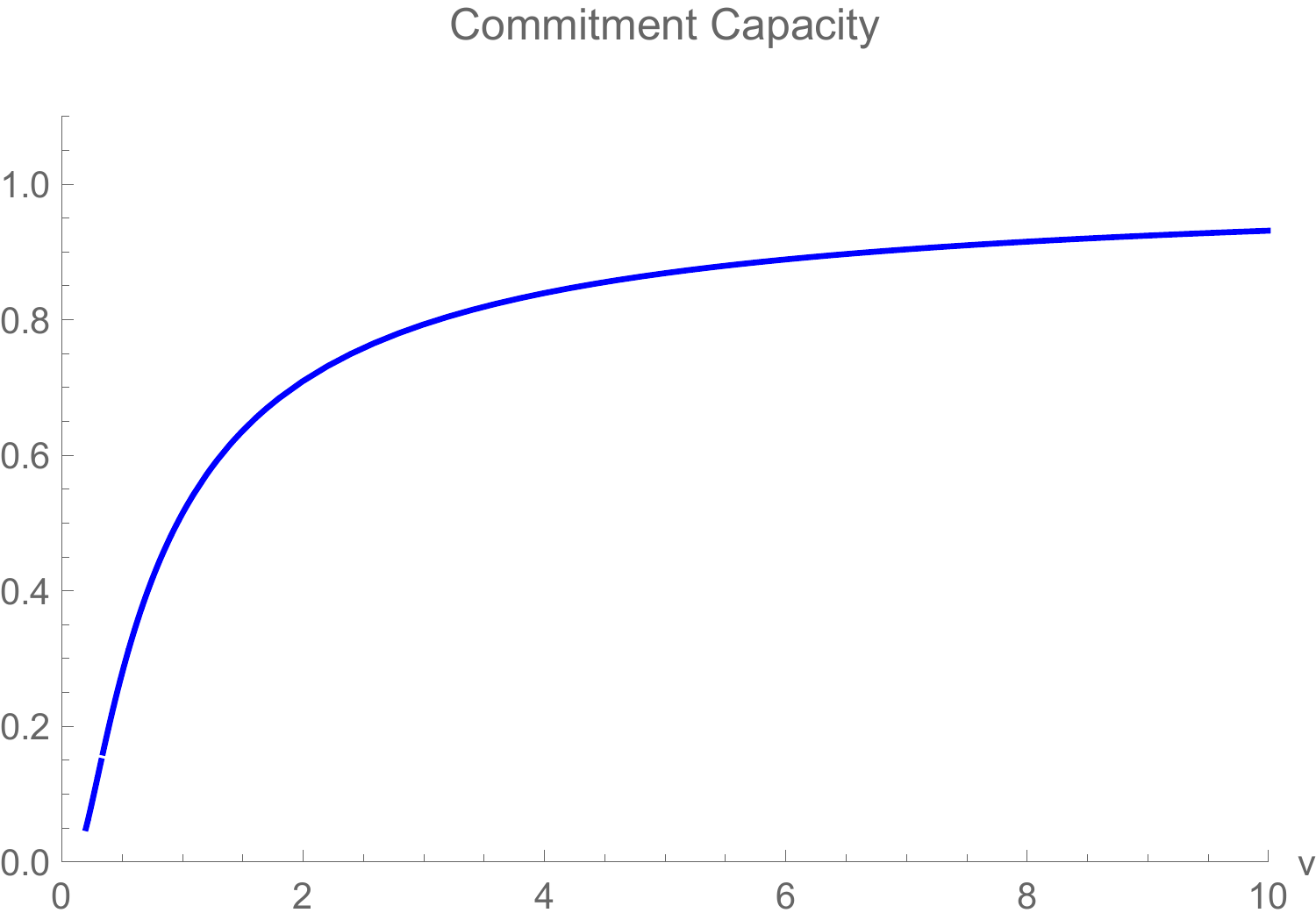}
\end{center}
\caption{
Numerical plot of the commitment capacity for 
AWGN channel with BPSK modulation. 
The vertical axis shows the commitment capacity,
and the horizontal axis shows the noise power of the 
AWGN channel.
$x \in \bF_2 \mapsto Y=(-1)^x+ N$, where 
$N$ subject to the Gaussian distribution with average $0$ and variance $v$. 
}
\Label{F-capacity}
\end{figure}%

 \subsection{Randomized construction (Proof of Theorem \ref{TH7})}\Label{S8B}
To show Theorem \ref{TH7},
we treat the set of messages ${\cal M}$ as a vector space $\bF_p^{\sm}$ over the finite field $\bF_p$.
For a linear regular hash function $f$ from $\bF_p^{\sm}$ to 
${\cal K}:=\bF_p^{\sk}$ and a code $\phi $,
we define the following value; 
\begin{align}
\bar\delta_E( f,\phi )
:=\sum_{k \in {\cal K}}\frac{1}{2 |{\cal K}|} \| P^{\phi,f}_{Y|K=k} - P^{\phi,f}_{Y} \|_1 
\ge \frac{1}{2}\delta_E( f,\phi ),\Label{MOH}
\end{align}
where the inequality follows from the triangle inequality.
We denote the joint distribution of $K$ and $Y$ by $P_{K, Y}^{\phi,f}$ 
when $K$ is assumed to be subject to the uniform distribution on ${\cal K}$.
Then, the definition of $\bar\delta_E( f,\phi )$ is rewritten as
\begin{align}
\bar\delta_E( f,\phi )
=\frac{1}{2}\| P_{K, Y}^{\phi,f} - P_{K,\uni}\times P_Y^{\phi,f} \|_1.
\end{align}

In the following, we employ a randomized construction.
That is, we randomly choose a linear regular hash function $f_S$ from 
$\bF_p^{\sm}$ to $\bF_p^{\sk}$, where 
$S$ is a random seed to identify the function $f_S$.
A randomized function $f_S$ is called a universal2 hash function
when the collision probability satisfies the inequality 
\begin{align}
\Pr \{ f_S(m)=f_S(m')\} \le p^{-\sk}
\Label{HTVD}
\end{align}
for any distinct elements $m\neq m' \in \bF_p^{\sm}$ \cite{Carter,WC81}.

When $K$ is subject to the uniform distribution on ${\cal K}$, 
the stochastic behavior of $K$ can be simulated as follows.
First, $M$ is generated according to the uniform distribution on ${\cal M}$.
Then, 
the obtained outcome $K=f_s(M)$ of $f_s$ with a fixed $s$ is subject to the uniform distribution on ${\cal K}$.
When $f_S$ is a universal2 hash function with a variable $S$,
the R\'{e}nyi conditional entropy version of universal hashing lemma 
\cite[(67)]{Tight}\cite[Lemma 27]{Ran-Mar} \cite[Proposition 21]{epsilon}
implies that
\begin{align}
\rE_S \delta_E( f_S,\phi )
\le \frac{3}{2} |{\cal K}|^{\frac{t}{1+t}}2^{-\frac{t}{1+t} H_{1+t}(M|Y)  }.
\end{align}
Hence, there exists an element $s$ such that
\begin{align}
\delta_E( f_s,\phi )
\le \frac{3}{2} |{\cal K}|^{\frac{t}{1+t}}2^{-\frac{t}{1+t} H_{1+t}(M|Y)  }. \Label{SO1}
\end{align}
Due to Markov inequality, there exists a subset $\bar{\cal K}\subset {\cal K}$ with cardinality
$|{\cal K}|/p $ such that
any element $k \in \bar{\cal K}$ satisfies that
\begin{align}
\frac{1}{2 } \| P^{\phi,f_s}_{Y|K=k} - P^{\phi,f_s}_{Y} \|_1
\le \frac{p}{p-1}\delta_E( f_s,\phi ).\Label{APL}
\end{align}
This is because the number of elements that does not satisfy \eqref{APL} is upper bounded by 
$\frac{p-1}{p}|{\cal K}| $.
Hence, any elements $k,k' \in \bar{\cal K}$ satisfy that
\begin{align}
\frac{1}{2 } \| P^{\phi,f_s}_{Y|K=k} - P^{\phi,f_s}_{Y|K=k'} \|_1
\le \frac{2p}{p-1}\delta_E( f_s,\phi ).\Label{SO2}
\end{align}
The combination of \eqref{SO1} and \eqref{SO2} imply that
any elements $k,k' \in \bar{\cal K}$ satisfy that
\begin{align}
\frac{1}{2 } \| P^{\phi,f_s}_{Y|K=k} - P^{\phi,f_s}_{Y|K=k'} \|_1
\le \frac{3p}{p-1} |{\cal K}|^{\frac{t}{1+t}}2^{-\frac{t}{1+t} H_{1+t}(M|Y)  }.
\Label{SO3}
\end{align}
Choosing $\bar{\cal M}$ to be $f_s^{-1}(\bar{\cal K})$, we find that 
\eqref{SO3} is the same as \eqref{SO4} due to the definition \eqref{AOY}.
\hspace*{\fill}~\QED\par\endtrivlist\unskip

\subsection{Proof of Lemma \ref{KLR}}\Label{S8C}
To show Lemma \ref{KLR}, we prepare the following proposition.
\begin{proposition}[\protect{\cite[Lemma 30]{Ran-Mar}}]\Label{PO1} 
Any function $f$ defined on ${\cal M}$ and a joint distribution on ${\cal M}\times {\cal Y}$
satisfy the following inequality
\begin{align}
&\frac{1}{2}\|P_{f(M)Y}- P_{f(M)}\times P_{Y}\|_1\nonumber \\
 \ge & \sup_{\gamma \ge 0}\left[ P_{MY}\left\{ \log \frac{1}{P_{M|Y}(m|y)} < \gamma \right\} - \frac{2^{\gamma}}{|\im f  |} \right].
\end{align}
\hfill $\square$
\end{proposition}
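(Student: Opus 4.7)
The plan is to use the variational representation $\frac{1}{2}\|P - Q\|_1 = \sup_{A}(P(A) - Q(A))$ of total variation distance, and to exhibit an explicit test set $A_\gamma \subseteq \im f \times {\cal Y}$ for each $\gamma \geq 0$. Concretely, I would define the ``high-likelihood'' set
\begin{align*}
B_\gamma := \Bigl\{(m, y) : \log \tfrac{1}{P_{M|Y}(m|y)} < \gamma \Bigr\}
\end{align*}
on the $M$-$Y$ side, whose $P_{MY}$-mass is precisely the first term on the right-hand side of the bound, and push it through $f$ to obtain the corresponding set on the $f(M)$-$Y$ side,
\begin{align*}
A_\gamma := \bigl\{(k, y) \in \im f \times {\cal Y} : \exists\, m \in f^{-1}(k) \text{ with } (m, y) \in B_\gamma\bigr\}.
\end{align*}
The proof then reduces to bounding $P_{f(M)Y}(A_\gamma)$ from below and $(P_{f(M)} \times P_Y)(A_\gamma)$ from above.

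The lower bound is the easy direction. For each $(m, y) \in B_\gamma$ the mass $P_{MY}(m, y)$ is counted in $P_{f(M)Y}(f(m), y) = \sum_{m' \in f^{-1}(f(m))} P_{MY}(m', y)$, and $(f(m), y) \in A_\gamma$ by construction, so summing over $B_\gamma$ yields $P_{f(M)Y}(A_\gamma) \geq P_{MY}(B_\gamma)$.

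For the upper bound, the key step is a cardinality estimate on the $y$-slice $A_\gamma^{(y)} := \{k : (k, y) \in A_\gamma\}$. Every $k \in A_\gamma^{(y)}$ is the image under $f$ of some $m$ with $P_{M|Y}(m|y) > 2^{-\gamma}$, and the set of such $m$'s has cardinality strictly less than $2^\gamma$ because $\sum_m P_{M|Y}(m|y) = 1$; hence $|A_\gamma^{(y)}| < 2^\gamma$. Combined with $P_{f(M)}(k) \leq 1/|\im f|$ (valid in the regime of application, namely when $P_M$ is uniform and $f$ is regular so that $P_{f(M)}$ is uniform on $\im f$) and averaging over $y$ weighted by $P_Y$, this yields $(P_{f(M)} \times P_Y)(A_\gamma) \leq 2^\gamma/|\im f|$.

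Subtracting gives $\frac{1}{2}\|P_{f(M)Y} - P_{f(M)} \times P_Y\|_1 \geq P_{f(M)Y}(A_\gamma) - (P_{f(M)} \times P_Y)(A_\gamma) \geq P_{MY}(B_\gamma) - 2^\gamma/|\im f|$, and taking the supremum over $\gamma \geq 0$ closes the argument. The main delicate point, and the step I expect to require the most care, is the bound $\sum_{k \in A_\gamma^{(y)}} P_{f(M)}(k) \leq 2^\gamma/|\im f|$: the cardinality bound on $A_\gamma^{(y)}$ is immediate from the probability-mass argument, but converting it to a probability bound in this precise form leans on the uniform structure of $P_{f(M)}$, which is implicit in how the proposition is invoked here.
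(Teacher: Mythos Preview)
The paper does not supply its own proof of this proposition; it is quoted verbatim from \cite[Lemma 30]{Ran-Mar} and marked with a $\square$. So there is no in-paper argument to compare against, and your test-set argument via the variational characterisation of total variation is exactly the standard route.

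Your caveat at the end is not just care but is essential, and you should promote it from a parenthetical to an explicit hypothesis. As literally stated (``any function $f$ \dots\ and a joint distribution''), the inequality is false: take $|{\cal Y}|=1$, ${\cal M}=\{1,2\}$, $f$ the identity, and $P_M(1)=0.9$. Then the left side is $0$ because $Y$ is trivial, while at $\gamma=\tfrac12$ the right side is $0.9 - 2^{1/2}/2 \approx 0.19 > 0$. What makes your Step~4 bound $\sum_{k\in A_\gamma^{(y)}} P_{f(M)}(k)\le 2^{\gamma}/|\im f|$ go through is precisely the uniformity of $P_{f(M)}$, which in the paper's sole use of this proposition (Section on the proof of Lemma~\ref{KLR}) is guaranteed because $M$ is taken uniform on ${\cal M}$ and $f$ is a regular hash function. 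So your proof is correct for the regime in which the proposition is actually applied; just state that hypothesis up front rather than leaving it implicit.
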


We focus on the joint distribution $P_{MY}$ when Alice generates $M$ according to the uniform distribution on ${\cal M}$ and chooses $X^n$ as $\phi(M)$.
Let $p$ be the probability 
$P_{MY}\left\{ \log \frac{1}{P_{M|Y}(m|y)} < \gamma \right\} $.
Then, the conditional entropy $H(M|Y)$ is lower bounded as
\begin{align}
H(M|Y) \ge \gamma (1-p).\Label{AK0}
\end{align}
The quantity $\delta_{E}(f,\phi)$ is evaluated as
\begin{align}
&\delta_{E}(f,\phi)
=\max_{k,k' \in {\cal K}}
\frac{1}{2 } \| P^{\phi,f}_{Y|K=k} - P^{\phi,f}_{Y|K=k'} \|_1
\nonumber \\
\ge &
\sum_{k, k' \in {\cal K}}
\frac{1}{2|{\cal K}|^2 } \| P^{\phi,f}_{Y|K=k} - P^{\phi,f}_{Y|K=k'} \|_1 \nonumber \\
\ge &
\sum_{k\in {\cal K}}
\frac{1}{2|{\cal K}| } \| P^{\phi,f}_{Y|K=k} - P_{Y} \|_1 \nonumber \\
=&\frac{1}{2}\|P_{f(M)Y}- P_{f(M)}\times P_{Y}\|_1 
\stackrel{(a)}{\ge} p- \frac{2^{\gamma}}{|\im f  |},
\end{align}
where $(a)$ follows from Proposition \ref{PO1}.
Hence, we have $\delta_{E}(f,\phi)+\frac{2^{\gamma}}{|\im f  |}\ge p$.
Applying this relation to \eqref{AK0}, we have
\begin{align}
H(M|Y) \ge \gamma \big(1- \delta_{E}(f,\phi)-\frac{2^{\gamma}}{|\im f  |}\big).
\end{align}
Therefore, 
\begin{align}
\gamma \big(1- \delta_{E}(f,\phi)-\frac{2^{\gamma}}{|(\phi,D,f)|_3}\big) \le E(\phi).
\end{align}
Choosing $\gamma= \log |(\phi_n,D_n,f_n)|_3-\sqrt{n}$, we have
\begin{align}
&(\log |(\phi_n,D_n,f_n)|_3-\sqrt{n}) (1- \delta_{E}(f_n,\phi_n)-2^{-\sqrt{n}})\nonumber \\
 \le & E(\phi_n).
\end{align}
Dividing the above by $n$ and taking the limit, we have \eqref{KLRE}.
\hspace*{\fill}~\QED\par\endtrivlist\unskip

\section{Proof of Converse Theorem}\Label{S-C}
In order to show Theorem \ref{Converse}, we prepare the following lemma.
\begin{lem}\Label{L4}
For $X^n=(X_{1}, \ldots, X_{n})$, 
we choose the joint distribution $P_{X^n}$.
Let $Y^n=(Y_{1}, \ldots, Y_{n})$ be the channel output variables of the inputs $X^n$ via the channel $\bW$.
Then, using the chain rule, we have
\begin{align}
I(X^n;Y^n) & 
= \sum_{j=1}^n I( X_{j};Y_j |Y^{j-1})
,\Label{LLP3}\\
H(X^n) & \le \sum_{j=1}^n H( X_{j}|Y^{j-1}).\Label{LLP2}
\end{align}
\hfill $\square$
\end{lem}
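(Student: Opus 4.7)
The plan is to derive both relations from the standard chain rule combined with two structural facts about the memoryless channel $\bW$: first, that $Y_j$ depends on $X^n$ only through $X_j$; and second, that $Y^{j-1}$ depends on $X^n$ only through $X^{j-1}$. These two conditional independence statements convert the generic chain-rule expressions for $I(X^n;Y^n)$ and $H(X^n)$ into the coordinate-wise bounds claimed.

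For the equality \eqref{LLP3}, I would first apply the chain rule for mutual information in the $Y$-variables,
\begin{align}
I(X^n;Y^n) = \sum_{j=1}^n I(X^n; Y_j \mid Y^{j-1}),
\end{align}
and then reduce each term on the right to $I(X_j; Y_j \mid Y^{j-1})$. Writing $X^{\setminus j} := (X^{j-1}, X_{j+1}, \ldots, X_n)$, a further chain rule gives
\begin{align}
I(X^n; Y_j \mid Y^{j-1}) = I(X_j; Y_j \mid Y^{j-1}) + I(X^{\setminus j}; Y_j \mid X_j, Y^{j-1}).
\end{align}
The second term vanishes because the memoryless property yields the Markov chain $(X^{\setminus j}, Y^{j-1}) - X_j - Y_j$: indeed, $P(y_j \mid x^n, y^{j-1}) = w_{x_j}(y_j)$ depends only on $x_j$. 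Summing over $j$ gives \eqref{LLP3}.

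For \eqref{LLP2}, I would start from the entropy chain rule
\begin{align}
H(X^n) = \sum_{j=1}^n H(X_j \mid X^{j-1}),
\end{align}
and then argue that each summand satisfies $H(X_j \mid X^{j-1}) \le H(X_j \mid Y^{j-1})$. The key observation is that since the channel is memoryless, $Y^{j-1}$ is obtained from $X^{j-1}$ alone; hence $X_j - X^{j-1} - Y^{j-1}$ is a Markov chain. The data processing inequality then yields $I(X_j; X^{j-1}) \ge I(X_j; Y^{j-1})$, which rearranges to the desired inequality. Summing over $j$ gives \eqref{LLP2}.

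There is no serious obstacle here; the only thing one must be careful about is verifying the two Markov relations from the memoryless structure rather than just asserting them, since both the equality in \eqref{LLP3} and the data-processing step in \eqref{LLP2} rely on them. Once these are in place, the rest is routine manipulation of chain rules.
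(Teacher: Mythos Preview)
Your proposal is correct and follows essentially the same route as the paper: both arguments expand $I(X^n;Y^n)$ via the chain rule in the $Y$-variables and use the memoryless Markov relation to drop the non-$X_j$ coordinates, and both derive \eqref{LLP2} from $H(X^n)=\sum_j H(X_j\mid X^{j-1})$ together with the Markov chain $X_j - X^{j-1} - Y^{j-1}$. The only cosmetic difference is that the paper phrases the entropy comparison as $H(X_j\mid X^{j-1})=H(X_j\mid X^{j-1},Y^{j-1})$ and then invokes $I(X_j;X^{j-1}\mid Y^{j-1})\ge 0$, whereas you invoke data processing on $I(X_j;X^{j-1})\ge I(X_j;Y^{j-1})$; these are equivalent.
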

The proof of Lemma \ref{L4} is given in Appendix \ref{A-L4}.

\noindent{\it Proof of Theorem \ref{Converse}:}\quad
The proof of Theorem \ref{Converse} is composed of two parts.
The first part is the evaluation of $R_1$.
The second part is the evaluation of $R_1-R_2$.
The key point of the first part is the use of \eqref{LLP2} in Lemma \ref{L4}.
The key point of the second part is the meta converse for list decoding \cite[Section III-A]{Haya}. 

\noindent{\bf Step 1:} Preparation.

\noindent We show Theorem \ref{Converse} by showing the following relations;
\begin{align}
{\cal R}_{(w,d)}^L  &\subset \overline{\cal C}
 ,\Label{Con1} \\
{\cal R}_{(s,s)}^L &\subset  \overline{{\cal C}^s}
.\Label{Con2B} 
\end{align}
because ${\cal R}_{(s,d)} \subset  \overline{\cal C}$ follows from \eqref{Con1}.
Assume that a sequence of deterministic codes $\{(\phi_n,D_n)\}$ is 
weakly secure.
We assume that 
$R_i:=\lim_{n\to \infty}\frac{1}{n}\log |(\phi_n,D_n)|_i$ converges for $i=1,2$.
For the definition of $|(\phi_n,D_n)|_i$,
see the end of Section \ref{S2-1B}.
Also, we assume that $R_3 \le \lim_{n\to \infty} \frac{1}{n} E(\phi_n)$.

Letting $M$ be the random variable of the message, 
we define the variables $X^n=(X_1, \ldots, X_n):= \phi_n(M)$. 
The random variables $Y^n=(Y_1, \ldots, Y_n)$ are defined as the output of the channel $\bW^n$, 
which is the $n$ times use of the channel $\bW$.
Choosing the set ${\cal U}:=\cup_{i=1}^n \{i\} \times {\cal Y}^{i-1}$,
we define the joint distribution $P_n \in {\cal P}({\cal U}\times \cX)$ as follows; 
$p_n(x,u):= \frac{1}{n} p_{Y^{i-1},X}(y^{i-1},x)$ for  $u=(i, y^{i-1})$.

Under the distribution $P_{n}$, we denote the channel output by $Y$.
In this proof, 
we use the notations $\sM_n: =|(\phi_n,D_n)|_1$ and $\sL_n:=|(\phi_n,D_n)|_2 $.
Also, instead of $\epsilon_{A}(\phi_n,D_n)$, we employ
$\epsilon_{A}'(\phi_n,D_n):= \sum_{m=1}^{\sM_n} \frac{1}{\sM_n}
\epsilon_{A,m}(\phi_n(m),D_n) $, which goes to zero. 

\noindent{\bf Step 2:} Evaluation of $R_1$.

\noindent 
When a code $(\phi_n,D_n)$ satisfies 
$\delta_C(\phi_n,D_n)\le 1-\epsilon_{A}(\phi_n,D_n)$,
we have
\begin{align}
\log |(\phi_n,D_n)|_1
\stackrel{(a)}{\le}&
  H(X^n)
+\log 2 \nonumber \\
\stackrel{(b)}{\le}&
n H(X|U)_{P_n}
+\log 2 \Label{MGP},
\end{align}
where $(b)$ follows from \eqref{LLP2} in Lemma \ref{L4} and the variable $U$ is defined in Step 1.
Dividing the above by $n$ and taking the limit, we have
\begin{align}
\limsup_{n\to \infty}R_1 - H(X|U)_{P_n} \le 0 \Label{MGP2}.
\end{align}

To show $(a)$ in \eqref{MGP}, we consider the following protocol.
After converting the message $M$ to $X^n$ by the encoder $\phi_n(M)$,
Alice sends the $X^n$ to Bob $\sK$ times.
Here, we choose $\sK$ to be an arbitrary large integer.
Applying the decoder $D_n$, Bob obtains $\sK$ lists that contain up to $\sK \sL_n $ messages.
Among these messages, Bob chooses  
$\hat{M}$ as the element that most frequently appears in the $\sK$ lists. 
When $ \delta_C(\phi_n,D_n)<1- \epsilon_{A,M}(\phi_n(M),D_n)$, 
the element $M$ has the highest probability to be contained in the list.
In this case, when $\sK$ is sufficiently large,
Bob can correctly decode $M$ by this method
because $1- \epsilon_{A,M}(\phi_n(M),D_n)$ is the probability that the list contains $M$
and $ \delta_C(\phi_n,D_n)$ is the maximum of the probability that the list contains $m'\neq M$.
Therefore, 
when $\delta_C(\phi_n,D_n)\le 1-\epsilon_{A}(\phi_n,D_n)$,
the probability $\epsilon_{\sK}$ of the failure of decoding goes to zero as $\sK\to \infty$.
Fano inequality shows that 
$H(M|\hat{M}) \le \epsilon_{\sK} \log |(\phi_n,D_n)|_1+\log 2$.
Then, we have
\begin{align}
&\log |(\phi_n,D_n)|_1- \epsilon_{\sK} \log |(\phi_n,D_n)|_1-\log 2
 \nonumber \\
 \le&  \log |(\phi_n,D_n)|_1- H(M|\hat{M}) \nonumber \\
 =& 
 I(M;\hat{M}) \le I(M;X^n) \Label{AL1} \\
 \le & H(X^n),
\end{align}
which implies $(a)$ in \eqref{MGP} with the limit $\sK\to \infty$.

\noindent{\bf Step 3:} Evaluation of $R_1-R_2$.

\noindent Now, we consider the hypothesis testing with two distributions 
$P(m,y^n):=\frac{1}{\sM_n} W^n(y^n| \phi_n(m))$ 
and $Q(m,y^n):=\frac{1}{\sM_n^2} \sum_{m'=1}^{\sM_n} W^n(y^n| \phi_n(m'))$
on ${\cal M}_n\times {\cal Y}^n$, where 
${\cal M}_n:=\{1, \ldots,  \sM_n\}$.
Then, we define the region ${\cal D}_n^*\subset {\cal M}_n\times {\cal Y}^n $
as $\cup_{m_1, \ldots, m_{\sL_n}} \{ m_1, \ldots, m_{\sL_n}\} \times {\cal D}_{m_1, \ldots, m_{\sL_n}}$.
Using the region ${\cal D}_n^*$ as our test, we define 
$\epsilon_Q$ as the error probability to incorrectly support $P$ while the true is $Q$.
Also, we define $\epsilon_P$ as the error probability to incorrectly support $Q$ while the true is $P$.
When we apply the monotonicity for the KL divergence between $P$ and $Q$, 
 dropping the term $\epsilon_{P}\log (1- \epsilon_Q ) $,
we have
\begin{align}
-\log \epsilon_Q 
\le \frac{D(P\|Q)+h(1-\epsilon_{P})}
{1-\epsilon_{P}},
\Label{MK5}
\end{align}
where $h$ is the binary entropy, i.e., 
$h(p):= -p \log (p) -(1-p)\log (1-p)$.
The meta converse for list decoding \cite[Section III-A]{Haya} 
shows that $\epsilon_Q \le \frac{|(\phi_n,D_n)|_2}{|(\phi_n,D_n)|_1} $
and $\epsilon_P\le \epsilon_{A}(\phi_n,D_n)$.
Since \eqref{LLP2} in Lemma \ref{L4} guarantees that
$D(P\|Q)= I(X^n;Y^n)= n I(X;Y|U)_{P_n}$, 
the relation \eqref{MK5} is converted to 
\begin{align}
& \log \frac{|(\phi_n,D_n)|_1}{|(\phi_n,D_n)|_2}
\le \frac{I(X^n;Y^n)+h(1-\epsilon_P)}
{1-\epsilon_P} \nonumber \\
\le &\frac{n I(X;Y|U)_{P_n}+h(1-\epsilon_{A}(\phi_n,D_n))}
{1-\epsilon_{A}(\phi_n,D_n)} 
\Label{MK1}
\end{align}
under the condition that $\epsilon_{A}(\phi_n,D_n) \le \frac{1}{2}$.
Dividing the above by $n$ and taking the limit, we have
\begin{align}
\limsup_{n\to \infty}R_1-R_2 - I(X;Y|U)_{P_n} \le 0 \Label{MK12}.
\end{align}

\noindent{\bf Step 4:} Evaluation of $R_3$.

\noindent 
Since the code $\phi_n$ is deterministic, 
remembering the definition of the variable $U$ given in Step 1,
we have
\begin{align}
&\log |(\phi_n,D_n)|_1- E(\phi_n)
=  H(M)-H(M|Y^n)\nonumber \\
=& I(M;Y^n) =I(X^n;Y^n)
=n I(X;Y| U)_{P_n} .
\end{align}
Dividing the above by $n$ and taking the limit, we have
\begin{align}
R_1 - R_3 \ge \limsup_{n\to \infty} 
I(X;Y| U)_{P_n}  \Label{MK13}.
\end{align}
Therefore, combining Eqs. \eqref{MGP2}, \eqref{MK12}, and \eqref{MK13},
we obtain Eq. \eqref{Con1}.

\noindent{\bf Step 5:} Proof of Eq. \eqref{Con2B}.

\noindent 
Assume that a sequence of stochastic codes $\{(\phi_n,D_n)\}$ is 
strongly secure.
Then, there exists a sequence of deterministic encoders $\{\phi_n'\}$
such that
$\epsilon_{A}(\phi_n',D_n) \le \epsilon_{A}(\phi_n,D_n)$
and
$\delta_C(\phi_n',D_n) \le \delta_{D}(D_n)$.
Since $\epsilon_{A}(\phi_n',D_n)$ and 
$\delta_C(\phi_n',D_n)$  go to zero, 
we have Eqs. \eqref{MGP2} and \eqref{MK12}.
However, the derivation of \eqref{MK13} does not hold in this case.
Since the code is stochastic, 
the equality $I(M;Y^n) =I(X^n;Y^n)$ does not hold in general.

Instead of \eqref{MK13}, we have the following derivation.
Taking the limit $\sK\to \infty$ in \eqref{AL1}, we have
\begin{align}
\log |(\phi_n,D_n)|_1 -\log 2
 \le I(M;X^n).
 \end{align}
Hence, 
\begin{align}
&I(X^n;Y^n)=
I(X^n M ;Y^n)\nonumber \\
=&I(M;Y^n)+ I(X^n;Y^n|M) \nonumber \\
\le & I(M;Y^n)+ H(X^n|M)\nonumber \\
=& I(M;Y^n)+ H(X^n)-I(X^n;M) \nonumber \\
\le &  I(M;Y^n)+ H(X^n)-\log |(\phi_n,D_n)|_1 +\log 2 \nonumber \\
= &  H(M)\!-\! H(M|Y^n)+ H(X^n)\! -\! \log |(\phi_n,D_n)|_1 +\log 2\nonumber  \\
= &  \log |(\phi_n,D_n)|_1-\log |(\phi_n,D_n)|_3\nonumber \\
&+ H(X^n)-\log |(\phi_n,D_n)|_1 +\log 2\nonumber  \\
= & - \log |(\phi_n,D_n)|_3
+ H(X^n) +\log 2 .
\end{align}
Hence,
we have
\begin{align}
 &\log |(\phi_n,D_n)|_3
\le H(X^n) +\log 2 -I(X^n;Y^n)\nonumber \\
= &H(X^n|Y^n) +\log 2
=n H(X|Y U)_{P_n} +\log 2
\end{align}
Dividing the above by $n$ and taking the limit, we have
\begin{align}
R_3 \le \liminf_{n\to \infty} 
H(X|Y U)_{P_n}  \Label{MK14}.
\end{align}
Therefore, combining Eqs. \eqref{MGP2}, \eqref{MK12}, and \eqref{MK14},
we obtain Eq. \eqref{Con2B}.
\endproof



\section{Proof of direct theorem}\Label{S-K}
As explained in Section \ref{OUT-K}, we show only the second part (ii)
based on the random coding.
First, we show Lemma \ref{LS3}.
Then, using Lemma \ref{LS3}, we show the second part (ii) by preparing 
various lemmas, 
Lemmas \ref{LL12}, \ref{LL10}, \ref{LL11} and \ref{LL19}.
Using Lemmas \ref{LL10}, and \ref{LL11}, we 
extract an encoder $\phi_n$ and messages $m$ 
that have a small decoding error probability and satisfy two conditions, which will be stated as 
the conditions \eqref{CC2} and \eqref{ZSO}.
Then, using these two conditions, 
we show that the code satisfies 
the binding condition for dishonest Alice (D)
and the equivocation version of concealing condition (B).
In particular, Lemma \ref{LL12} is used
to derive 
the binding condition for
dishonest Alice (D).

\subsection{Proof of Lemma \ref{LS3}}\Label{PfL1}
\noindent{\bf Step 1:} 
For our proof of Lemma \ref{LS3}, we prepare the following lemma.
\begin{lem}\Label{VGT}
Let ${\cal S}$ be a closed convex subset of ${\cal P}({\cal Y})$.
Assume that a distribution $P \in {\cal P}({\cal Y})\setminus {\cal S}$
has the full support ${\cal Y}$.
We choose $P'$ as
\begin{align}
P':= \argmin_{Q \in {\cal S}} D(Q\|P).
\end{align}
(i) We have $\Supp(Q)\subset \Supp(P') $ for $Q \in {\cal S}$.
(ii) For $Q \in {\cal S}$,
we have
\begin{align}
D(P'\|P)\le \mathbb{E}_{Q} [  \log p'(Y) -\log p(Y) ].\Label{Eqa}
\end{align}
\hfill $\square$
\end{lem}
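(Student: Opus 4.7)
The plan is to recognize Lemma \ref{VGT} as the standard Pythagorean / I-projection theorem for KL divergence onto a closed convex set, and to prove both parts by a perturbation argument along the segment $P_\lambda := (1-\lambda) P' + \lambda Q$, which lies in $\mathcal{S}$ by convexity for $\lambda \in [0,1]$.

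For part (i), I would argue by contradiction. Suppose there exists $Q \in \mathcal{S}$ and a measurable set $A \subset \mathcal{Y}$ with $Q(A) > 0$ and $P'(A) = 0$. Splitting $D(P_\lambda \| P)$ as an integral over $A$ and $A^c$, on $A$ the density is $p_\lambda = \lambda q$, contributing
\begin{equation*}
\int_A \lambda q(y) \log \frac{\lambda q(y)}{p(y)} \mu(dy) = \lambda Q(A) \log \lambda + O(\lambda),
\end{equation*}
while on $A^c$ the integrand is smooth in $\lambda$ near $0$ and perturbs the baseline value $D(P'\|P)$ by at most $O(\lambda)$. Hence
\begin{equation*}
D(P_\lambda \| P) - D(P' \| P) = \lambda Q(A) \log \lambda + O(\lambda),
\end{equation*}
whose right-derivative at $\lambda = 0^+$ is $-\infty$. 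Thus $D(P_\lambda \| P) < D(P' \| P)$ for all sufficiently small $\lambda > 0$, contradicting the minimality of $P'$. The full-support hypothesis on $P$ is exactly what guarantees $-\log p(y)$ is finite $\mu$-a.e.\ on $A$ and hence that the $O(\lambda)$ remainder really is of order $\lambda$, not worse.

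For part (ii), by part (i) we have $\Supp(Q) \subset \Supp(P')$, so along the whole segment $P_\lambda$ the density $p_\lambda$ is supported on $\Supp(P')$ and $\log(p_\lambda(y)/p(y))$ is smooth in $\lambda$ pointwise with an integrable envelope in a neighbourhood of $0$. Differentiating under the integral,
\begin{equation*}
\frac{d}{d\lambda} D(P_\lambda \| P) \Big|_{\lambda = 0^+}
= \mathbb{E}_Q[\log p'(Y) - \log p(Y)] - D(P' \| P),
\end{equation*}
where the ``$+1$'' term in the derivative of $p_\lambda \log p_\lambda$ drops out because $\int (q - p')\, \mu(dy) = 0$. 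Since $P_\lambda \in \mathcal{S}$ for every $\lambda \in [0,1]$ and $P'$ minimizes $D(\cdot \| P)$ on $\mathcal{S}$, this one-sided derivative must be nonnegative, which rearranges to the inequality \eqref{Eqa}.

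The main technical obstacle I expect is handling the general (possibly continuous) probability space rigorously: justifying the differentiation under the integral in part (ii) and the precise expansion in part (i) both rely on dominated-convergence arguments enabled by $\Supp(P) = \mathcal{Y}$ (for part (i)) and by the support inclusion furnished by part (i) (for part (ii)). The attainment of the minimum defining $P'$ is implicit in the statement and may be taken for granted, since it follows in the usual way from closedness of $\mathcal{S}$ together with lower semicontinuity of $Q \mapsto D(Q\|P)$.
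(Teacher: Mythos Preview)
Your argument is correct. Part (i) is essentially identical to the paper's: both perturb along the segment $P_\lambda=(1-\lambda)P'+\lambda Q$ and observe that the derivative of $D(P_\lambda\|P)$ at $\lambda=0^+$ is $-\infty$ whenever $Q$ puts mass outside $\Supp(P')$, contradicting minimality.

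For part (ii) the routes diverge slightly. The paper invokes the Pythagorean inequality $D(Q\|P')+D(P'\|P)\le D(Q\|P)$ (cited as Theorem 11.6.1 of Cover--Thomas) and then simply rewrites $D(Q\|P)-D(Q\|P')=\mathbb{E}_Q[\log p'(Y)-\log p(Y)]$. You instead compute the one-sided derivative of $\lambda\mapsto D(P_\lambda\|P)$ at $\lambda=0^+$ and use first-order optimality. These are the same idea at bottom, since your derivative computation is precisely how the Pythagorean inequality is proved in Cover--Thomas; your version is self-contained, while the paper's is shorter by outsourcing the convex-analysis step to a reference. Neither treatment fully addresses the integrability of $\log p'$ under $Q$ in the general measure-theoretic setting, so the level of rigor is comparable.
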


\begin{proof}
Now, we show (i) by contradiction.
We choose $Q \in {\cal S}$ such that $\Supp(Q)\not\subset \Supp(P') $.
We define the distribution 
$\bar{P}_t:= t Q+ (1-t)P'$.
Then, we have
\begin{align}
D(\bar{P}_t\|P)=
\sum_{y \in {\cal Y}} (\eta(\bar{p}_t(y) )-  \bar{p}_t(y)\log p(y)),
\end{align}
where $\eta(x):=x \log x$.
The derivative of $\sum_{y \in {\cal Y}}  \bar{p}_t(y)\log p(y)$ for $t$ at $t=0$ is
a finite value.
For $ y \in \Supp(P')$,
the derivative of $\eta(\bar{p}_t(y) )$ for $t$ at $t=0$ is a finite value.
For $ y \in \Supp(Q)\setminus\Supp(P')$,
the derivative of $\eta(\bar{p}_t(y) )$ for $t$ at $t=0$ is $-\infty$.
Hence, 
the derivative of $D(\bar{P}_t\|P)$ for $t$ at $t=0$ is $-\infty$.
It means that there exist a small real number $t_0>0$ such that
$D(\bar{P}_t\|P)\le D(\bar{P}_0\|P)=D(P'\|P)$.
Hence, we obtain a contradiction.

Next, we show (ii).
Theorem 11.6.1 of \cite{COVER} shows the following.
\begin{align}
D(Q\|P')+ D(P'\|P) \le D(Q\|P), 
\end{align}
which implies
\begin{align}
&D(P'\|P) \le D(Q\|P) -D(Q\|P')\nonumber \\
=& \mathbb{E}_{Q} [  \log p'(Y) -\log p(Y) ].
\end{align}
Hence, we obtain \eqref{Eqa}.
\end{proof}

\noindent{\bf Step 2:} We prove Lemma \ref{LS3} when ${\cal Y}$ is a finite set and the support of $W_x$ does not depend on 
$x\in {\cal X}$.

For $x\in {\cal X}$, we define the distribution $P_x \in {\cal P}({\cal X}\setminus \{x\})$ as
\begin{align}
P_x:= \argmin_{P \in {\cal P}({\cal X}\setminus \{x\})}
D\bigg( \sum_{x' \in {\cal X}\setminus \{x\}}P(x')W_{x'} \bigg\|W_x\bigg) \Label{IMC}
\end{align}
We choose $\xi_x$ as $\xi_x(y):=\log w_x(y)-\log w_{P_x}(y)-D(W_x\|W_{P_x})$, which satisfies \eqref{CS1}.
Applying (ii) of Lemma \ref{VGT} to the case when ${\cal S}$ is 
$\{ \sum_{x'' \in {\cal X}\setminus \{x\}}P(x'') W_{x''}\}_{P \in {\cal P}({\cal X}\setminus \{x\})}$,
we have
\begin{align}
\zeta_1&= \min_{x\neq x' \in {\cal X}} 
\mathbb{E}_{x'}[\log w_{P_x}(y)-\log w_x(y)]+D(W_x\|W_{P_x})\nonumber  \\
& \ge \min_{x\neq x' \in {\cal X}} D(W_{P_x}\|W_x)
+D(W_x\|W_{P_x})
\nonumber \\
=& \min_{x \in {\cal X}} D(W_{P_x}\|W_x)+D(W_x\|W_{P_x})>0.
\end{align}
Hence, it satisfies \eqref{CS2}.
Since the support of $W_x$ does not depend on $x\in {\cal X}$,
the function $\xi_x$ takes a finite value.
Since ${\cal Y}$ is a finite set, $\max_{x,y}\xi_x(y)$ exists.
Thus, it satisfies \eqref{CS3}.

\noindent{\bf Step 3:} We prove Lemma \ref{LS3}  when ${\cal Y}$ is a finite set  and the support of $W_x$ depends on 
$x\in {\cal X}$.

\noindent 
For an element $x \in {\cal X}$ and a small real number $\delta>0$, we define $W_{x,\delta}$ as
\begin{align}
w_{x,\delta}(y):=
\left\{
\begin{array}{ll}
(1-\delta)w_x(y) & \hbox{ for } y \in \Supp(W_x) \\
\frac{\delta}{|\Supp(W_x)|^c} & \hbox{ for } y \in \Supp(W_x)^c,
\end{array}
\right.
\end{align}
where $\Supp(P)$ is the support of the distribution $P$.
We define
\begin{align}
P_{x,\delta}:=\argmin_{P \in {\cal P}({\cal X}\setminus \{x\})}D(W_{P} \|W_{x,\delta} ).
\end{align}
We choose $\delta>0$ to be sufficiently small such that
\begin{align}
D(W_{P_{x,\delta}} \|W_{x,\delta} )&>0 \\
\log (1-\delta)+ 
\min_{P \in {\cal P}({\cal X}\setminus \{x\})}D(W_x\|W_{P}) &>0
\end{align}
for any $x \in {\cal X}$.

When 
$\Supp(W_x) \subset \cup_{x' \in {\cal X}\setminus \{x\}} \Supp(W_{x'})$,
we have 
$\Supp(W_x) \subset \Supp(P_{x,\delta})$ due to (i) of Lemma \ref{VGT}.
Then, 
\begin{align}
&\mathbb{E}_x[\log w_{x,\delta}(Y)-\log w_{P_{x,\delta}}(Y)]\nonumber \\
=&
D(W_x\|W_{P_{x,\delta}})+\log (1-\delta) \nonumber \\
\ge & \log (1-\delta)+ 
\min_{P \in {\cal P}({\cal X}\setminus \{x\})}D(W_x\|W_{P})>0.
\end{align}
Then, we define $\xi_x$ as
\begin{align}
\xi_{x}(y):=&
\log w_{x,\delta}(y)-\log w_{P_{x,\delta}}(y) \nonumber \\
&-
\mathbb{E}_x[\log w_{x,\delta}(Y)-\log w_{P_{x,\delta}}(Y)].
\end{align}
Then, we have
\begin{align}
&\mathbb{E}_x[\xi_x(Y)] =0,\Label{CS1H}\\
&\min_{x' \in {\cal X}\setminus \{x\}} \mathbb{E}_{x'}[-\xi_x(Y)] >0, \Label{CS2H}\\
&\max_{x' \in {\cal X}\setminus \{x\}} \mathbb{V}_{x'}[\xi_x(Y)] < \infty.\Label{CS3H}
\end{align}

When 
$\Supp(W_x) \not\subset \cup_{x' \in {\cal X}\setminus \{x\}} \Supp(W_{x'})$,
we have 
$\Supp(P_{x,\delta})= \cup_{x' \in {\cal X}\setminus \{x\}} \Supp(W_{x'})$ due to (i) of Lemma \ref{VGT}
because $W_{x,\delta}$ has the full support ${\cal Y}$.
Then, we define $\xi_x$ as
\begin{align}
\xi_{x}(y):=
\log w_{x,\delta}(y)-\log w_{P_{x,\delta}}(y) 
\end{align}
 for $ y \in \Supp(P_{x,\delta})$, and 
\begin{align}
&\xi_{x}(y)\nonumber\\
:=&
-\frac{
{\displaystyle\sum_{y \in \Supp(P_{x,\delta})}} w_x(y) (\log w_{x,\delta}(y)-\log w_{P_{x,\delta}}(y))
}{W_x(\Supp(P_{x,\delta})^c)}
\end{align}
for $ y \in \Supp(P_{x,\delta})^c$.
Then, we have \eqref{CS1H}, \eqref{CS2H}, and \eqref{CS3H}.
Therefore, our functions $\{ \xi_x\}_{x \in {\cal X}}$ satisfy the conditions 
\eqref{CS1}, \eqref{CS2}, and \eqref{CS3}.

\noindent{\bf Step 4:} We prove Lemma \ref{LS3}  when ${\cal Y}$ is not a finite set.
Since the channel $\bW$ satisfies Condition (W2), 
there exists a map $f$ from ${\cal Y}$ to a finite set ${\cal Y}_0$ such that
the channel $\bW\circ f^{-1}= \{ W_x\circ f^{-1} \}_{x \in {\cal X}}$ satisfies Condition (W2), 
where
$W_x\circ f^{-1}( \{y_0\}):= W_x (f^{-1} \{y_0\})$ for $y_0 \in {\cal Y}_0$.
Applying the result of Step 3 to the channel $\bW\circ f^{-1}$,
we obtain functions $\{\xi_{x,0}\}_{x \in {\cal X}}$ defined on ${\cal Y}_0$.
Then, for $x\in{\cal X}$,  we choose a function $\xi_x$ on ${\cal Y}$
as $ \xi_x(y):= \xi_{x,0}(f(y))$.
The functions $\{\xi_{x}\}_{x \in {\cal X}}$ satisfy the conditions \eqref{CS1}, \eqref{CS2}, and \eqref{CS3}.

\subsection{Preparation}
To show Theorem \ref{TH3}, we prepare notations and information quantities.
For $P \in {\cal P}({\cal X})$ and $t>0$, we define 
\begin{align}
G_{P|x}(t)&:= \log (2^t P(x)+1-P(x)) \\
G_{P,P'}(t)&:=\sum_{x\in{\cal X}}P'(x) \log (2^t P(x)+1-P(x)) .
\end{align}
Then, we have the Legendre transformation
\begin{align}
L[G_{P,P'}](r):=\min_{t>0} G_{P,P'}(t)-t r.
\end{align}
Using the $\epsilon$-neighborhood $U_{\epsilon,P}$ of $P$ with respect to the variational distance,
we define 
\begin{align}
L^\epsilon_{P}(r):= \max_{P' \in U_{\epsilon,P}} L[G_{P,P'}](r).
\end{align}
Then, we have the following lemma, which is shown in Appendix \ref{AP66}.
\begin{lem}\Label{L3d}
\begin{align}
\lim_{\delta \to +0}L[G_{P,P}](1-\delta)=-H(P).
\Label{CY2}
\end{align}
\begin{align}
 \lim_{\epsilon \to +0} L^{\epsilon}_{P}(r)=L[G_{P,P}](r).
\Label{CY1}
\end{align}
\hfill $\square$
\end{lem}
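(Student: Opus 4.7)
\noindent\textit{Proof proposal for Lemma \ref{L3d}.}
For the first equation, the starting point is the algebraic identity
\begin{align}
G_{P,P}(t) - t = \sum_{x \in {\cal X}} P(x) \log \bigl( P(x) + 2^{-t}(1-P(x)) \bigr),
\end{align}
obtained by pulling a factor of $2^t$ out of each logarithm and using $\sum_x P(x) = 1$. Differentiating summand by summand shows that the right-hand side is monotonically non-increasing in $t$; it equals $0$ at $t = 0$ and decreases to $\sum_x P(x) \log P(x) = -H(P)$ as $t \to \infty$ (with the convention $0 \log 0 = 0$). Therefore I would write
\begin{align}
L[G_{P,P}](1-\delta) = \min_{t>0} \bigl\{ (G_{P,P}(t) - t) + t\delta \bigr\},
\end{align}
so that the lower bound $L[G_{P,P}](1-\delta) \ge -H(P)$ is immediate from non-negativity of $t\delta$ and of $G_{P,P}(t)-t-(-H(P))$. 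For the matching upper bound, evaluating at the single choice $t_0 = 1/\sqrt{\delta}$ gives a value of the form $\sqrt{\delta} + (G_{P,P}(t_0) - t_0)$, which converges to $-H(P)$ as $\delta \to +0$.

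For the second equation, the lower bound $L^\epsilon_P(r) \ge L[G_{P,P}](r)$ is trivial since $P \in U_{\epsilon,P}$. For the matching upper bound, I would use the elementary estimate $0 \le \log(2^t P(x) + 1 - P(x)) \le t$ (valid for every $P(x) \in [0,1]$ and $t \ge 0$) to obtain
\begin{align}
|G_{P,P'}(t) - G_{P,P}(t)| \le t \cdot \|P' - P\|_1 \le 2\epsilon t
\end{align}
for $P' \in U_{\epsilon,P}$. Consequently $G_{P,P'}(t) - tr \le G_{P,P}(t) - t(r - 2\epsilon)$ for every $t > 0$, and minimizing in $t$ yields $L[G_{P,P'}](r) \le L[G_{P,P}](r - 2\epsilon)$. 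Taking the supremum over $P' \in U_{\epsilon,P}$ gives $L^\epsilon_P(r) \le L[G_{P,P}](r - 2\epsilon)$.

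To close the argument I need $\lim_{\epsilon \to 0} L[G_{P,P}](r - 2\epsilon) = L[G_{P,P}](r)$. Since $G_{P,P}$ is convex in $t$, the Legendre transform $r \mapsto L[G_{P,P}](r)$ is concave, and therefore continuous on the interior of its effective domain, which covers the relevant range of $r$. Combining the two bounds gives the claim.

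The one subtle point is that the Lipschitz bound blows up linearly in $t$, which would be fatal near the boundary $r = 1$ if used as a uniform perturbation estimate. The remedy above is to absorb the perturbation into a shift of the argument of $L[G_{P,P}]$, and then to rely on continuity of the concave function $L[G_{P,P}]$ rather than on uniform control of the minimizer $t^*$. This same trick is implicit in the first equation, where the minimizer $t^*(\delta)$ diverges as $\delta \to 0$ but the explicit choice $t_0 = 1/\sqrt{\delta}$ already forces both error terms to zero.
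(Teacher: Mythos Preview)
Your argument is correct and, in fact, cleaner than the paper's own treatment. For \eqref{CY2} the paper proceeds by a large-$t$ asymptotic expansion $G_{P,P}(t)-t(1-\delta)\cong -H(P)+\frac{|{\cal X}|-1}{2^t\log_e 2}+t\delta$, solves approximately for the minimizer $2^t=(|{\cal X}|-1)/\delta$, and reads off the value $-H(P)+\delta\log\frac{e(|{\cal X}|-1)}{\delta}\to -H(P)$. Your monotone rewriting $G_{P,P}(t)-t=\sum_x P(x)\log\bigl(P(x)+2^{-t}(1-P(x))\bigr)\downarrow -H(P)$ together with the single test point $t_0=\delta^{-1/2}$ gives the same limit without any approximation symbol, at the cost of not displaying the exact second-order term. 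For \eqref{CY1} the paper again writes out the large-$t$ expansion of $G_{P,P'}(t)-tr$ and argues (somewhat informally) that the individual summands stay controlled so that the minimum is continuous in $P'$; your route via the uniform bound $0\le G_{P|x}(t)\le t$, the shift $G_{P,P'}(t)-tr\le G_{P,P}(t)-t(r-2\epsilon)$, and concavity/continuity of $r\mapsto L[G_{P,P}](r)$ avoids tracking the minimizer altogether and is fully rigorous for every $r<1$, which is all the paper ever uses (it invokes the lemma only at $r=1-\epsilon_2$). The paper's expansion gives a slightly sharper quantitative picture, whereas your argument is shorter and self-contained.
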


For $\alpha>1$, we choose $R_1$, $R_2$, and $R_3$ to satisfy the conditions \eqref{NHO2A}, \eqref{NHO2B}, and \eqref{NHO2}.
For our decoder construction, we choose three real numbers $\epsilon_1,\epsilon_2>0$ and $R_4$.
The real number $R_4$ is chosen as
\begin{align}
I(X;Y)_P >R_4> R_1-R_2.\Label{C4}
\end{align}
Using Lemma \ref{L3d}, we choose $\epsilon_2$ such that
\begin{align}
-L[G_{P,P}](1-\epsilon_2)>R_1.
\end{align}
Then, we choose $\epsilon_1$ to satisfy
\begin{align}
\zeta_1\frac{\epsilon_2}{2}- \epsilon_1>0.
\end{align}
Next, we fix 
the size of message $\sM_n:=2^{n R_1}$,
the list size $\sL_n:=2^{n R_2}$,
and a number $\sM_n':=2^{n R_4}$, which is smaller than 
the message size $\sM_n$.
For $x^n =(x^n_1, \ldots, x_n^n)\in {\cal X}^n$, we define 
$w_{x^n}(y^n):= w_{x^n_1}(y_1^n)\cdots w_{x^n_n}(y_n^n)$
for $y^n=(y^n_1,\ldots, y_n^n)$.
We prepare the decoder used in this proof as follows.
\begin{define}[Decoder $D_{\phi_n}$]\Label{Def1}
Given a distribution $P$ on ${\cal X}$, 
we define the decoder $D_{\phi_n}$ for a given encoder $\phi_n$ (a map from
$\{1, \ldots, \sM_n\}$ to ${\cal X}^n$) in the following way.
Using the condition \eqref{CS4},
we define the subset 
${\cal D}_{x^n}:= \{ y^n| w_{x^n} (y^n) \ge \sM_n' w_{P^n}^{n}(y^n) ,
\xi_{x^n} (y^n)\ge -n \epsilon_1
\}$.
Then, for $y^n \in {\cal Y}^n$, 
we choose up to $\sL_n$ elements $i_1, \ldots, i_{\sL_n'}$ $(\sL_n'\le \sL_n)$
as the decoded messages
such that $ y^n \in {\cal D}_{\phi_n(i_j)}$ for $j=1, \ldots, \sL_n'$.
\hfill $\square$
\end{define}

Remember that, 
for $x^n=(x^n_1, \ldots, x^n_n),{x^n}'=({x^n_1}', \ldots, {x^n_n}')\in {\cal X}^n$, 
Hamming distance $d_H(x^n,{x^n}')$ is defined to be the number of $k$ such that $x_k^n \neq {x_k^n}'$
in Subsection \ref{OUT-K}.
In the proof of Theorem \ref{TH3}, 
we need to extract an encoder $\phi_n$ and elements $m \in {\cal M}_n$ that satisfies the following condition; \begin{align}
d_H(\phi_n(m),\phi_n(j)) >  n \epsilon_2 
\hbox{ for }\forall j\neq m. \Label{CC2} 
\end{align}
For this aim, given a code $\phi_n$ and a real number $\epsilon_2 >0$, we define the function
$\eta_{\phi_n,\epsilon_2}^C$
from ${\cal M}_n $ to $\{0,1\}$ as
\begin{align}
\eta_{\phi_n,\epsilon_2}^C(m) &:=
\left\{
\begin{array}{ll}
0 & \hbox{ when \eqref{CC2} holds} \\
1 & \hbox{ otherwise. }
\end{array}
\right. \Label{De2}
\end{align}

As shown in Section \ref{S7-C}, we have the following lemma.
\begin{lem}\Label{LL12}
When a code $\tilde{\phi}_n$ defined in a subset $\tilde{{\cal M}}_n \subset {\cal M}_n$
satisfies
\begin{align}
d_H(\tilde{\phi}_n(m),\tilde{\phi}_n(m'))> n \epsilon_2 \Label{E41}
\end{align}
for two distinct elements $ m \neq m'\in \tilde{{\cal M}}_n$,
the decoder $D_{\tilde{\phi}_n}$ defined in Definition \ref{Def1} satisfies 
\begin{align}
&\delta_{D}(D_{\tilde{\phi}_n}) 
\le
\frac{ \zeta_2}{{n} [\zeta_1\frac{\epsilon_2}{2}- \epsilon_1 ]_+^2  } .
\end{align}
\hfill $\square$
\end{lem}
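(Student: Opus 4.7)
The plan is to control, for each channel input $x^n\in{\cal X}^n$ and each message $m\in\tilde{\cal M}_n$, the probability that $m$ appears in the list produced by $D_{\tilde\phi_n}$, and then combine this per-message estimate with the pairwise Hamming separation \eqref{E41}. Because inclusion of $m$ in the list requires $y^n\in{\cal D}_{\tilde\phi_n(m)}$ and in particular $\xi_{\tilde\phi_n(m)}(y^n)\ge -n\epsilon_1$, one has $1-\epsilon_{A,m}(x^n,D_{\tilde\phi_n})\le \Pr_{x^n}[\xi_{\tilde\phi_n(m)}(Y^n)\ge -n\epsilon_1]$, and it will suffice to apply a Chebyshev bound to the independent sum $\xi_{\tilde\phi_n(m)}(Y^n)=\sum_{i=1}^n \xi_{\tilde\phi_n(m)_i}(Y_i^n)$ evaluated under $Y^n\sim W_{x^n}$.

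Under $W_{x^n}$ the summands are independent across $i$. By \eqref{CS1} each coordinate with $x_i^n=\tilde\phi_n(m)_i$ contributes zero to the mean, while by \eqref{CS2} each coordinate with $x_i^n\neq\tilde\phi_n(m)_i$ contributes at most $-\zeta_1$; hence
\begin{align}
\mathbb{E}_{x^n}[\xi_{\tilde\phi_n(m)}(Y^n)]\le -\zeta_1\,d_H(x^n,\tilde\phi_n(m)).
\end{align}
Similarly, \eqref{CS3} and independence give $\mathbb{V}_{x^n}[\xi_{\tilde\phi_n(m)}(Y^n)]\le n\zeta_2$.

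The key combinatorial step is the triangle inequality for Hamming distance: for any $x^n\in{\cal X}^n$ and distinct $m,m'\in\tilde{\cal M}_n$, the assumption \eqref{E41} implies
\begin{align}
d_H(x^n,\tilde\phi_n(m))+d_H(x^n,\tilde\phi_n(m'))\ge d_H(\tilde\phi_n(m),\tilde\phi_n(m'))>n\epsilon_2,
\end{align}
so at least one of the two summands exceeds $n\epsilon_2/2$. For that message, say $m$, the mean is at most $-n\zeta_1\epsilon_2/2 < -n\epsilon_1$ by the choice of $\epsilon_1$ made just before Definition \ref{Def1}. The one-sided Chebyshev inequality then yields
\begin{align}
\Pr_{x^n}[\xi_{\tilde\phi_n(m)}(Y^n)\ge -n\epsilon_1]\le \frac{n\zeta_2}{n^2[\zeta_1\epsilon_2/2-\epsilon_1]_+^2}=\frac{\zeta_2}{n[\zeta_1\epsilon_2/2-\epsilon_1]_+^2}.
\end{align}

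To conclude, recall that $\delta_{D,x^n}(D_{\tilde\phi_n})$ is the second-largest entry of $\{1-\epsilon_{A,m}(x^n,D_{\tilde\phi_n})\}_m$. The preceding paragraph shows that in every pair of distinct messages at most one can have inclusion probability above the bound just displayed; therefore the second-largest entry is bounded by that same quantity, and maximizing over $x^n\in{\cal X}^n$ gives the lemma. The only conceptual subtlety is that the bound must hold uniformly in $x^n$ over the entire input space rather than only over codeword inputs—this is exactly the role played by the triangle inequality, which requires no structure on $x^n$; the rest is a routine second-moment estimate using the three conditions \eqref{CS1}--\eqref{CS3} supplied by Lemma \ref{LS3}.
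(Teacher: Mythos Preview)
Your proof is correct and follows essentially the same approach as the paper's own proof in Section~\ref{S7-C}: both bound $1-\epsilon_{A,m}(x^n,D_{\tilde\phi_n})\le W^n_{x^n}(\{y^n:\xi_{\tilde\phi_n(m)}(y^n)\ge -n\epsilon_1\})$, estimate the mean and variance of $\xi_{\tilde\phi_n(m)}(Y^n)$ under $W_{x^n}$ via \eqref{CS1}--\eqref{CS3}, apply Chebyshev, and use the Hamming triangle inequality to ensure that at least one message in every pair is far enough from $x^n$ for the bound to kick in. The only cosmetic difference is that the paper first records the Chebyshev bound as a function of $d_H(x^n,{x^n}')$ and then specializes, whereas you go directly to the message realizing the larger distance; your explicit ``second-largest'' wrap-up is exactly what the paper leaves implicit in its final line.
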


\subsection{Proof of Theorem \ref{TH3}}
\noindent {\bf Step 1}: Lemmas related to random coding.

\noindent To show Theorem \ref{TH3},
we assume that the variable $\Phi_n(m)$ for $m \in {\cal M}_n$
is subject to the distribution $P^n$ independently.
Then, we have the following four lemmas, which are shown later.
In this proof, we treat the code $\Phi_n$ as a random variable.
Hence, the expectation and the probability for this variable
are denoted by $\rE_{\Phi_n} $ and ${\rm Pr}_{\Phi_n}$, respectively.

\begin{lem}\Label{LL10}
When 
\begin{align}
I(X;Y)_P >R_4> R_1-R_2,\Label{C4LL}
\end{align}
we have the average version of Verifiable condition (A), i.e., 
\begin{align}
\lim_{n \to \infty}
\rE_{\Phi_n} 
\sum_{m=1}^{\sM_n}
\frac{1}{\sM_n} 
\epsilon_{A,m}(\Phi_n,D_{\Phi_n}) 
=0
 \Label{ER1}.
\end{align}
\hfill $\square$
\end{lem}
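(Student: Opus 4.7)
The plan is to bound the average error probability by a standard random-coding analysis split into two failure modes. By symmetry in $m$, it suffices to control $\rE_{\Phi_n}[\epsilon_{A,1}(\Phi_n, D_{\Phi_n})]$. The event that message $1$ is missing from the decoded list is contained in $E_1 \cup E_2$, where $E_1 := \{Y^n \notin {\cal D}_{\Phi_n(1)}\}$ and $E_2 := \{|\{j \neq 1 : Y^n \in {\cal D}_{\Phi_n(j)}\}| \ge \sL_n\}$. Indeed, on $E_1^c \cap E_2^c$ the pool of indices $i$ with $Y^n \in {\cal D}_{\Phi_n(i)}$ has size at most $\sL_n$ and contains $1$, so any consistent ``up to $\sL_n$'' selection retains $1$.

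For $E_1$ I split according to the two defining constraints of ${\cal D}_{x^n}$. The probability that $w_{\Phi_n(1)}(Y^n) < \sM_n' w_{P^n}^n(Y^n)$ equals the probability that $\tfrac{1}{n}\log \tfrac{w_{\Phi_n(1)}(Y^n)}{w_{P^n}^n(Y^n)} < R_4$, and this log-likelihood ratio is a sum of $n$ i.i.d.\ terms with mean $I(X;Y)_P > R_4$ under $(\Phi_n(1), Y^n) \sim P^n \bW^n$, so the weak law of large numbers drives the probability to zero. The probability that $\xi_{\Phi_n(1)}(Y^n) < -n\epsilon_1$ is handled by Chebyshev: conditionally on $\Phi_n(1) = x^n$, the sum $\sum_i \xi_{x_i}(Y_i)$ is an independent zero-mean sum by \eqref{CS1} with per-term variance at most $\zeta_2$ by \eqref{CS3}, so the probability is at most $\zeta_2/(n\epsilon_1^2)$ uniformly in $x^n$.

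For $E_2$ I bound the expected number of spurious candidates. For any fixed $y^n$, using the identity $\sum_{x^n} P^n(x^n) w_{x^n}(y^n) = w_{P^n}^n(y^n)$,
\begin{align*}
\Pr_{\Phi_n(j) \sim P^n}\{y^n \in {\cal D}_{\Phi_n(j)}\}
&\le \sum_{x^n : w_{x^n}(y^n) \ge \sM_n' w_{P^n}^n(y^n)} P^n(x^n) \\
&\le \frac{1}{\sM_n'} \sum_{x^n} P^n(x^n) \frac{w_{x^n}(y^n)}{w_{P^n}^n(y^n)} \\
&= \frac{1}{\sM_n'}.
\end{align*}
Since $\Phi_n(j)$ is independent of $(\Phi_n(1), Y^n)$ for $j \neq 1$, the bound persists after integrating out $y^n$, so the expected count is at most $(\sM_n - 1)/\sM_n' \le 2^{n(R_1 - R_4)}$. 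Markov's inequality then yields $\Pr(E_2) \le 2^{n(R_1 - R_4 - R_2)}$, which vanishes by the right inequality of \eqref{C4LL}. Combining the three vanishing contributions gives \eqref{ER1}.

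The only subtle point is the reduction of the list-decoding error to $E_1 \cup E_2$: whatever tie-breaking rule the decoder uses when more than $\sL_n$ candidates qualify, forcing at least $\sL_n$ \emph{competing} indices is already enough to possibly exclude $1$, which is precisely what $E_2$ expresses. Beyond that, each piece succumbs to a first- or second-moment estimate, and no step uses finiteness of ${\cal Y}$, so the argument applies uniformly in the general-probability-space setting treated elsewhere in the paper.
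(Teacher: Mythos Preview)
Your proof is correct and follows essentially the same approach as the paper. The paper packages the decomposition $E_1\cup E_2$ together with Markov's inequality as a preliminary Lemma~\ref{NMU}, and then, after averaging over the random code, bounds the spurious-candidate term by conditioning on the competing codeword $x^n$ and using $W_{P^n}(\{y^n: w_{x^n}(y^n)\ge \sM_n' w_{P^n}(y^n)\})\le 2^{-nR_4}W_{x^n}(\{\cdots\})$; you instead condition on $y^n$ and bound the $P^n$-probability of qualifying codewords by $1/\sM_n'$, which is the dual order of integration of the same change-of-measure estimate. The treatment of $E_1$ (split into the likelihood-ratio constraint and the $\xi$-constraint, handled respectively by the weak law and a second-moment bound) is identical.
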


\begin{lem}\Label{LL11}
For $\epsilon_2>0$,
we have
\begin{align}
\lim_{n \to \infty}
\rE_{\Phi_n} 
\sum_{m=1}^{\sM_n}
\frac{1}{\sM_n} 
\eta_{\Phi_n,\epsilon_2}^C(m) =0
\Label{ER3}.
\end{align}
\hfill $\square$
\end{lem}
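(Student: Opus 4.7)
The plan is to reduce Lemma \ref{LL11} to a two-codeword collision estimate using the i.i.d.\ structure of $\Phi_n$, then apply a Chernoff bound conditionally on the first codeword combined with a truncation that lets the Shannon-entropy hypothesis $R_1<H(X)_P$ of \eqref{NHO2B} suffice. By symmetry every summand in the sum is equal, so the average in question reduces to $\Pr_{\Phi_n}[\exists\,j\neq 1:\, d_H(\Phi_n(1),\Phi_n(j))\leq n\epsilon_2]$. Conditioning on $\Phi_n(1)=x^n$, the remaining codewords are i.i.d.\ $P^n$ and a union bound over $j$ gives
\begin{align*}
\Pr[\cdot\mid \Phi_n(1)=x^n]\leq \min\!\bigl(1,\,(\sM_n-1)\Pr[d_H(x^n,Y^n)\leq n\epsilon_2]\bigr)
\end{align*}
with $Y^n\sim P^n$; keeping the $\min$ before the outer expectation will be the crucial step.

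For the conditional distance probability I would run a Chernoff argument. With $N=n-d_H(x^n,Y^n)=\sum_i\mathbb{1}[Y_i=x_i]$, the conditional log-MGF of $N$ equals $nG_{P,P'}(t)$ where $P'$ is the empirical distribution of $x^n$, so tail-bounding $\{N\geq n(1-\epsilon_2)\}$ and minimising over $t>0$ yields $\Pr[d_H(x^n,Y^n)\leq n\epsilon_2]\leq 2^{nL[G_{P,P'}](1-\epsilon_2)}$. Then I would average over $X^n\sim P^n$ and split on whether $P'(X^n)$ lies in an $\epsilon$-neighbourhood $U_{\epsilon,P}$ of $P$: on the typical event the exponent is $\leq L^\epsilon_P(1-\epsilon_2)$ and the truncation by $1$ is inactive for large $n$, while on the atypical event $\min(1,\cdot)$ contributes at most $1$. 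Combining,
\begin{align*}
\rE_{\Phi_n}[\eta^C_{\Phi_n,\epsilon_2}(1)]\leq \Pr[P'(X^n)\notin U_{\epsilon,P}]+2^{n(R_1+L^\epsilon_P(1-\epsilon_2))}.
\end{align*}
The first term vanishes as $n\to\infty$ by the weak law of large numbers, and for the second the earlier choice $-L[G_{P,P}](1-\epsilon_2)>R_1$ together with Lemma \ref{L3d}~\eqref{CY1} lets me pick $\epsilon>0$ small enough that $R_1+L^\epsilon_P(1-\epsilon_2)<0$.

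The main obstacle is the interplay between averaging and truncation. If one took expectation before the $\min$, linearity would reduce the bound to $(\sM_n-1)\Pr[d_H(X^n,Y^n)\leq n\epsilon_2]$, whose exponent is the marginal Bernoulli rate $-D(\epsilon_2\,\|\,p_{\neq})\to -\log\sum_x P(x)^2=-R_2(P)$ as $\epsilon_2\to 0$, forcing the strictly stronger hypothesis $R_1<R_2(P)$. Truncating first confines the significant contribution to typical $X^n$, for which the conditional exponent approaches $-H(P)$, while atypical $X^n$ are controlled solely by their prior probability, independently of $\sM_n$. This is exactly what converts the Shannon-entropy assumption $R_1<H(X)_P$ into an effective exponential decay for the expected collision probability.
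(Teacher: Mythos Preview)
Your proposal is correct and follows essentially the same approach as the paper's proof in Section~\ref{S7-E}: condition on the first codeword, apply a Chernoff/Markov bound via the cumulant $G_{P,P'}$ to get the exponent $L[G_{P,P'}](1-\epsilon_2)$, split on whether the empirical type of the first codeword lies in a small neighbourhood of $P$, and invoke Lemma~\ref{L3d}\,\eqref{CY1} to choose that neighbourhood so that $R_1+L^\epsilon_P(1-\epsilon_2)<0$. The paper phrases the truncation as the decomposition $\Pr(B_{n,i})\le \Pr(A_{n,i})\Pr(B_{n,i}\mid A_{n,i})+(1-\Pr(A_{n,i}))$ rather than your explicit $\min(1,\cdot)$, but the two are equivalent.
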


\begin{lem}\Label{LL19}
We choose $Q_{P,\alpha} \in {\cal P}({\cal Y})$ as
\begin{align}
Q_{P,\alpha}:=  \argmin_{Q \in {\cal P}({\cal Y})} D_\alpha( \bW\times P\|  Q \times P).
\Label{LL19E}
\end{align}
We have
\begin{align}
\rE_{\Phi_n} 
\sum_{i=1}^{\sM_n} \frac{1}{\sM_n} 2^{(\alpha-1)D_\alpha(W_{\Phi_n(i)}\|Q_{P,\alpha}^n)}
=2^{n (\alpha-1) I_\alpha(X;Y)_P}
\Label{HR7}.
\end{align}
\hfill $\square$
\end{lem}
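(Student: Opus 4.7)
The plan is to exploit the product structure of both the random codeword distribution $P^n$ and the chosen measure $Q_{P,\alpha}^n$, so that an $n$-fold integral factorises into an $n$-th power of a single-letter quantity that equals $2^{(\alpha-1)I_\alpha(X;Y)_P}$.

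First I would unfold the R\'enyi divergence as an integral. By definition,
\begin{align}
2^{(\alpha-1)D_\alpha(W_{x^n}\|Q_{P,\alpha}^n)}
= \int_{\cY^n} w_{x^n}(y^n)^\alpha\, q_{P,\alpha}^n(y^n)^{1-\alpha}\, \mu^n(dy^n),
\end{align}
for every $x^n\in\cX^n$. Next, since the codewords $\Phi_n(1),\ldots,\Phi_n(\sM_n)$ are i.i.d.\ with law $P^n$, the summand $\rE_{\Phi_n}[2^{(\alpha-1)D_\alpha(W_{\Phi_n(i)}\|Q_{P,\alpha}^n)}]$ does not depend on $i$, so the $\frac{1}{\sM_n}\sum_{i}$ collapses and only one expectation remains.

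The core step is then a Fubini-type exchange of expectation and integral, followed by exploiting the product form $w_{x^n}(y^n)=\prod_k w_{x_k}(y_k)$ and $q_{P,\alpha}^n(y^n)=\prod_k q_{P,\alpha}(y_k)$. This gives
\begin{align}
&\rE_{X^n\sim P^n}\!\left[\int w_{X^n}(y^n)^\alpha q_{P,\alpha}^n(y^n)^{1-\alpha}\mu^n(dy^n)\right] \nonumber\\
&= \prod_{k=1}^n \int_\cY \sum_{x\in\cX} P(x)\, w_x(y)^\alpha\, q_{P,\alpha}(y)^{1-\alpha}\,\mu(dy) \nonumber\\
&= \left(2^{(\alpha-1)D_\alpha(\bW\times P\|Q_{P,\alpha}\times P)}\right)^n.
\end{align}
Finally, by the very definition \eqref{LL19E} of $Q_{P,\alpha}$ as the minimiser and the formula \eqref{LL192E} for $I_\alpha(X;Y)_P$, the one-letter exponent equals $(\alpha-1)I_\alpha(X;Y)_P$, and raising to the $n$-th power produces $2^{n(\alpha-1)I_\alpha(X;Y)_P}$, as required.

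The only delicate point is the Fubini exchange when $\cY$ is a general measurable space: here all integrands are non-negative and measurable, so Tonelli's theorem applies directly and no integrability side-conditions are needed. Apart from this, the argument is purely a bookkeeping of the product structure; there is no probabilistic concentration to establish, in contrast to Lemmas~\ref{LL10} and \ref{LL11}.
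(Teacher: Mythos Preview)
Your proof is correct and follows essentially the same route as the paper: both exploit the i.i.d.\ product structure of $P^n$ and $Q_{P,\alpha}^n$ to factorise $2^{(\alpha-1)D_\alpha(W_{\Phi_n(i)}\|Q_{P,\alpha}^n)}$ into $n$ single-letter factors, take the expectation over the code inside the product by independence, and then identify each factor with $2^{(\alpha-1)I_\alpha(X;Y)_P}$ via the defining property \eqref{LL19E} of $Q_{P,\alpha}$. Your explicit invocation of Tonelli for the general-$\cY$ case is a useful addition that the paper leaves implicit.
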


\noindent {\bf Step 2}: 
Extraction of an encoder $\phi_n$ and messages $m$ 
with a small decoding error probability 
that satisfies the condition \eqref{CC2}.

\noindent 
We define $\epsilon_{3,n}$ as
\begin{align}
 \epsilon_{3,n}:= 
9 \rE_{\Phi_n}\sum_{m=1}^{\sM_n}
\frac{1}{\sM_n} 
\Big(\epsilon_{A,m}(\phi_n,D_{\Phi_n}) 
+\eta_{\Phi_n,\epsilon_2}^C(m) 
\Big).
\end{align}
Lemmas \ref{LL10} and \ref{LL11} guarantees that $\epsilon_{3,n} \to 0$.
Then, there exists a sequence of codes $\phi_n$ such that
\begin{align}
&\sum_{m=1}^{\sM_n}
\frac{1}{\sM_n} 
\Big(\epsilon_{A,m}(\phi_n,D_{\phi_n}) 
+\eta_{\phi_n,\epsilon_2}^C(m) 
\Big)
 \le \frac{\epsilon_{3,n}}{3} \Label{BGC}, \\
&\sum_{m=1}^{\sM_n} \frac{1}{\sM_n} 2^{(\alpha-1)D_\alpha(W_{\phi_n(m)}\|Q_{P,\alpha}^n)}
 \le 3\cdot 2^{n (\alpha-1) I_\alpha(X;Y|P)}
\Label{HR7F}.
\end{align}
Due to Eq. \eqref{BGC}, Markov inequality guarantees that
there exist $2\sM_n/3$ elements 
$\tilde{{\cal M}}_n:=
\{m_1, \ldots, m_{2 \sM_n/3}\}$
such that every element $m \in \tilde{{\cal M}}_n$ satisfies
\begin{align}
\epsilon_{A,m}(\phi_n,D_{\phi_n}) 
+\eta_{\phi_n,\epsilon_2}^C(m) 
\le \epsilon_{3,n},
\end{align}
which implies that
\begin{align}
\epsilon_{A,m}(\phi_n,D_{\phi_n}) &\le \epsilon_{3,n}  \Label{NAC1}\\
\eta_{\phi_n,\epsilon_2}^C(m)& =0 \Label{NAC}
\end{align}
because $\eta_{\phi_n,\epsilon_2}^C$ takes value 0 or 1.
Then, we define a code $\tilde{\phi}_n$
on $\tilde{{\cal M}}_n$
as $\tilde{\phi}_n(m):= {\phi}_n(m)$ for $m \in \tilde{{\cal M}}_n $.
Eq. \eqref{NAC1} guarantees Condition (A).
Eq. \eqref{HR7F} guarantees that
\begin{align}
&\sum_{m\in \tilde{\cal M}_n} \frac{1}{|\tilde{\cal M}_n|} 
2^{(\alpha-1)D_\alpha(W_{\tilde{\phi}_n(m)}\|Q_{P,\alpha}^n)}
\nonumber\\
 =&
\sum_{m\in \tilde{\cal M}_n} \frac{3}{2\sM_n} 2^{(\alpha-1)D_\alpha(W_{\phi_n(m)}\|Q_{P,\alpha}^n)}\nonumber\\
\le& \frac{9}{2}\cdot 2^{n (\alpha-1) I_\alpha(X;Y|P)}\Label{ZSO}.
\end{align}

\noindent {\bf Step 3}: Proof of the binding condition for dishonest Alice (D).

\noindent 
The relation \eqref{NAC} guarantees the condition
\begin{align}
d_H(\tilde{\phi}_n(m),\tilde{\phi}_n(m')) > n \epsilon_2
\Label{MUFV}
\end{align}
for $ m \neq m'\in \tilde{{\cal M}}_n$.
Therefore, Lemma \ref{LL12} guarantees 
the binding condition for dishonest Alice (D), i.e., 
\begin{align}
& \delta_{D}(D_{\tilde{\phi}_n}) 
\le
\frac{ \zeta_2}{{n} [\zeta_1\frac{\epsilon_2}{2}- \epsilon_1 ]_+^2  } .
\Label{EF1}
\end{align}

\noindent {\bf Step 4}: Proof of the equivocation version of concealing condition (B).

\noindent 

Eq. \eqref{ZSO} guarantees that
\begin{align}
&  \min_{Q_n \in {\cal P}({\cal Y}^n)} \sum_{m\in \tilde{\cal M}_n} 
\frac{1}{|\tilde{{\cal M}}_n|}
2^{(\alpha-1)D_\alpha(W_{\tilde{\phi}_n(m)}\|Q_n)}\nonumber \\
\le & \sum_{m\in \tilde{\cal M}_n} 
\frac{1}{|\tilde{{\cal M}}_n|}
2^{(\alpha-1)D_\alpha(W_{\tilde{\phi}_n(m)}\|Q_{P,\alpha}^n)}\nonumber \\
\le & \frac{9}{2}\cdot 2^{n (\alpha-1) I_\alpha(X;Y)_P}. 
\end{align}
Hence,
\begin{align}
 \lim_{n \to \infty} 
\frac{1}{n}E_\alpha( \tilde{\phi}_n) 
\ge  R_1- I_\alpha(X;Y)_P \ge R_3.
\end{align}

\endproof

\subsection{Proof of Lemma \ref{LL12}}\Label{S7-C}
\noindent {\bf Step 1}: 
Evaluation of $W^n_{x^n} ({\cal D}_{{x^n}'})$.

\noindent 
The conditions \eqref{CS1} and \eqref{CS2} imply that 
\begin{align}
\mathbb{E}_{{x^n}'} 
[\xi_{x^n}] 
& \le -\zeta_1 d(x^n,{x^n}').
\end{align}
The condition \eqref{CS3} implies that 
\begin{align}
\mathbb{V}_{{x^n}'} [\xi_{x^n}] 
& \le n \zeta_2.
\end{align}
Hence, applying Chebyshev inequality to 
the variable $\xi_{x^n} (Y^n) $, we have
\begin{align}
W^n_{{x^n}'} ({\cal D}_{{x^n}})
\le &
W^n_{{x^n}'} (\{ y^n| 
\xi_{x^n} (y^n) \ge - n \epsilon_1 \})\nonumber \\
\le &
\frac{n \zeta_2}{[
\zeta_1 d(x^n,{x^n}')-n \epsilon_1 ]_+^2  }\Label{Che}.
\end{align}

\noindent {\bf Step 2}: Evaluation of smaller value of 
$W^n_{x^n} ({\cal D}_{\tilde{\phi}_n(m)}) $ and $W^n_{x^n} ({\cal D}_{\tilde{\phi}_n(m')}) $.
Since Eq. \eqref{E41} implies
\begin{align}
&n \epsilon_2< d(\tilde{\phi}_n(m),\tilde{\phi}_n(m')) \nonumber\\
\le & 
d_H(x^n,\tilde{\phi}_n(m)) + d_H(x^n,\tilde{\phi}_n(m')) ,
\end{align}
we have
\begin{align}
& \max ([\zeta_1 d_H(x^n,\tilde{\phi}_n(m)) - n \epsilon_1 ]_+,\nonumber\\
&[\zeta_1 d_H(x^n,\tilde{\phi}_n(m')) - n \epsilon_1 ]_+)
\nonumber\\
\ge &[n(\zeta_1\frac{\epsilon_2}{2}- \epsilon_1) ]_+^2 .
\end{align}
Hence, \eqref{Che} guarantees that
\begin{align}
&\min(W^n_{x^n} ({\cal D}_{\tilde{\phi}_n(m)}) ,W^n_{x^n} ({\cal D}_{\tilde{\phi}_n(m')}) ) \nonumber \\
\le &
\frac{n \zeta_2}{
\max ([\zeta_1 d(x^n\!,\tilde{\phi}_n(m)) \!-\! n \epsilon_1 ]_+^2,
[\zeta_1 d(x^n\!,\tilde{\phi}_n(m')) \!-\! n \epsilon_1 ]_+^2)} \nonumber \\
\le &
\frac{ n \zeta_2}{ [n(\zeta_1\frac{\epsilon_2}{2}- \epsilon_1) ]_+^2  } 
=\frac{ \zeta_2}{{n} [\zeta_1\frac{\epsilon_2}{2}- \epsilon_1 ]_+^2  } ,
\end{align}
which implies the desired statement.
\endproof

\subsection{Proof of Lemma \ref{LL10}}\Label{S7-D}
We show Lemma \ref{LL10} by employing an idea similar to \cite{Verdu,HN}. 
First, we show the following lemma.
\begin{lem}\Label{NMU}
We have the following inequality;
\begin{align}
& \epsilon_A(\Phi_n,D_{\Phi_n})\nonumber \\
\le & 
\sum_{i=1}^{\sM_n} \frac{1}{\sM_n}
\Big(W_{\Phi_n(i)}({\cal D}_{\Phi_n(i)}^c)
+
\sum_{j\neq i } \frac{1}{\sL_n}
W_{\Phi_n(i)}({\cal D}_{\Phi_n(j)}) \Big)\Label{NML}.
\end{align}
\hfill $\square$
\end{lem}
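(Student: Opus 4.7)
The plan is to bound the error probability for each message $i$ by decomposing the failure event into two pieces, and then average over $i$. The decoder $D_{\phi_n}$ places index $i$ into the output list exactly when $Y^n \in \mathcal{D}_{\phi_n(i)}$ and the total number of qualifying indices does not exceed the list size $\sL_n$. Hence the event that $i$ is missing from the list is contained in the union of (a) $\{Y^n \notin \mathcal{D}_{\phi_n(i)}\}$, and (b) the overflow event $\{Y^n \in \mathcal{D}_{\phi_n(i)}\}\cap\{|\{j\neq i : Y^n \in \mathcal{D}_{\phi_n(j)}\}|\ge \sL_n\}$, because otherwise at most $\sL_n-1$ competitors join $i$, leaving a slot for $i$ in the list.

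For (a) the contribution is exactly $W_{\phi_n(i)}(\mathcal{D}_{\phi_n(i)}^c)$. For (b), the bound follows from Markov's inequality applied to the non-negative counting variable $N_i(y^n):=\sum_{j\neq i}\mathbf{1}[y^n\in\mathcal{D}_{\phi_n(j)}]$:
\begin{align}
&W_{\phi_n(i)}\bigl(\{y^n:N_i(y^n)\ge \sL_n\}\bigr) \nonumber\\
&\le \frac{1}{\sL_n}\mathbb{E}_{\phi_n(i)}[N_i(Y^n)]
=\frac{1}{\sL_n}\sum_{j\neq i} W_{\phi_n(i)}(\mathcal{D}_{\phi_n(j)}).
\end{align}
Summing the two contributions and averaging over $i\in\{1,\ldots,\sM_n\}$ yields the right-hand side of \eqref{NML}.

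There is essentially no obstacle here: the only subtlety is to justify carefully that the overflow characterization is correct, i.e.\ that the decoder can exclude $i$ only when at least $\sL_n$ \emph{other} indices compete, which is immediate from Definition \ref{Def1}. Once this union-bound decomposition is in place, the remainder is a one-line Markov estimate followed by interchanging the expectation over $\Phi_n$ with the sum over $i$ (justified by nonnegativity/Fubini). Note also that although the lemma is stated for $\epsilon_A(\Phi_n,D_{\Phi_n})$, the argument naturally produces the averaged error $\epsilon_A'(\Phi_n,D_{\Phi_n})$ that is actually needed in the subsequent steps of the proof of Theorem \ref{TH3}.
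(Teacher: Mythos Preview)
Your proof is correct and follows essentially the same two-case decomposition as the paper: (a) $Y^n\notin\mathcal{D}_{\Phi_n(i)}$ and (b) list overflow, with the second probability bounded via Markov's inequality on the number of competing indices. Your observation that the argument actually yields a bound on the averaged error $\epsilon_A'(\Phi_n,D_{\Phi_n})$ rather than the maximum $\epsilon_A(\Phi_n,D_{\Phi_n})$ is well taken; the paper is somewhat loose on this point, but since only the averaged version is used downstream (Lemmas~\ref{LL10} and~\ref{FHU}), this does not affect the overall argument.
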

\begin{proof}
When $i$ is sent, 
there are two cases for incorrect decoding.
The first case is the case that the received element $y$ does not belong to ${\cal D}_{\Phi_n(i)}$.
The second case is the case that there are more than $\sL_n$ elements $i'$ 
to satisfy $y \in {\cal D}_{\Phi_n(i')}$.
In fact, the second case does not always realize incorrect decoding.
However, the sum of the probabilities of the first and second cases upper bounds
the decoding error probability $\epsilon_A(\Phi_n,D_{\Phi_n})$.
Hence, it is sufficient to evaluate these two probabilities.
The error probability of the first case is given in the first term of Eq. \eqref{NML}.
The error probability of the second case is given in the second term of Eq. \eqref{NML}.
\end{proof}

Taking the average in \eqref{NML} of Lemma \ref{NMU}
with respect to the variable $\Phi_n$, we obtain the following lemma.
The following discussion employs the notations $\rE_{\Phi_n}$ and $\rE_{X^n}$, which are defined in the middle of Section \ref{S51II}.
\begin{lem}\Label{FHU}
We have the following inequality;
\begin{align}
&\rE_{\Phi_n}\epsilon_A(\Phi_n,D_{\Phi_n})\nonumber \\
\le &
\sum_{x^n\in {\cal X}^n}P^n(x^n) 
\Big(W_{x^n}({\cal D}_{x_n}^c)
+
\frac{\sM_n-1}{\sL_n}
W_{P^n}({\cal D}_{x^n})\Big).
\Label{NML2}
\end{align}
\hfill $\square$
\end{lem}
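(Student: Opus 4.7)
The plan is to obtain Lemma \ref{FHU} by taking the expectation $\rE_{\Phi_n}$ on both sides of the bound in Lemma \ref{NMU} and then exploiting two key structural facts about the setup: first, by Definition \ref{Def1} the decoding set ${\cal D}_{x^n}$ depends only on the codeword $x^n$ and not on the other codewords of the code (so ${\cal D}_{\Phi_n(i)}$ is a deterministic function of $\Phi_n(i)$ alone); second, the codewords $\Phi_n(1),\dots,\Phi_n(\sM_n)$ are chosen i.i.d.\ according to $P^n$, so for $j\neq i$ the random variables $\Phi_n(i)$ and $\Phi_n(j)$ are independent. These two facts will allow every expectation to be evaluated as a simple sum over ${\cal X}^n$.

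Carrying this out, I first handle the diagonal term. Since ${\cal D}_{\Phi_n(i)}^c$ depends only on $\Phi_n(i)\sim P^n$, I get
\begin{align}
\rE_{\Phi_n}\bigl[W_{\Phi_n(i)}({\cal D}_{\Phi_n(i)}^c)\bigr]
=\sum_{x^n\in{\cal X}^n} P^n(x^n)\, W_{x^n}({\cal D}_{x^n}^c),
\end{align}
so averaging $\frac{1}{\sM_n}$ over $i$ reproduces the first term in \eqref{NML2}. For the off-diagonal term with $j\neq i$, independence of $\Phi_n(i)$ and $\Phi_n(j)$ gives
\begin{align}
\rE_{\Phi_n}\bigl[W_{\Phi_n(i)}({\cal D}_{\Phi_n(j)})\bigr]
&=\sum_{{x^n}'} P^n({x^n}') \sum_{x^n} P^n(x^n)\, W_{{x^n}'}({\cal D}_{x^n}) \notag \\
&=\sum_{x^n} P^n(x^n)\, W_{P^n}({\cal D}_{x^n}),
\end{align}
where the last equality uses the definition $W_{P^n}(\cdot)=\sum_{{x^n}'} P^n({x^n}')\,W_{{x^n}'}(\cdot)$ from Section \ref{S51II}. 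This expression is independent of the index pair $(i,j)$, so summing over the $\sM_n-1$ values of $j$ and averaging over $i$ with weight $\tfrac{1}{\sM_n \sL_n}$ yields exactly the second term $\tfrac{\sM_n-1}{\sL_n}\sum_{x^n}P^n(x^n)W_{P^n}({\cal D}_{x^n})$.

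Adding the two contributions gives \eqref{NML2}. There is really no hard step here, as the argument is just linearity of expectation plus the pairwise independence of the random codewords; the only points requiring any care are to verify that ${\cal D}_{\Phi_n(i)}$ does not depend on codewords other than $\Phi_n(i)$ (so that the inner expectation for the diagonal term reduces to a single sum), and to recognize that the mixture distribution $W_{P^n}$ arises naturally when one averages $W_{\Phi_n(i)}$ over the independent choice of $\Phi_n(i)$.
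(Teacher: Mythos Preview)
The proposal is correct and follows exactly the approach indicated in the paper, which simply says ``Taking the average in \eqref{NML} of Lemma \ref{NMU} with respect to the variable $\Phi_n$.'' You have filled in the details of that averaging step carefully, correctly using that ${\cal D}_{x^n}$ from Definition \ref{Def1} depends only on $x^n$ and that the codewords $\Phi_n(1),\dots,\Phi_n(\sM_n)$ are i.i.d.\ under $P^n$.
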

Applying Lemma \ref{FHU}, 
we have
\begin{align}
& \rE_{\Phi_n} \epsilon_A(\Phi_n,D_{\Phi_n})\nonumber  \\
\le &
\rE_{X^n}W_{X^n} \big(\big\{y^n\big| 
2^{-n R_4} w_{X^n}(y^n)   < w^n_{P^n}(y^n) \big\} \big)\nonumber \\
&+\rE_{X^n}W_{X^n} \big(\big\{y^n\big| 
\xi_{x^n} (y^n) < - n \epsilon_1 
 \big\} \big)\nonumber \\
&+
\rE_{X^n}2^{n (R_1-R_2)}W_{P^n}
\big(\big\{y^n\big|\nonumber \\
&\qquad 2^{-n R_4} w_{X^n}(y^n)  \ge w_{P^n}(y^n)
\big\}\big)
\nonumber \\
\stackrel{(a)}{\le}&
\rE_{X^n}W_{X^n} \big(\big\{y^n\big| 
\log w_{X^n}(y^n)  -\log w^n_{P^n}(y^n) < n R_4\big\} \big)\nonumber \\
&+
\rE_{X^n}W_{X^n} \big(\big\{y^n\big| 
\xi_{x^n} (y^n) < - n \epsilon_1 \big\} \big)\nonumber \\
&+
\rE_{X^n}2^{n (R_1-R_2)}2^{-n R_4}w_{X^n}
\big(\big\{y^n\big| \nonumber \\
&\qquad
2^{-n R_4} w_{X^n}(y^n)  \ge w_{P^n}(y^n)\big\}\big)
\nonumber \\
\le &
\rE_{X^n}W_{X^n} \bigg(\bigg\{y^n\bigg| 
\frac{1}{n} (\log w_{X^n}(y^n)  \nonumber \\
&\qquad-\log w_{P^n}(y^n)) <  R_4\bigg\} \bigg)\nonumber \\
&+\rE_{X^n}W_{X^n} \bigg(\bigg\{y^n\bigg| 
\frac{1}{n} \xi_{x^n} (y^n) < - \epsilon_1 \bigg\} \bigg)
\nonumber \\
&+2^{n (R_1-R_2-R_4)}
 \Label{ER1B},
\end{align}
where $(a)$ follows from 
the relation 
\begin{align*}
&W_{P^n}
\big(\big\{y^n\big|
2^{-n R_4} w_{X^n}(y^n)  \ge w_{P^n}(y^n)
\big\}\big)\nonumber \\
\le &
2^{-n R_4}W_{X^n}
\big(\big\{y^n\big| 
2^{-n R_4} w_{X^n}(y^n)  \ge w_{P^n}(y^n)\big\}\big).
\end{align*}
The variable $\frac{1}{n} (\log w_{X^n}(y^n)  -\log w_{P^n}(y^n))$ 
is the mean of $n$ independent variables 
that are identical to the variable $\log w_X(Y)-\log w_P(Y)$ whose
average is $I(X;Y)_P> R_4$.
The variable $\frac{1}{n} \xi_{x^n} (y^n)$
is the mean of $n$ independent variables 
that are identical to the variable $\xi_{X} (Y)$ 
whose average is $0$.
Thus, the law of large number guarantees that the first and the second terms in 
\eqref{ER1B} approaches to zero as $n$ goes to infinity.
The third term in 
\eqref{ER1B} also approaches to zero due to 
the assumption \eqref{C4}.
Therefore, we obtain Eq. \eqref{ER1}. 
\endproof

\subsection{Proof of Lemma \ref{LL19}}\Label{S7-E9}
Eq. \eqref{HR7} can be shown as follows.
\begin{align}
& \rE_{\Phi} 
\sum_{i=1}^{\sM_n} \frac{1}{\sM_n} 2^{(\alpha-1)D_\alpha(W_{\Phi_n(i)}\|Q_{P,\alpha}^n)}\nonumber \\
=&
\rE_{\Phi} 
\sum_{i=1}^{\sM_n} \frac{1}{\sM_n} 
\prod_{j=1}^n
\mathbb{E}_{\Phi_n(i)_j}\Big[
\Big(\frac{w_{\Phi_n(i)_j}(Y)}{q_{P,\alpha}(Y)}\Big)^{\alpha-1}\Big] \nonumber \\
=&
\sum_{i=1}^{\sM_n} \frac{1}{\sM_n} 
\prod_{j=1}^n
\rE_{\Phi} 
\mathbb{E}_{\Phi_n(i)_j}\Big[
\Big(\frac{w_{\Phi_n(i)_j}(Y)}{q_{P,\alpha}(Y)}\Big)^{\alpha-1}\Big] \nonumber \\
=&
\sum_{i=1}^{\sM_n} \frac{1}{\sM_n} 
\prod_{j=1}^n
\sum_{x\in {\cal X}}P(x)
\mathbb{E}_{x}\Big[
\Big(\frac{w_{x}(Y)}{q_{P,\alpha}(Y)}\Big)^{\alpha-1}\Big] \nonumber \\
=&
\sum_{i=1}^{\sM_n} \frac{1}{\sM_n} 
\prod_{j=1}^n
2^{(\alpha-1) D_\alpha(\bW\times P\|  Q_{P,\alpha} \times P)}
\nonumber \\
\stackrel{(a)}{=}&
\sum_{i=1}^{\sM_n} \frac{1}{\sM_n} 
\prod_{j=1}^n
2^{(\alpha-1) I_\alpha(X;Y)_P}
\nonumber \\=&
\sum_{i=1}^{\sM_n} \frac{1}{\sM_n} 
2^{n (\alpha-1) I_\alpha(X;Y)_P}
=
2^{n (\alpha-1) I_\alpha(X;Y)_P}
\Label{HR7LL},
\end{align}
where $(a)$ follows from \eqref{LL192E} and \eqref{LL19E}.
\endproof

\subsection{Proof of Lemma \ref{LL11}}\Label{S7-E}
The outline of the proof of Lemma \ref{LL11} is the following.
To evaluate the value
$\rE_{\Phi_n} 
\sum_{m=1}^{\sM_n}
\frac{1}{\sM_n} 
\eta_{\Phi_n,\epsilon_2}^C(m)$, we convert it to the sum of certain probabilities.
We evaluate these probabilities by excluding a certain exceptional case.
That is, we show that the probability of the exceptional case is small and 
these probabilities under the condition to exclude the exceptional case is also small.
The latter will be shown by evaluating a certain conditional probability.
For this aim, 
we choose $\epsilon_4,\epsilon_5>0$ such that 
$\epsilon_4:=- L[G_{P,P}](1-\epsilon_2)-R_1 $ and
$-L_P^{\epsilon_5}(1-\epsilon_2)> R_1+\frac{\epsilon_4}{2}$.

\noindent {\bf Step 1}: Evaluation of a certain conditional probability.

\noindent 
We denote the empirical distribution of $x^n$ by $P[x^n]$.
That is, $nP[x^n](x)$ is the number of index $i=1, \ldots, n$ to satisfy $x^n_i=x$.
Hence, when $X^n=(X_1^n, \ldots, X_n^n)$ are independently subject to $P$, 
\begin{align}
&\rE_{X^n}
[2^{t (n-d(X^n,x^n)) }]
=2^{G_{P|x_1^n}(t) +\cdots+G_{P|x_n^n}(t) } \nonumber \\
=&2^{n G_{P,P[x^n]}(t)} .\Label{ZLO}
\end{align}

We define two conditions
$A_{n,i}$ and $B_{n,i}$ for the encoder $\Phi_n$ as
\begin{description}
\item[$A_{n,i}$]
$P[\Phi_n(i)] \in U_{\epsilon_5,P}$.
\item[$B_{n,i}$]
$\exists j\neq i, d(\Phi_n(i),\Phi_n(j)|P) \le n \epsilon_2$.
\end{description}
The aim of this step is the evaluation of the 
conditional probability $\Pr_{\Phi_n} (B_{n,i}|A_{n,i})$ that expresses
the probability that 
the condition $B_{n,i}$ holds under the condition $A_{n,i}$.

We choose $j \neq i$. Markov inequality implies that
\begin{align}
&\Pr_{\Phi_n(j)|\Phi_n(i)} \Big( d(\Phi_n(i),\Phi_n(j)) \le n \epsilon_2 \Big) 
\nonumber \\
=&\Pr_{\Phi_n(j)|\Phi_n(i)} \Big( n- d(\Phi_n(i),\Phi_n(j)) \ge n (1-\epsilon_2) \Big) \nonumber \\
\le &
\rE_{\Phi_n(j)|\Phi_n(i)}
[2^{t (n-d(\Phi_n(i),\Phi_n(j))) }]
2^{- tn (1-\epsilon_2)}\nonumber \\
=&2^{n G_{P,P[\Phi_n(i)]}(t) - t n (1-\epsilon_2)},
\end{align}
where $\Pr_{\Phi_n(j)|\Phi_n(i)}$ and $\rE_{\Phi_n(j)|\Phi_n(i)}$ 
are the conditional probability and the conditional expectation
for the random variable $\Phi_n(j)$ with
the fixed variable $\Phi_n(i)$.
The final equation follows from \eqref{ZLO}.
When the fixed variable $\Phi_n(i)$ satisfies the condition $A_{n,i}$,
taking the infimum with respect to $s$, we have
\begin{align}
& \Pr_{\Phi_n(j)|\Phi_n(i)} \Big( d(\Phi_n(i),\Phi_n(j)) \le n \epsilon_2 \Big) 
\nonumber \\ 
\le& 2^{n L[G_{P,P[\Phi_n(i)]}](1-\epsilon_2)}
\le 2^{n L_P^{\epsilon_5}(1-\epsilon_2)}.
\end{align}
Hence, 
\begin{align}
&\Pr_{\Phi_{n,i,c} |\Phi_n(i)} (B_{n,i})
\nonumber \\
\le &
\sum_{j (\neq i)\in {\cal M}_n}
\Pr_{\Phi_n(j)|\Phi_n(i)} \Big( d(\Phi_n(i),\Phi_n(j)) \le n \epsilon_2 \Big) 
 \nonumber \\
\le &
\sum_{j (\neq i)\in {\cal M}_n}
2^{n L_P^{\epsilon_5}(1-\epsilon_2)}
\le
2^{n (L_P^{\epsilon_5}(1-\epsilon_2)+R_1)} 
\le 2^{-n \epsilon_4/2},
\end{align}
where $ \Phi_{n,i,c}$ expresses the random variables $\{\Phi_n(j)\}_{j \neq i}$.
Then, we have
\begin{align}
\Pr_{\Phi_n} (B_{n,i}|A_{n,i}) \le 2^{-n\epsilon_4/2}.
\Label{N2}
\end{align}

\noindent {\bf Step 2}: Evaluation of 
$\rE_{\Phi_n} 
\sum_{m=1}^{\sM_n}
\frac{1}{\sM_n} 
\eta_{\Phi_n,\epsilon_2}^C(m)$.

\noindent 
The quantity $\rE_{\Phi_n} 
\sum_{m=1}^{\sM_n}
\frac{1}{\sM_n} 
\eta_{\Phi_n,\epsilon_2}^C(m)$ can be evaluated as
\begin{align}
&\rE_{\Phi_n} 
\sum_{m=1}^{\sM_n}
\frac{1}{\sM_n} 
\eta_{\Phi_n,\epsilon_2}^C(m) \nonumber \\
=& 
\frac{1}{\sM_n} 
\rE_{\Phi_n} 
|\{ i | B_{n,i} \hbox{ holds. }\}|
=
\sum_{i=1}^{\sM_n}
\frac{1}{\sM_n} \Pr_{\Phi_n}
(B_{n,i}
)\nonumber  \\
\le &
\sum_{i=1}^{\sM_n}
\frac{1}{\sM_n} 
\Big(\Pr_{\Phi_n} (A_{n,i}) \Pr_{\Phi_n} (B_{n,i}|A_{n,i}) 
\nonumber \\
&\qquad+ (1-\Pr_{\Phi_n} (A_{n,i}))
\Big)\nonumber  \\
\stackrel{(a)}{\le}
 &
2^{-n\epsilon_4/2}
+
\sum_{i=1}^{\sM_n}
\frac{1}{\sM_n} 
(1-\Pr (A_{n,i})),\Label{N5}
\end{align}
where $(a)$ follows from Eq. \eqref{N2}.

Since $P[\Phi_n(i)]$ converges to $P $ in probability,
we have
\begin{align}
\Pr_{\Phi_n} (A_{n,i}) \to 1.
\Label{N3}
\end{align}
Hence, the combination of Eqs. \eqref{N5} and \eqref{N3}
implies the desired statement.
\endproof

\section{Proof of Theorem \ref{TH4}}
\subsection{Main part of proof of Theorem \ref{TH4}}
Hamming distance $d_H$ plays a central role in our proof of Theorem \ref{TH3}.
However, since elements of $\tilde{\cal X} \setminus {\cal X}$ 
can be sent by dishonest Alice,
Hamming distance $d_H$ does not work in our proof of Theorem \ref{TH4}.
Hence, we introduce an alternative distance on $\tilde{\cal X}^n$.
We modify the distance $d$ on $\tilde{\cal X}$ as
\begin{align}
\bar{d}(x,x'):= \frac{1}{\zeta_3}\min( d(x,x'), \zeta_3),
\end{align}
where
\begin{align}
\zeta_3:= \min_{x\neq x' \in {\cal X} } d(x,x').
\end{align}
Then, we define 
\begin{align}
\bar{d}_H(x^n,{x^n}'):= \sum_{i=1}^n\bar{d}(x_i^n,{x_i^n}'),
\end{align}
which is the same as Hamming distance $d_H$ on ${\cal X}^n$.
Instead of Lemma \ref{LL12}, we have the following lemma.
\begin{lem}\Label{LL12B}
When a code $\tilde{\phi}_n$ defined in a subset $\tilde{{\cal M}}_n \subset {\cal M}_n$
satisfies
\begin{align}
{d}_H(\tilde{\phi}_n(m),\tilde{\phi}_n(m'))> n \epsilon_2 \Label{E41B}
\end{align}
for two distinct elements $ m \neq m'\in \tilde{{\cal M}}_n$,
the decoder $D_{\tilde{\phi}_n}$ defined in Definition \ref{Def1} satisfies 
\begin{align}
&\delta_{D'}(D_{\tilde{\phi}_n}) 
\le
2^{t n(2 \epsilon_1 -  \frac{ \epsilon_2}{4}\bar{\zeta}_{1,t}(\zeta_3 \frac{\epsilon_2}{4}))
}
+\frac{n \bar{\zeta}_2}{[n \epsilon_1 ]_+^2  } .\Label{E41C}
\end{align}
\hfill $\square$
\end{lem}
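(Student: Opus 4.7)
The strategy mirrors that of Lemma \ref{LL12}: for every $x^n \in \tilde{\cal X}^n$ and every pair of distinct messages $m \ne m' \in \tilde{\cal M}_n$, I will show that at least one of $W^n_{x^n}({\cal D}_{\tilde{\phi}_n(m)})$ and $W^n_{x^n}({\cal D}_{\tilde{\phi}_n(m')})$ is bounded by the right-hand side of \eqref{E41C}. Since $1-\epsilon_{A,m}(x^n,D_{\tilde\phi_n}) \le W^n_{x^n}({\cal D}_{\tilde{\phi}_n(m)})$, this immediately controls the second-largest element of $\{1-\epsilon_{A,m}(x^n,D_{\tilde\phi_n})\}_{m}$ for every $x^n$, hence $\delta_{D'}(D_{\tilde\phi_n})$.

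The first step is a purely geometric reduction. Because $\bar d$ coincides with the $0/1$-distance on ${\cal X}^2$ (by definition of $\zeta_3$), assumption \eqref{E41B} gives $\bar{d}_H(\tilde{\phi}_n(m),\tilde{\phi}_n(m')) > n\epsilon_2$, and the triangle inequality for $\bar{d}_H$ forces at least one of $\bar{d}_H(x^n,\tilde\phi_n(m))$ and $\bar{d}_H(x^n,\tilde\phi_n(m'))$ to be $\ge n\epsilon_2/2$. Fix such an $m$ and write $\tilde\phi_n(m) = (x_1,\ldots,x_n)$, $x^n = (x'_1,\ldots,x'_n)$. With the threshold $\rho := \zeta_3 \epsilon_2/4$, every ``close'' coordinate (with $d(x_i,x'_i) < \rho$) contributes strictly less than $\epsilon_2/4$ to $\bar{d}_H$, so close coordinates contribute strictly less than $n\epsilon_2/4$ in total; since each ``far'' coordinate contributes at most $1$, the number $N$ of far coordinates must satisfy $N > n\epsilon_2/4$.

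The second step bounds $W^n_{x^n}({\cal D}_{\tilde{\phi}_n(m)}) \le W^n_{x^n}(\xi_{\tilde{\phi}_n(m)}(Y^n) \ge -n\epsilon_1)$ by splitting on whether $\xi_{x^n}(Y^n) \ge -n\epsilon_1$. On the complementary event, $\xi_{x^n}(Y^n)$ deviates from its mean $0$ by at least $n\epsilon_1$; by \eqref{CS1} and \eqref{CS3C} its variance is at most $n\bar\zeta_2$, so Chebyshev gives probability at most $n\bar\zeta_2/(n\epsilon_1)^2$, which is the second term of \eqref{E41C}. On the favorable event, $\xi_{\tilde\phi_n(m)}(Y^n) - \xi_{x^n}(Y^n) \ge -2n\epsilon_1$, so Markov's inequality applied to $2^{t(\xi_{\tilde\phi_n(m)}(Y^n)-\xi_{x^n}(Y^n))}$ yields a factor $2^{2tn\epsilon_1}$ times $\mathbb{E}_{x^n}[2^{t(\xi_{\tilde\phi_n(m)}(Y^n)-\xi_{x^n}(Y^n))}]$. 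Factorizing this expectation coordinatewise, applying \eqref{CS2C} on each of the $N$ far coordinates (each contributing a factor at most $2^{-t\bar\zeta_{1,t}(\rho)}$), and bounding every remaining factor by $1$ --- valid because the factor is exactly $1$ when $x_i = x'_i$ and is $\le 2^{-t\bar\zeta_{1,t}(d(x_i,x'_i))}<1$ when $x_i\ne x'_i$ by \eqref{CS2C} with $r=d(x_i,x'_i)>0$ --- gives $\mathbb{E}_{x^n}[2^{t(\xi_{\tilde\phi_n(m)}-\xi_{x^n})}] \le 2^{-tN\bar\zeta_{1,t}(\rho)} \le 2^{-tn\epsilon_2\bar\zeta_{1,t}(\rho)/4}$. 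Combining with the $2^{2tn\epsilon_1}$ from Markov reproduces the first term of \eqref{E41C}.

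The main obstacle absent from Lemma \ref{LL12} is handling coordinates where $x'_i \in \tilde{\cal X}$ is distinct from but arbitrarily close to $x_i \in {\cal X}$: such pairs cannot arise in Lemma \ref{LL12} since ${\cal X}$ is finite, but they are generic here. This is precisely why (W3) supplies an \emph{exponential} moment bound $\bar\zeta_{1,t}(r)$ that is positive for every $r > 0$ (so close-but-unequal factors are harmlessly $\le 1$), and why the truncated distance $\bar d$ with cut-off $\zeta_3$ is introduced: it guarantees that a large total $\bar{d}_H$ must come from many genuinely far coordinates, even in the presence of many close-but-unequal ones that the Chernoff estimate cannot exploit quantitatively.
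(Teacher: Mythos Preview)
Your approach is essentially the paper's: split the event $\{\xi_{\tilde\phi_n(m)}(Y^n) \ge -n\epsilon_1\}$ according to the behaviour of $\xi_{x^n}(Y^n)$, apply Chebyshev to one piece and a Chernoff/Markov bound to the other, and control the moment generating function by counting coordinates where $d(x_i,x'_i)$ exceeds a threshold. Your direct pigeonhole-style count of ``far'' coordinates is a clean variant of the paper's Step~3, which instead applies Markov's inequality to the random coordinate index $J$ to extract $\ge k/2$ indices with $\bar d \ge k/(2n)$; both routes yield the same exponent after substituting $k \ge n\epsilon_2/2$.

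There is, however, a sign error in your split that makes the ``favorable'' branch fail as written. From $\xi_{x^n}(Y^n) \ge -n\epsilon_1$ together with $\xi_{\tilde\phi_n(m)}(Y^n) \ge -n\epsilon_1$ you \emph{cannot} conclude $\xi_{\tilde\phi_n(m)}(Y^n) - \xi_{x^n}(Y^n) \ge -2n\epsilon_1$: the first inequality gives no \emph{upper} bound on $\xi_{x^n}(Y^n)$, so the difference can be arbitrarily negative (take $\xi_{x^n}(Y^n)$ very large positive). The correct split --- and the one the paper uses --- is on whether $\xi_{x^n}(Y^n) \le n\epsilon_1$. On that event, combined with $\xi_{\tilde\phi_n(m)}(Y^n) \ge -n\epsilon_1$, you indeed get $\xi_{\tilde\phi_n(m)}(Y^n) - \xi_{x^n}(Y^n) \ge -2n\epsilon_1$; on the complement $\{\xi_{x^n}(Y^n) > n\epsilon_1\}$, Chebyshev (mean $0$, variance $\le n\bar\zeta_2$ by \eqref{CS1} and \eqref{CS3C}) still yields the bound $n\bar\zeta_2/(n\epsilon_1)^2$. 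With this one-line correction the remainder of your argument goes through unchanged.
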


In our proof of Theorem \ref{TH4}, we choose 
the real numbers $R_4, \epsilon_2, \epsilon_1$.
We fix $s \in (0,1/2)$.
While we choose $R_4, \epsilon_2>0$ in the same way as our proof of Theorem \ref{TH3}, 
we choose $\epsilon_1>0$ to satisfy 
\begin{align}
\frac{ \epsilon_2}{4}\bar{\zeta}_{1,t}(\zeta_3 \frac{\epsilon_2}{4})> 2 \epsilon_1 .
\end{align}

In this choice, the RHS of \eqref{E41C} goes to zero.
 Since the conditions \eqref{E41B} and \eqref{E41C} take the same role as the conditions of Lemma \ref{LL12},
the proof of  Theorem \ref{TH3} works by replacing Lemma \ref{LL12} by Lemma \ref{LL12B}
as a proof of  Theorem \ref{TH4}.

\subsection{Proof of Lemma \ref{LL12B}}
\noindent {\bf Step 1}: 
Evaluation of $W^n_{x^n} ({\cal D}_{{x^n}'})$.

\noindent 
As shown in {\bf Step 3}, 
when $\bar{d}_H(x^n,{x^n}') = k$, for $t \in (0,\frac{1}{2})$,
we have
\begin{align}
\frac{-1}{t}\log 
\mathbb{E}_{x'}[2^{t(\xi_{x}(Y)-\xi_{x'}(Y))}]
\ge \frac{k}{2}\bar{\zeta}_{1,t}(\zeta_3 \frac{k}{2n})
\Label{XOL}.
\end{align}
Applying Markov inequality to the variable $2^{t(\xi_{x}(Y)-\xi_{x'}(Y))} $, we have
\begin{align}
& W^n_{{x^n}'} (\{ y^n| 
\xi_{x^n} (y^n)-\xi_{{x^n}'} (y^n) \ge-  2 n \epsilon_1\})\nonumber \\
=&
 W^n_{{x^n}'} (\{ y^n| 
2^{t(\xi_{x^n} (y^n)-\xi_{{x^n}'} (y^n))} \ge 2^{-2t n \epsilon_1}\})\nonumber \\
\le & \mathbb{E}_{x'}[2^{t(\xi_{x}(Y)-\xi_{x'}(Y))}] 2^{2t n \epsilon_1}
\le 2^{t (2 n \epsilon_1 -  \frac{k}{2}\bar{\zeta}_{1,t}(\zeta_3 \frac{k}{2n}))}.
\Label{XOL2}
\end{align}

The condition \eqref{CS1} implies that 
\begin{align}
\mathbb{E}_{{x^n}'} 
[\xi_{{x^n}'}]=0 .
\end{align}
The condition \eqref{CS3C} implies that 
\begin{align}
\mathbb{V}_{{x^n}'} [\xi_{{x^n}'}] 
& \le n \bar{\zeta}_2.
\end{align}
Hence, applying Chebyshev inequality to 
the variable $\xi_{{x^n}'} (Y^n)$, we have
\begin{align}
W^n_{{x^n}'} (\{ y^n| \xi_{{x^n}'} (y^n) \le -n \epsilon_1 \})
\le
\frac{n \bar{\zeta}_2}{[n \epsilon_1 ]_+^2  }\Label{CheC}.
\end{align}

Hence, we have
\begin{align}
&W^n_{{x^n}'} ({\cal D}_{{x^n}}) \nonumber \\
\le & W^n_{{x^n}'} (\{ y^n| 
- n \epsilon_1 \le \xi_{x^n} (y^n)  \})\nonumber \\
=& W^n_{{x^n}'} (\{ y^n| 
- n \epsilon_1 \le \xi_{x^n} (y^n)-\xi_{{x^n}'} (y^n)
+\xi_{{x^n}'} (y^n)  \})\nonumber \\
\le & W^n_{{x^n}'} (\{ y^n| - 2n \epsilon_1 \le
\xi_{x^n} (y^n)-\xi_{{x^n}'} (y^n) \}) \nonumber \\
&+ W^n_{{x^n}'} \left( \left\{  y^n \left| 
\begin{array}{ll}
- n \epsilon_1 \le \!\!\!\!&\xi_{x^n} (y^n)-\xi_{{x^n}'} (y^n) \\
&+\xi_{{x^n}'} (y^n) ,\\
- 2n \epsilon_1> \!\!\!\!& \xi_{x^n} (y^n)-\xi_{{x^n}'} (y^n) 
\end{array}
\right\} \right)\right.\nonumber \\
\stackrel{(a)}{\le} & 
W^n_{{x^n}'} (\{ y^n| - 2n \epsilon_1\le 
\xi_{x^n} (y^n)-\xi_{{x^n}'} (y^n)  \}) \nonumber \\
&+ W^n_{{x^n}'} (\{ y^n| 
\xi_{{x^n}'} (y^n) > n \epsilon_1 
\})\nonumber \\
\stackrel{(b)}{\le}&
2^{t (2 n \epsilon_1 -  \frac{k}{2}\bar{\zeta}_{1,t}(\zeta_3 \frac{k}{2n}))}+
\frac{n \bar{\zeta}_2}{[n \epsilon_1 ]_+^2  },\Label{CheM}
\end{align}
where
$(a)$ follows from the fact that
the conditions $- n \epsilon_1 \le
\xi_{x^n} (y^n)-\xi_{{x^n}'} (y^n)
+\xi_{{x^n}'} (y^n)$ and $
- 2n \epsilon_1> 
\xi_{x^n} (y^n)-\xi_{{x^n}'} (y^n) $
imply the condition $\xi_{{x^n}'} (y^n) > n \epsilon_1 $,
and
$(b)$ follows from \eqref{XOL2} and \eqref{CheC}.

\noindent {\bf Step 2}: Evaluation of smaller value of 
$W^n_{x^n} ({\cal D}_{\tilde{\phi}_n(m)}) $ and $W^n_{x^n} ({\cal D}_{\tilde{\phi}_n(m')}) $.
We simplify $d(x^n,\tilde{\phi}_n(m))$ and $d(x^n,\tilde{\phi}_n(m')) $ to $k_1$ and $k_2$.
Since Eq. \eqref{E41B} implies
\begin{align}
n \epsilon_2< d(\tilde{\phi}_n(m),\tilde{\phi}_n(m')) \le k_1+k_2,
\end{align}
we have
\begin{align}
\frac{n \epsilon_2}{2} \le k_3:= \max(k_1,k_2).
\Label{NXT}
\end{align}
Since $\bar{\zeta}_{1,t}(r) $ is monotonically increasing for $r$,
\eqref{NXT} yields
\begin{align}
&\min \Big[t \Big(2 n \epsilon_1 -  \frac{k_1}{2}\bar{\zeta}_{1,t}\Big(\zeta_3 \frac{k_1}{2n}\Big)\Big),\nonumber \\
&\qquad t \Big(2 n \epsilon_1 -  \frac{k_2}{2}\bar{\zeta}_{1,t}\Big(\zeta_3 \frac{k_2}{2n}\Big)\Big)\Big] \nonumber \\
\le &
t \Big(2 n \epsilon_1 - 
\max \Big[
 \frac{k_1}{2}\bar{\zeta}_{1,t}\Big(\zeta_3 \frac{k_1}{2n}\Big),
 \frac{k_2}{2}\bar{\zeta}_{1,t}\Big(\zeta_3 \frac{k_2}{2n}  \Big)
 \Big]\Big)
\nonumber \\
= &
t \Big(2 n \epsilon_1 - 
\frac{k_3}{2}\bar{\zeta}_{1,t}\Big(\zeta_3 \frac{k_3}{2n}\Big)
\Big)
\nonumber \\
\le & t \Big(2 n \epsilon_1 -  \frac{n \epsilon_2}{4}\bar{\zeta}_{1,t}\Big(\zeta_3 \frac{n \epsilon_2}{4n}\Big)\Big) 
= t n \Big(2 \epsilon_1 -  \frac{\epsilon_2}{4}\bar{\zeta}_{1,t}\Big(\zeta_3 \frac{\epsilon_2}{4}\Big)\Big).
\Label{XLP1}
\end{align}
Thus, 
\begin{align}
&\min[W^n_{x^n} ({\cal D}_{\tilde{\phi}_n(m)}) ,W^n_{x^n} ({\cal D}_{\tilde{\phi}_n(m')}) ] \nonumber \\
\stackrel{(a)}{\le}&
\min \big[2^{t (2 n \epsilon_1 -  \frac{k_1}{2}\bar{\zeta}_{1,t}(\zeta_3 \frac{k_1}{2n}))}
2^{t (2 n \epsilon_1 -  \frac{k_2}{2}\bar{\zeta}_{1,t}(\zeta_3 \frac{k_2}{2n}))}\big]\nonumber \\
&+\frac{n \bar{\zeta}_2}{[n \epsilon_1 ]_+^2  } \nonumber \\
= &
2^{
\min \big[t (2 n \epsilon_1 -  \frac{k_1}{2}\bar{\zeta}_{1,t}(\zeta_3 \frac{k_1}{2n})),
t (2 n \epsilon_1 -  \frac{k_2}{2}\bar{\zeta}_{1,t}(\zeta_3 \frac{k_2}{2n}))\big]}
+\frac{n \bar{\zeta}_2}{[n \epsilon_1 ]_+^2  } \nonumber \\
\stackrel{(b)}{\le}&
2^{t n (2 \epsilon_1 -  \frac{ \epsilon_2}{4}\bar{\zeta}_{1,t}(\zeta_3 \frac{\epsilon_2}{4}))
}
+\frac{n \bar{\zeta}_2}{[n \epsilon_1 ]_+^2  },\Label{NLP}
\end{align}
where $(a)$ follows \eqref{CheM}, and $(b)$ follows from \eqref{XLP1}.
Eq. \eqref{NLP} implies \eqref{E41C}, i.e., the desired statement of Lemma \ref{LL12B}.

\noindent {\bf Step 3}: Proof of \eqref{XOL}.
To show \eqref{XOL}, we consider the random variable  $J$ subject to 
the uniform distribution $P_{\uni, n}$ on $\{1, \ldots, n\}$.
The quantity $1-\bar{d}(x_J^n,{x_J^n}')$ can be considered as a non-negative random variable
whose expectation is $1-\frac{k}{n}$.
We apply the Markov inequality to the variable $1-\bar{d}(x_A^n,{x_A^n}')$.
Then, we have 
\begin{align}
& \Big|\Big\{j \in \{1, \ldots, n\} \Big|
\bar{d}(x_j^n,{x_j^n}') < \frac{k}{2n} \Big\}\Big|\nonumber \\
=&\Big|\Big\{j \in \{1, \ldots, n\} \Big|
1-\bar{d}(x_j^n,{x_j^n}') 
> 1-\frac{k}{2n} \Big\}\Big| \nonumber\\ 
\le & n \cdot \frac{1-\frac{k}{n}}{1-\frac{k}{2n} }
\le n \cdot \Big(1-\frac{k}{2n}\Big)=n-\frac{k}{2},
\end{align}
where the final inequality follows from the relation between 
arithmetic and geometric means. 
Hence, we have
\begin{align}
\Big|\Big\{j \in \{1, \ldots, n\} \Big|
\bar{d}(x_j^n,{x_j^n}') \ge \frac{k}{2n} \Big\}\Big|
\ge \frac{k}{2}.\Label{VSP}
\end{align}
Since $\bar{d}(x_j^n,{x_j^n}') \ge\frac{k}{2n}$ implies $
d(x_j^n,{x_j^n}') \ge \zeta_3 \frac{k}{2n}$,
\eqref{VSP} implies \eqref{XOL}.

\endproof

\section{Conclusion}
We have  proposed a new concept, secure list decoding, 
which imposes additional requirements on conventional list decoding
to work as a relaxation of bit-string commitment.
This scheme has three requirements.
Verifiable condition (A), Equivocation version of concealing condition (B), and 
Binding condition.
Verifiable condition (A) means that the message sent by Alice (sender) is contained in the list output by Bob (receiver).
Equivocation version of concealing condition (B)
is given as a relaxation of the concealing condition of bit-string commitment.
That is, it expresses Bob's uncertainty of Alice's message.
Binding condition has two versions.
One is the condition (C) for honest Alice.
The other is the condition (D) for dishonest Alice.
Since there is a possibility that dishonest Alice uses a different code, 
we need to guarantee the impossibility of cheating even for such a dishonest Alice.
In this paper, we have shown the existence of a code to satisfy these three conditions.
Also, we have defined the capacity region as the possible triplet of the rates of the message and the list,
and the equivocation rate,
and have derived the capacity region 
when the encoder is given as a deterministic map. 
Under this condition, we have shown that 
the conditions (C) and (D) have the same capacity region. 
However, we have not derived the capacity region when 
the stochastic encoder is allowed.
Therefore, the characterization of the capacity region of this case is an interesting open problem.

As the second contribution, we have formulated the secure list decoding with a general input system.
For this formulation, we have assumed that 
honest Alice accesses only a fixed subset of the general input system
and dishonest Alice can access any element of the general input system.
Then, we have shown that the capacity region of this setting is the same as 
the capacity region of the above setting when the encoder is limited to a deterministic map. 

As the third contribution, we have proposed a method to convert a code for secure list decoding to 
a protocol for bit-string commitment.
Then, we have shown that this protocol can achieve
the same rate of the message size as the equivocation rate of the original code for secure list decoding. 
This method works even when the input system is a general probability space and 
dishonest Alice can access any element of the input system.
Since many realistic noisy channels have continuous input and output systems,
this result extends the applicability of our method for bit-string commitment.
 
Since the constructed code in this paper is based on random coding,
it is needed to construct practical codes for secure list decoding.
Fortunately, the existing study \cite{Guruswami} systematically constructed several types of 
codes for list decoding with their algorithms.
While their code construction is practical,
in order to use their constructed code for secure list decoding and bit-string commitment,
we need to clarify their security parameters, i.e., 
the equivocation rate and the binding parameter $\delta_D$ for dishonest Alice 
in addition to the decoding error probability $\epsilon_A$.
It is a practical open problem to calculate these security parameters of their codes.


\section*{Acknowledgments}
The author is grateful to 
Dr. Vincent Tan, Dr. Anurag Anshu, Dr.  Seunghoan Song, and Dr. Naqueeb Warsi  
for helpful discussions and helpful comments.
In particular, Dr. Naqueeb Warsi informed me 
the application of Theorem 11.6.1 of \cite{COVER}.
The work reported here was supported in part by 
Guangdong Provincial Key Laboratory (Grant No. 2019B121203002),
Fund for the Promotion of Joint International Research
(Fostering Joint International Research) Grant No. 15KK0007,
the JSPS Grant-in-Aid for Scientific Research 
(A) No.17H01280, (B) No. 16KT0017, 
and Kayamori Foundation of Informational Science Advancement K27-XX-46.

\appendices

\section{Proof of Lemma \ref{LL2}}\Label{A-LL2}
\noindent{\bf Step 1:} Preparation.

\noindent We define the functions
\begin{align}
\bar{\gamma}_1(R_1):=& 
\min_{P \in {\cal P}({\cal U}\times {\cal X})}
\{  I(X;Y|U)_P
| H(X|U)_P=R_1 \} \\
\bar{\gamma}_{1,o}(R_1):=& 
\min_{P \in {\cal P}({\cal X})}
\{  I(X;Y)_P
| H(X)_P=R_1 \} \\
\bar{\gamma}_\alpha(R_1):=& 
\min_{P \in {\cal P}({\cal U}\times {\cal X})}
\{  I_\alpha(X;Y|U)_P
| H(X|U)_P=R_1 \} \\
\kappa_1(R_1):= &\max \{ R_3| (R_1,R_3) \in \overline{{\cal C}^{1,3}}\}
\\
\kappa_1^s(R_1) := &\max \{ R_3| (R_1,R_3) \in \overline{{\cal C}^{s,1,3}}\}\\
\kappa_\alpha(R_1) := &\max \{ R_3| (R_1,R_3) \in \overline{{\cal C}_\alpha^{1,3}}\}.
\end{align}
Then, it is sufficient to show the following relations;
\begin{align}
\kappa_1(R_1)&=R_1-\bar{\gamma}_1(R_1) 
=\gamma_1(R_1) \Label{CO1}\\
\kappa_1^s(R_1) &= \max_{R \le R_1}\gamma_1(R) \Label{CO2}\\
\kappa_\alpha(R_1) &=R_1-\bar{\gamma}_\alpha(R_1)= \gamma_\alpha(R_1).\Label{CO3}
\end{align}
Since the second equations in \eqref{CO1} and \eqref{CO3}
follows from the definitions,
it is sufficient to show the first equations in \eqref{CO1} and \eqref{CO3}.
From the definitions, we have
\begin{align}
&\overline{{\cal C}^{1,3}} \nonumber \\
=& 
\bigcup_{P\in {\cal P}( {\cal U}\times {\cal X})} 
\left\{
 (R_1,R_3) \!\left| \!
\begin{array}{l}
0\le R_3 \le R_1- I(X;Y|U)_P, \!\!\!\\
0\le R_1 \le H(X|U)_P 
\end{array}
\right. \right\}  \Label{XZ1} \\
&\overline{{\cal C}^{s,1,3}} \nonumber \\
=&
\bigcup_{P\in {\cal P}({\cal U}\times {\cal X})} 
\left\{
 (R_1,R_3) \left| 
\begin{array}{l}
0 \le  R_3 \le H(X|YU)_P, \\
0\le R_1 \le H(X|U)_P 
\end{array}
\right. \right\} \Label{XZ2} \\
&\overline{{\cal C}_{\alpha}^{1,3}} \nonumber \\
=&
\bigcup_{P\in {\cal P}({\cal U}\times {\cal X})} \!
\left\{
 (R_1,R_3) \!\left| \!
\begin{array}{l}
0 \le R_3 \le R_1- I_\alpha(X;Y|U)_P, \!\!\!\\
0 \le R_1 \le H(X|U)_P 
\end{array}
\right. \right\} .\Label{XZ3}
\end{align}
Hence, \eqref{XZ2} implies \eqref{CO2}.
To show \eqref{CO1} and \eqref{CO3}, 
we derive the following relations from \eqref{XZ1} and \eqref{XZ3}.
\begin{align}
\kappa_1(R_1)&=\max_{R \le R_1}R_1-\bar{\gamma}_1(R) \Label{CD1}\\
\kappa_\alpha(R_1) &= \max_{R \le R_1}R_1-\bar{\gamma}_\alpha(R).\Label{CD3}
\end{align}

\noindent{\bf Step 2:} Proof of \eqref{CO1}.

\noindent 
Given $R>0$, we choose 
$P(R):= \argmin_{P \in {\cal P}({\cal X})}
\{  I(X;Y)_P
| H(X)_P=R \} $.
We have $  I(X;Y)_{P(R)}=
D(\bW \times P(R)\| W_{P(R)} \times P(R))
=\sum_{x \in {\cal X}}P(R)(x) D(W_x\| W_{P(R)}) $.
As shown later, when $P(R)(x')> P(R)(x)$, we have
\begin{align}
 D(W_{x'}\| W_{P(R)}) \le D(W_x\| W_{P(R)}).
\Label{ACO}
 \end{align}
We choose $x_1$ and $x_d$ such that
$ D(W_{x_1}\| W_{P(R)}) \le  D(W_{x}\| W_{P(R)}) \le 
 D(W_{x_d}\| W_{P(R)}) $ or $x \in {\cal X}$.
Given $\epsilon>0$, we define the distribution $P(R)_\epsilon$
as
\begin{align}
P(R)_\epsilon(x_1):=& P(R)(x_1)+\epsilon,\\
P(R)_\epsilon(x_d):= &P(R)(x_d)-\epsilon,~
P(R)_\epsilon(x):= P(R)(x)
\end{align}
for $x \neq (x_1,x_d) \in {\cal X}$.
We have $H(P(R)_\epsilon)< H(P(R))=R$.
In particular, 
when $R_o< R$ is sufficiently close to $R$, 
there exists $\epsilon>0$ such that $H(P(R)_\epsilon)=R_0$.
Then,
\begin{align}
&\bar{\gamma}_{1,o}(  R_o) 
=\bar{\gamma}_{1,o}(  H(P(R)_\epsilon)) 
\le  I(X;Y)_{P(R)_\epsilon} \nonumber \\
=&\min_Q
D(\bW \times P(R)_\epsilon\| Q \times P(R)) \nonumber \\
\le &
D(\bW \times P(R)_\epsilon\| W_{P(R)} \times P(R)) \nonumber \\
\le &
D(\bW \times P(R)\| W_{P(R)} \times P(R))\nonumber \\
=& I(X;Y)_{P(R)}
=\bar{\gamma}_{1,o}(  R) .\Label{CM1}
\end{align}
Then, we find that $\bar{\gamma}_{1,o}(  R)$ is monotonically increasing for $R$.

Also, we have
\begin{align}
&\bar{\gamma}_{1}(  R)\nonumber \\
=&\min_{\lambda\in [0,1], R_1,R_2 \in [0, \log d] }
\{\lambda \bar{\gamma}_{1,o}(  R_1)
+(1-\lambda) \bar{\gamma}_{1,o}(  R_2)| (*)\}.\Label{NCP}
\end{align}
where the condition $(*)$ is given as 
$\lambda R_1+ (1-\lambda) R_2=R$.
Since $\bar{\gamma}_{1,o}(  R)$ is monotonically increasing for $R$,
\eqref{NCP} guarantees that
$\bar{\gamma}_{1}(  R)$ is also monotonically increasing for $R$.
Hence, \eqref{CD1} yields \eqref{CO1}, respectively. 

\noindent{\bf Step 3:} Proof of \eqref{ACO}.

\noindent 
Assume that there exist $x\neq x' \in {\cal X}$ such that $P(R)(x')> P(R)(x)$ and the condition \eqref{ACO} does not hold.
We define the distribution $\bar{P}(R)$ as follows.
\begin{align}
\bar{P}(R)(x)&:={P}(R)(x'),~
\bar{P}(R)(x'):={P}(R)(x),\\
\bar{P}(R)(x_o)&:={P}(R)(x_o)
\end{align}
for $x_o(\neq x,x')\in {\cal X} $.
Then, 
\begin{align*}
& I(X;Y)_{\bar{P}(R)}
= 
\min_Q D(\bW \times \bar{P}(R)\|Q \times \bar{P}(R)) \\
\le &  
D(\bW \times \bar{P}(R)\|W_{P(R)} \times \bar{P}(R)) 
\nonumber \\
\le &
D(\bW \times {P}(R)\|W_{P(R)} \times {P}(R)) 
=
I(X;Y)_{{P}(R)},
 \end{align*}
which implies \eqref{ACO}.

\noindent{\bf Step 4:} Proof of \eqref{ACO}.

\noindent 
Instead of $\bar{\gamma}_\alpha(R_1)$
and $\bar{\gamma}_{\alpha,o}(R_1)$, we define
\begin{align}
\bar{\gamma}^p_\alpha(R_1):=& 
\min_{P \in {\cal P}({\cal U}\times {\cal X})}
\{  2^{(\alpha-1)I_\alpha(X;Y|U)_P}
| H(X|U)_P\!=\! R_1\! \} \\
\bar{\gamma}^p_{\alpha,o}(R_1):=& 
\min_{P \in {\cal P}( {\cal X})}
\{  2^{(\alpha-1)I_\alpha(X;Y)_P}
| H(X)_P=R_1 \}  .
\end{align}
Given $R>0$, we choose 
$P_\alpha(R):= \argmin_{P \in {\cal P}({\cal X})}
\{  I_\alpha(X;Y)_{P_\alpha(R)}
| H(X)_P=R \} $.
We choose 
$Q_\alpha(R):= \argmin_{Q \in {\cal P}({\cal Y})}
D_\alpha(\bW \times P_\alpha(R)\| Q \times P_\alpha(R))$.
We have
\begin{align}
 2^{(\alpha-1)I_\alpha(X;Y)_P}
=\sum_{x \in {\cal X}}P(R)(x)
 2^{(\alpha-1)D_\alpha( W_x \| Q_\alpha(R) )}.
\end{align}
In the same way as \eqref{ACO},
when $P_\alpha(R)(x')> P_\alpha(R)(x)$, we have
\begin{align}
 D_\alpha(W_{x'}\| W_{P(R)}) \le D_\alpha(W_x\| W_{P(R)}).
\Label{ACO2}
\end{align}
In the same way as the case with $\bar{\gamma}_{1,o}$, we can show that
$\bar{\gamma}^p_{\alpha,o}(  R)$ is monotonically increasing for $R$.
Hence, 
in the same way as the case with $\bar{\gamma}_{1}$, we can show that
$\bar{\gamma}^p_{\alpha}(  R)$ is monotonically increasing for $R$.
Therefore,
$\bar{\gamma}_{\alpha}(  R)$ is monotonically increasing for $R$.
Hence, \eqref{CD3} yields \eqref{CO3}. 

\section{Proof of Lemma \ref{LL3}}\Label{A-LL3}
The first statement follows from \eqref{BNI}.
The second statement can be shown as follows.
Assume that $\gamma_{\alpha,o}$ is a concave function.
We choose 
\begin{align}
P=
\argmax_{P \in {\cal P}({\cal U}\times {\cal X})}
\{ R_1- I_\alpha(X;Y|U)_P
| H(X|U)_P=R_1 \}.
\end{align}
Then, we have
\begin{align*}
&\gamma_{\alpha}(R_1)\\
=&R_1- I_\alpha(X;Y|U)_P
\stackrel{(a)}{\le} R_1-\sum_{u \in {\cal U}}P_U(u) I_\alpha(X;Y)_{P_{X|U=u}}\\
=& \sum_{u \in {\cal U}}P_U(u)
(H(X)_{P_{X|U=u}}- I_\alpha(X;Y)_{P_{X|U=u}})\\
\stackrel{(b)}{\le}& \sum_{u \in {\cal U}}P_U(u)
\gamma_{\alpha,o}(H(X)_{P_{X|U=u}})
\stackrel{(c)}{\le} \gamma_{\alpha,o}(R_1),
\end{align*}
where 
$(a)$ follows from the concavity of  $x \mapsto - \log x$ 
and the relation $$2^{(\alpha-1)I_\alpha(X;Y|U)_P}=
\sum_{u \in {\cal U}}P_U(u)
 2^{(\alpha-1)I_\alpha(X;Y)_{P_{X|U=u}}}, $$
$(b)$ follows from the definition of $\gamma_{\alpha,o}$, and
$(c)$ follows from the assumption that 
$\gamma_{\alpha,o}$ is a concave function.
Hence, we have $\gamma_{\alpha}(R_1)=\gamma_{\alpha,o}(R_1)$.

\section{Lemma \ref{L4}}\Label{A-L4}
Since we have the Markovian chain
$ Y^{j-1}-(X^{j-1},X_{j+1}, \ldots, X_n)- X_j -Y_j $,
the relation
\begin{align}
I( X^n;Y_j |Y^{j-1})
=( X_{j};Y_j |Y^{j-1})\Label{LLP3-A}
\end{align}
holds.
Hence,
\begin{align}
& I(X^n;Y^n) 
= \sum_{j=1}^n I( X^n;Y_j |Y^{j-1})\nonumber \\
= &\sum_{j=1}^n I( X_{j};Y_j |X^{j-1}),\Label{LLP3-B}
\end{align}
which implies \eqref{LLP3}.
Since we have the Markovian chain
$X_j-X^{j-1}-Y^{j-1} $, 
we have
\begin{align}
&H( X_{j}|Y^{j-1})-
H( X_{j}|X^{j-1})\nonumber \\
=&
H( X_{j}|Y^{j-1})-
H( X_{j}|X^{j-1} Y^{j-1})\nonumber \\
=&
I( X_{j}; X^{j-1}|Y^{j-1})\ge 0.
\end{align}
Thus,
\begin{align}
H(X^n) & = \sum_{j=1}^n H( X_{j}|X^{j-1})
\le \sum_{j=1}^n H( X_{j}|Y^{j-1}),
\Label{LLP2-A}
\end{align}
which implies \eqref{LLP2}.

\section{Proof of Lemma \ref{L3d}}\Label{AP66}
When $s$ is sufficiently large and $\delta>0$ is small, we have
\begin{align}
&G_{P,P}(s)-s (1-\delta)
\nonumber \\
=& \bigg(\sum_{x\in{\cal X}}P(x) \log (2^s P(x)+1-P(x))\bigg) -s (1-\delta)\nonumber \\
=& \bigg(\sum_{x\in{\cal X}}P(x) \Big(s+ \log P(x)+ \log (1+\frac{1-P(x)}{2^s P(x)}) \Big)
\bigg)\nonumber \\
& -s (1-\delta)\nonumber \\
=& \bigg(\sum_{x\in{\cal X}}P(x) \Big(s+ \log P(x)\nonumber \\
&+ \log_e (2)^{-1} \log_e (1+\frac{1-P(x)}{2^s P(x)}) \Big)\bigg) -s (1-\delta)\nonumber \\
\cong& \bigg(\sum_{x\in{\cal X}}P(x) \Big(s+ \log P(x)
+ \frac{1-P(x)}{\log_e (2) 2^s P(x)} \Big)
\bigg) \nonumber \\
&-s (1-\delta)\nonumber \\
= & s- H(P)+ \bigg(\sum_{x\in{\cal X}}\frac{1-P(x)}{2^s \log_e (2)} \bigg) -s (1-\delta)\nonumber \\
= & -H(P)+ \bigg(\frac{|{\cal X}|-1}{2^s \log_e (2)} \bigg) +s \delta.
\end{align}
Under the above approximation, 
the minimum with respect to $s$ is realized when 
$2^s= \frac{|{\cal X}|-1}{\delta}$.
Hence, the minimum is approximated to 
$-H(P)+\delta \log (\frac{e(|{\cal X}|-1) }{\delta})$.
This value goes to $-H(P)$ when $\delta $ goes to $+0$.
Hence, we have \eqref{CY2}. 

Also, we have
\begin{align}
&G_{P,P'}(s)-s r\nonumber \\
=& \bigg(\sum_{x\in{\cal X}}P'(x) \log (2^s P(x)+1-P(x))\bigg) -s r\nonumber \\
=& \bigg(\sum_{x\in{\cal X}}P'(x) \Big(s+ \log P(x)+ \log (1+\frac{1-P(x)}{2^s P(x)}) \Big)
\bigg)\nonumber \\
& -s r\nonumber \\
=& \bigg(\sum_{x\in{\cal X}}P'(x) \Big(s+ \log P(x)\nonumber \\
&+ \log_e (2)^{-1} \log_e (1+\frac{1-P(x)}{2^s P(x)}) \Big)\bigg) -s r.
\end{align}
For each $x$,
the $\Big(s+ \log P(x)+ \log_e (2)^{-1} \log_e (1+\frac{1-P(x)}{2^s P(x)}) \Big)$
is bounded even when $s$ goes to infinity.
Hence, 
we have 
\begin{align}
\lim_{P'\to P} \max_{s>0}G_{P,P'}(s)-s r
= \max_{s>0}G_{P,P}(s)-s r,
\end{align}
which implies \eqref{CY1}. 
\hspace*{\fill}~\QED\par\endtrivlist\unskip

\bibliographystyle{IEEE}

\begin{IEEEbiographynophoto}
{Masahito Hayashi}(Fellow, IEEE) was born in Japan in 1971.
He received the B.S.\ degree from the Faculty of Sciences in Kyoto
University, Japan, in 1994 and the M.S.\ and Ph.D.\ degrees in Mathematics from
Kyoto University, Japan, in 1996 and 1999, respectively. He worked in Kyoto University as a Research Fellow of the Japan Society of the
Promotion of Science (JSPS) from 1998 to 2000,
and worked in the Laboratory for Mathematical Neuroscience,
Brain Science Institute, RIKEN from 2000 to 2003,
and worked in ERATO Quantum Computation and Information Project,
Japan Science and Technology Agency (JST) as the Research Head from 2000 to 2006.
He worked in the Graduate School of Information Sciences, Tohoku University as Associate Professor from 2007 to 2012.
In 2012, he joined the Graduate School of Mathematics, Nagoya University as Professor.
In 2020, he joined Shenzhen Institute for Quantum Science and Engineering, Southern University of Science and Technology, Shenzhen, China
as Chief Research Scientist.

In 2011, he received Information Theory Society Paper Award (2011) for ``Information-Spectrum Approach to Second-Order Coding Rate in Channel Coding''.
In 2016, he received the Japan Academy Medal from the Japan Academy
and the JSPS Prize from Japan Society for the Promotion of Science.
In 2006, he published the book ``Quantum Information: An Introduction''  from Springer, whose revised version was published as ``Quantum Information Theory: Mathematical Foundation'' from Graduate Texts in Physics, Springer in 2016.
In 2016, he published other two books ``Group Representation for Quantum Theory'' and ``A Group Theoretic Approach to Quantum Information'' from Springer.
He is on the Editorial Board of {\it International Journal of Quantum Information}
and {\it International Journal On Advances in Security}.
His research interests include classical and quantum information theory and classical and quantum statistical inference.
\end{IEEEbiographynophoto}
\end{document}